\tikzset{
  node/.style={
    circle,draw,thick,
    inner sep=0pt,
    minimum size=1.5em,
    node distance=1ex and 2em,
  },
  othernode/.style={
    circle,draw,thick,
    inner sep=0pt,
    minimum size=1.0em,
    node distance=1ex and 2em,
  },
  conn/.style={
    -{Straight Barb[angle=60:2pt 3]},
    thick,
  },
}
\newtheorem{example}{Example}
\newtheorem{theorem}{Theorem}
\newtheorem{lemma}{Lemma}
\newtheorem{definition}{Definition}
\newtheorem{axiom}{Axiom}
\newcommand{\cmark}{\ding{51}}%
\newcommand{\xmark}{\ding{55}}%
\title{Axiomatic Analysis of Medial Centrality Measures}
\author{Wiktoria Ko{\'s}ny\thanks{wiktoria.kosny@gmail.com} }
\author{Oskar Skibski\thanks{oskar.skibski@mimuw.edu.pl} }
\affil{University of Warsaw, Poland}
\begin{document}

\maketitle 

\begin{abstract}
We perform the first axiomatic analysis of medial centrality measures. These measures, also called betweenness-like centralities, assess the role of a node in connecting others in the network. We focus on a setting with one target node and several source nodes. We consider three classic medial centrality measures adapted to this setting: Betweenness Centrality, Stress Centrality and Random Walk Betweenness Centrality. While Betweenness and Stress Centralities assume that the information in the network follows shortest paths, Random Walk Betweenness Centrality assumes it moves randomly along the edges. We develop the first axiomatic characterizations of all three measures. Our analysis shows that Random Walk Betweenness, while conceptually different, shares several common properties with classic Betweenness and Stress Centralities.
\end{abstract}

\section{Introduction}
Identifying key elements in complex interconnected systems is one of the fundamental challenges of network science.
More than a hundred such methods, called \emph{centrality measures}, have been proposed \cite{Brandes:Erlebach:2005,Jackson:2008}.
In this paper, we focus on \emph{medial centralities}~\cite{Borgatti:Everett:2006} which, along with the (radial) distance-based centralities and feedback centralities, constitute one of the major classes of centrality measures.
They have been used in many applications, including the analysis of protein interaction networks~\cite{Joy:etal:2005}, identifying gatekeepers in the social or covert networks (i.e., nodes with the ability to control information flow) \cite{Coffman:etal:2004} or assessing monitoring capabilities in computer networks~\cite{Dolev:etal:2010}.

Medial centralities, also called \emph{betweenness-like centralities}, assess a node by the role it plays as an intermediary in a network.
Their canonical example is Betweenness Centrality---arguably, one of the three most important centrality measures.
Betweenness Centrality and other classic medial centralities first assess the role in connecting two nodes, a source and a target, and then sum these values over all pairs of nodes.
In the case of Betweenness and Stress Centralities the number of shortest paths going through a node is counted.
In turn, Random Walk Betweenness Centrality focuses on the visits of the random walk.

The choice of a suitable centrality measure for a specific goal out of multiple similar concepts is often hard.
On one hand, real-world situations are often hard to translate to graph notions.
In particular, in the citation network it is not clear whether we should focus on the shortest paths or the random walk.
On the other hand, intuitive interpretations of centralities are often misleading.
To give an example, Betweenness Centrality is often highly correlated with the degree, as high degree imposes a high number of shortest paths a node is on.
Resulting from that, centrality measures are often chosen based on their popularity rather than on their fit for the application.
This misleads the analysis and leads to poorly funded conclusions.

That is why, in recent years, the axiomatic approach has gained popularity in centrality analysis \cite{Boldi:Vigna:2014,Bloch:etal:2019,Skibski:etal:2019:attachment}.
In this approach, simple properties called \emph{axioms} are defined that highlight specific behaviors of a measure.
A carefully designed set of axioms allows one to uniquely characterize a measure.
Such results deepen the understanding of centrality measures and highlight the differences and similarities between them.
Also, if axioms are natural and based on simple graph operations, they allow a practitioner to test whether a specific property is desirable in the application at hand.

There are several papers that apply the axiomatic approach to distance-based centralities \cite{Skibski:2023:closeness,Garg:2009} and even more that concern feedback centralities \cite{Dequiedt:Zenou:2017,Was:Skibski:2021:feedback} (see Related Work for details).
Medial centralities, however, have not been studied using the axiomatic approach.
In particular, no axiomatization of Betweenness Centrality and its variants has been proposed to date.

The main reason for the lack of such results is the complex nature of betweenness-like measures. 
In fact, it is especially hard to find any graph operations that do not change these centralities and such operations are usually the basis of axiomatic characterizations.
Adding even a single edge in a graph may completely change the structure of the set of shortest paths and, as a consequence, values of medial centralities based on shortest paths.
As a result, out of dozens of axioms proposed in the literature, Betweenness Centrality satisfies only few simplest.

To cope with this challenge, in this paper we focus on a setting with one target node and arbitrary many source nodes.
In this way, we concentrate on the key aspect of medial centralities which allows us to better identify similarities and differences between them.
At the same time, the setting is simpler to analyze which allows us to obtain strong axiomatic results which was not possible for the general model so far.

Moreover, centrality analysis in  our setting with one target node has several natural applications.
In the World Wide Web, it can indicate the role of websites in directing users to a specific page.
In the financial network, centralities can identify top intermediaries responsible for transferring money to a specific bank account.
In the communication networks, they can assess the role in controlling the flow of information going to one specific entity.

We consider three classic medial measures adapted to this setting: Betweenness, Stress and Random Walk Betweenness and create the first axiom system that enables us to characterize all of them.

To this end, first, we ask the question: what are the properties satisfied by all three centralities? 
We identify four such properties.
\emph{Locality} states that the importance of a node does not depend on separate parts of the network. 
\emph{Additivity} imposes that the centrality is additive in respect to node weights.
\emph{Node Redirect} states that merging out-twins does not affect centralities of other nodes.
Finally, \emph{Target Proxy} says that if all paths to the target goes through one specific node, then it can be considered a target. 
While Locality and Additivity can be considered general axioms that should be satisfied by all reasonable centrality measures, Node Redirect and Target Proxy capture similarities between considered measures (in particular, they are not satisfied by measures based on flow).

Furthermore, we propose two axioms that are specific for centralities based on shortest-paths.
\emph{Symmetry} concerns the scenario where there is only one source node. The axiom states that in such a case, the importance would not change if the graph is reversed and the source replaced with the target. 
Now, \emph{Direct Link Domination} states that if a node has a link to the target, its other edges may be deleted.
Interestingly, these two axioms are not satisfied by Random Walk Betweenness.

Finally, we propose two borderline axioms that specify the centrality for a trivial graph with only two nodes, a source and a target, and $k$ edges from the source to the target.
Specifically, Atom $1$-$1$ states the centrality of both nodes equal one and Atom $k$-$k$ states they are equal $k$. 
There are two goals of these axioms.
First of all, they highlight the difference in the way medial centralities treat multiple edges between nodes. 
Betweenness Centrality and Random Walk Betweenness ignore them and assign value $1$ to both nodes. 
In turn, Stress Centrality assigns value $k$.
Second of all, they serve as a boundary case to pinpoint specific values of a measure and preclude other measures that only differ by scalar multiplication.

In our main result, we show that Betweenness Centrality is uniquely characterized by four common axioms (Locality, Additivity, Node Redirect, Target Proxy), two shortest-paths specific axioms (Symmetry, Direct Link Domination) and Atom $1$-$1$.
Specifically, if a centrality measures satisfies all seven axioms, then it must be equal to Betweenness Centrality.
On top of that, we show that if we replace Atom $1$-$1$ with Atom $k$-$k$, then we obtain an axiomatization of Stress Centrality. 
Most of these axioms are new, only two out of eight (Locality and Node Redirect) are known axioms adapted to our setting.

Furthermore, we show how this axiomatization can be extended for Random Walk Betweenness Centrality.
To this end, we use five previous axioms (Locality, Additivity, Node Redirect, Target Proxy and Atom $1$-$1$) and add two axioms from a recent axiomatization of PageRank~\cite{Was:Skibski:2020:pagerank}: \emph{Edge Swap} and \emph{Edge Multiplication}.
This is possible, as Random Walk Betweenness Centrality can be considered a border case of PageRank of a modified graph without the damping factor (see Preliminaries for details).

As a result, we obtain a joint axiomatic characterization of three classic medial centralities that highlights their differences as well as similarities.
See Table~\ref{tab:sat} for a summary.

\begin{table}[t]
\centering{
 \begin{tabular}{ p{10.5em} c | c  c} 
 \textbf{Axiom} & \ \ \textbf{$B^t$}\ \  & \ \ \textbf{$S^t$}\ \  & \textbf{$RWB^t$} \\
 \midrule
    Locality & 
    \cmark & 
    \cmark & 
    \cmark \\ 
    Additivity &
    \cmark & 
    \cmark & 
    \cmark \\ 
    Node Redirect &
    \cmark & 
    \cmark & 
    \cmark \\ 
    Target Proxy &
    \cmark &  
    \cmark & 
    \cmark \\
 \midrule
    Symmetry &
    \cmark & 
    \cmark & 
    \xmark \\
    Direct Link Domination &
    \cmark & 
    \cmark & 
    \xmark \\
 \midrule
    Atom $1$-$1$ &
    \cmark & 
    \xmark & 
    \cmark \\ 
    Atom $k$-$k$ &
    \xmark & 
    \cmark & 
    \xmark \\ 
 \midrule
    Edge Swap &
    \xmark & 
    \xmark & 
    \cmark \\
    Edge Multiplication &
    \xmark & 
    \xmark & 
    \cmark \\ 
 \end{tabular}
}
\caption{Axiomatic characterizations of medial centralities.}
\label{tab:sat}
\end{table}

\subsection{Related Work}

The axiomatic approach to centrality measure was initiated by the work of \citet{Sabidussi:1966} and \citet{Nieminen:1973} who characterized which functions are centrality measures. 
The proposed axioms were inspired mostly by distance-based centralities that is why medial centralities violates most of them.

Later on, \citet{Boldi:Vigna:2014} proposed three axioms and checked that out of the standard centrality measures only Harmonic Centrality satisfies all three of them. Betweenness Centrality, the only medial centrality considered, violates all three axioms (it satisfies one axiom only under additional assumptions).

A more popular way to apply the axiomatic approach to centrality analysis is creating an axiomatic characterization of a measure or the whole class of measures. 
Our work contributes to this line of research.
A few papers have considered distance-based centralities~\cite{Skibski:2023:closeness,Garg:2009,Skibski:Sosnowska:2018:distance} and game-theoretic centralities~\cite{Skibski:etal:2018:gtc,Skibski:etal:2019:attachment}.
Most papers, however, focused on feedback centralities: measures in which the importance of a node is defined based on the number and the importance of its neighbors (or direct predecessors in directed graphs).
Specifically, \citet{Brink:Gilles:2000} axiomatized $\beta$-measure, \citet{Altman:Tennenholtz:2005} the Seeley index (a simplified version of PageRank) and \citet{Kitti:2016} eigenvector centrality. 
More recently, \citet{Dequiedt:Zenou:2017} and \citet{Was:Skibski:2018:eigenvector} created a joint axiomatization of eigenvector and Katz centrality for undirected and directed graphs, respectively. 
Also, \citet{Was:Skibski:2020:pagerank} created an axiomatization of PageRank.

Since centrality measures from different classes vary significantly, Betweenness and Stress Centralities satisfy only a few of the simplest axioms from these papers (e.g., they satisfy Anonymity that states isomorphic nodes have equal centralities).
However, when restricted to one target node, they both satisfy Node Redirect~\cite{Was:Skibski:2020:pagerank} which is a meaningful axiom; hence, we use it in our paper.

In turn, Random Walk Betweenness satisfies several axioms proposed in the axiomatization of PageRank~\cite{Was:Skibski:2020:pagerank}.
That is why our last result is related to this axiomatization.
We use 3 axioms proposed in \cite{Was:Skibski:2020:pagerank} and show they can be combined with our other axioms to obtain the axiomatization of Random Walk Betweenness.
On a high level, our proof has a similar structure to the proof of the PageRank axiomatization.
However, since most of the axioms are new and we consider a different setting the main part of the proof is also new.

\section{Preliminaries}
In our work, we consider weighted directed multigraphs with possible self-loops. 
This model generalizes to unweighted and undirected graphs and could be used to represent the World Wide Web, social or financial networks, among others.

A graph is a pair $G = (V, E)$, where $V$ is the set of nodes and $E$ is the multiset of edges, that is, ordered pairs of nodes $(v, u) \in V \times V$. We will denote the number of occurrences of an element $e$ in the multiset $E$ by $m_e(E)$. An edge $(v, u)$ is \emph{outgoing} from the node $v$, which is the \emph{start} of the edge and is \emph{incoming} to the node $u$, which is the \emph{end} of the edge. If $v = u$, the edge is a \emph{self-loop}.

Let $\Gamma^+_v((V, E)) = \{(v, u) \in E\}$ denote the multiset of edges outgoing from the node $v$ (including self-loops) and let its cardinality be the \emph{out-degree} of the node $v$. Let $N^+_v((V, E)) = \{u \in V: (v, u) \in E\}$ denote the set of \emph{direct successors} of $v$. Similarly, let $\Gamma^-_v((V, E)) = \{(u, v) \in E\}$ denote the multiset of edges incoming to the node $v$ (including self-loops) and let its cardinality be the \emph{in-degree} of the node $v$. Let $N^-_v((V, E)) = \{u \in V: (u, v) \in E\}$ denote the set of \emph{direct predecessors} of $v$. Also, let $\Gamma_v(G) = \Gamma^+_v(G) \cup \Gamma^-_v(G)$. 

Two nodes $v$ and $u$ are \emph{out-twins} if they have the same outgoing edges, that is for every $w \in V$ there holds $m_{(v, w)}(E) = m_{(u, w)}(E)$. 

A \emph{path} $p = (e_1, e_2, ..., e_k)$ of \emph{length} $k$, is a sequence of edges of the graph (i.e.,  $e_1,\dots,e_k \in E$) in which every edge starts with a node with which the previous edge ended, that is for every $i \in \{1, ..., k - 1\}$ there are some nodes $v, u, w \in V$ such that $e_i = (v, u)$, $e_{i+1} = (u, w)$. The start of the first edge is the \emph{start of the path} and the end of the last edge is \emph{the end of the path}. A cycle is a path that starts and ends in the same node.

A node $u$ is \emph{reachable} from a node $v$ if there is a path that starts with $v$ and ends with $u$. The \emph{distance} from $v$ to $u$, denoted by $dist_{v, u}(G)$, is the length of the shortest path that starts with $v$ and ends with $u$. The number of shortest paths from $v$ to $u$ in $G$ is denoted by $\sigma_{v,u}(G)$ and the number of shortest paths from $v$ to $u$ in $G$ that contain $w$ is denoted by $\sigma_{v,u}(G, w)$.

\subsection{Our Setting}
In this paper, we treat graphs as the information networks. 
We will assume there is one \emph{target node}, denoted by $t$, which is the destination of all data traveling through the network. 
As mentioned in the introduction, other interpretations include users travelling towards some page in the Web or money transferred to a specific bank account.

We will restrict our attention to graphs in which the target node $t$ is reachable from every node; the set of all such graphs will be denoted by $\mathcal{G}_t$.

To specify which nodes are the sources of information and how much information they send, we will consider node weight functions $b: V \rightarrow \mathds{R}_{\geq 0}$.
The simplest case is when there is a single source: $s \in V$.
To describe such situations, we denote by $\mathds{1}^s$ a node weight function such that $(\mathds{1}^s)(s) = 1$ and $(\mathds{1}^s)(v) = 0$ for $v \in V - \{s\}$.
The multiplication of a weight function $b$ by a constant $x \in \mathds{R}_{\ge 0}$ is defined as $(x \!\cdot\! b)(v) = x \cdot b(v)$ for $v \!\in\! V$.
The addition of two weight functions $b: V \rightarrow \mathds{R}_{\geq 0}$, $b': V' \rightarrow \mathds{R}_{\geq 0}$ is defined as $(b + b')(v) = b(v) + b'(v)$ for every $v \in V \cap V'$, $(b + b')(v) = b(v)$ for every $v \in V - V'$ and $(b + b')(v) = b'(v)$ otherwise.

Let us define several operations on (node-)weighted graphs. The \emph{sum} of two graphs is obtained by summing the corresponding node sets, edge multisets and weight functions.
Formally, for two weighted graphs $(G,b)$, $(G',b')$ with $G = (V,E)$, $G'=(V',E')$ we have:
\[ (G,b) + (G',b') = (V \cup V', E + E', b + b'),\]
where $E+E'$ denotes the sum of multisets $E,E'$.

We will also use a shorthand notation for adding and deleting edges $E'$ from the graph $G = (V, E)$: we define $G + E' = (V, E + E')$ and $G - E' = (V, E - E')$.

For two different nodes $v,u \in V$ we define \emph{merging} and \emph{redirecting}.
\emph{Merging} $v$ into $u$ deletes $v$ from the graph and moves its weight and all outgoing and incoming incident edges to $u$. Formally:
$M_{v \rightarrow u}(G,b) = ((V - \{v\}, E'), b')$, where $E' = E - \Gamma_v(G) + \{ (f_{v \rightarrow u}(w),f_{v \rightarrow u}(w')) : (w,w') \in \Gamma_v(G)\}$ where $f_{v \rightarrow u}(v) = u$ and $f_{v \rightarrow u}(w) = w$ for $w \in V - \{v\}$ and $b'(u) = b(v)+b(u)$ and $b'(w) = b(w)$ for $w \in V - \{v,u\}$. 
Now, \emph{redirecting} $v$ into $u$ deletes outgoing edges of $v$ (except for self-loops) and merges $v$ into $u$:
$R_{v \rightarrow u}(G,b) = M_{v \rightarrow u}((V,E-(\Gamma_v^+(G)-\Gamma_v^-(G))),b)$.

\subsection{Centralities}

A \emph{centrality measure} $F$ is a function that for every node $v$ in a (node-)weighted graph $(G, b)$ assigns a non-negative real value, denoted by $F_v(G, b)$.

In our setting, centrality measures capture how often a node conveys the data to the target node.
Hence, we will consider centrality measures parameterized by the target node $t$ defined on the class of graphs $\mathcal{G}_t$.
We will call them \emph{t-centrality measures}.
We will consider three t-centrality measures which are direct counterparts of the classic measures for the general setting.

The simplest and chronologically the first medial centrality measure was proposed by \citet{Shimbel:1953} under the name \emph{Stress Centrality}.
This measure simply counts the number of shortest paths that go through a specific node.
We adjust Stress Centrality to our setting by fixing the target node of all paths and considering paths from different sources with different weights.

\begin{definition}
For a graph $G = (V,E) \in \mathcal{G}_t$ with weights~$b$, \emph{t-Stress Centrality} of node $v \in V$ is defined as follows:
$$S^t_v(G, b) = \sum_{s \in V} b(s) \cdot \sigma_{s,t}(G, v).$$
\end{definition}

\emph{Betweenness Centrality} \cite{Freeman:1977}, the most widely used medial centrality, can be considered a relative version of Stress Centrality.
In Betweenness Centrality the number of shortest paths from $s$ to $t$ that goes through $v$ is divided by the total number of shortest paths from $s$ to $t$.
We define t-Betweenness Centrality accordingly.

\begin{definition}
For a graph $G = (V,E) \in \mathcal{G}_t$ with weights~$b$, \emph{t-Betweenness Centrality} of node $v \in V$ is defined as follows:
$$B^t_v(G, b) = \sum_{s \in V} b(s) \cdot \frac{\sigma_{s,t}(G, v)}{\sigma_{s,t}(G)}.$$
\end{definition}

The standard Betweenness and Stress Centralities for an unweighted graph are the sum of the t-Betweenness and t-Stress Centralities, respectively, over all target nodes $t \in V$ with unit node weights.

Betweenness and Stress Centralities are based on the underlying assumption that the information travels the network through shortest paths.
Such an assumption makes sense in settings in which the whole structure is known beforehand and the process is optimized.
The opposite approach is proposed in \emph{Random Walk Betweenness Centrality}.

Assume an information packet starts from a source node $s$ and moves randomly through the network.
In each step, it chooses one of the outgoing edges of the node it is in, uniformly at random, and moves along this edge.
To measure the role in transferring the information to one specific target node $t$, node $t$ is treated as an absorbing node in which all packets end their travel (technically, this is achieved by deleting outgoing edges of $t$).
Now, the role in connecting nodes $s$ to $t$ in the random-walk version of Betweenness Centrality is defined as the expected number of times a packet visits a specific node. 

Formally, let us denote by $\mathds{P}(\omega_{G,s}(k) = v)$ the probability that the random walk $\omega_{G,s}$ on graph $G$ that starts in node $s$ after $k$ steps will be at node $v$.
We define t-Random Walk Betweenness Centrality as follows.

\begin{definition}
For a graph $G = (V, E) \in \mathcal{G}_t$ with weights~$b$, \emph{t-Random Walk Betweenness Centrality} of node $v \in V$ is defined as follows:
$$RWB^t_v(G, b) = \sum_{s \in V} \sum_{k=0}^{\infty} b(s) \cdot \mathds{P}(\omega_{G - \Gamma_t^+(G),s}(k) = v).$$
\end{definition}

We note that t-Random Walk Betweenness Centrality is a counterpart of the measure defined by \citet{Bloechl:etal:2011} for directed graphs and not of the measure defined by \citet{Newman:2005} for undirected graphs under the same name.

\begin{figure}[t]
\begin{minipage}[ht]{0.39\linewidth}
    \centering
    \scalebox{0.9}{
    \begin{tikzpicture}
      \node[node, double] (7) 
        {$t$};
      \node[node,above right=1.75em and 0.5em of 7] (5)
        {$v_2$};
      \node[node,left=1.5em of 5] (4)
        {$v_1$};
      \node[node,below right=0.3em and 1.5em of 5] (6)
        {$v_4$};
      \node[node,above=1.75em of 4] (1)
        {$s_1$};
      \node[node,above=1.75em of 5] (2)
        {$s_2$};
      \node[node,above=1.5em of 6] (3)
         {$v_3$};
      \node [above=1.25em of 1] (in1) {};
      \node [above=1.25em of 2] (in2) {};
      \draw[conn] (in1) -- (1);
      \draw[conn] (in2) -- (2);
      \draw[conn] (1) -- (4);
      \draw[conn] (1) -- (5);
      \draw[conn] (1) edge [bend right=15] (3);
      \draw[conn] (2) -- (5);
      \draw[conn] (2) edge [bend left=15] (3);
      \draw[conn] (2) edge [bend right=15] (3);
      \draw[conn] (3) -- (6);
      \draw[conn] (4) edge [bend left=20] (7);
      \draw[conn] (4) -- (7);
      \draw[conn] (4) edge [bend right=20] (7);
      \draw[conn] (5) -- (7);
      \draw[conn] (6) -- (7);
    \end{tikzpicture}
    }
\end{minipage}
\begin{minipage}[ht]{0.59\linewidth}
    \centering{
     \begin{tabular}{ c c c c} 
     \textbf{Node} & \textbf{$B^t$} & \textbf{$S^t$} & \textbf{$RWB^t$} \\
        \hline
        $s_1$ & $1.00$ & $4.00$ & $1.00$ \\ 
        $s_2$ & $1.00$ & $1.00$ & $1.00$ \\
        \hline
        $v_1$ & $0.75$ & $3.00$ & $0.33$ \\ 
        $v_2$ & $1.25$ & $2.00$ & $0.67$ \\ 
        $v_3$ & $0.00$ & $0.00$ & $1.00$ \\ 
        $v_4$ & $0.00$ & $0.00$ & $1.00$ \\ 
        \hline
        $t$ & $2.00$ & $5.00$ & $2.00$ \\ 
     \end{tabular}
    }
\end{minipage}
\caption{An illustration of the t-centralities. 
Nodes $s_1$ and $s_2$ are the only sources, as indicated by the incoming arrows, i.e., $b \!=\! \mathds{1}^{s_1} \!+\! \mathds{1}^{s_2}$.
Node $t$, marked by a double line, is the target node.
The values of t-Betweenness, t-Stress and t-Random Walk Betweenness of all nodes are presented in the table on the right-hand side.
}
\label{fig:t_centralities_example}
\end{figure}

\begin{example} (t-Centralities)
Consider a graph from Figure~\ref{fig:t_centralities_example}.
Let us analyze the role of the intermediary nodes $v_1,v_2,v_3,v_4$ in transmitting data from the source nodes $s_1, s_2$ to the target node $t$ according to the three medial centralities.

There are four shortest paths from $s_1$ to $t$: three paths through $v_1$ and one path through $v_2$.
There is one shortest path from $s_2$ to $t$ and it goes through $v_2$. 
As a result, t-Stress ranks node $v_1$ highest, as it is the only node which is on three of the relevant shortest paths.
In turn, t-Betweenness ranks node $v_2$ highest, as it is on all of the shortest paths from $s_2$ and on $1/4$ of the shortest paths from $s_1$.

Now, consider t-Random Walk Betweenness.
The random walk that starts in the node $s_1$ reaches nodes $v_1, v_2, v_3$ with the probability $1/3$; hence, it also reaches node $v_4$ with probability $1/3$.
The random walk that starts in the node $s_2$ reaches nodes $v_3$ and $v_4$ with the probability $2/3$, node $v_2$ with the probability $1/3$ and cannot reach node $v_1$.
Hence, t-Random Walk Betweenness ranks nodes $v_3$ and $v_4$ highest.
\end{example}

\section{Axiomatization of t-Betweenness Centrality}
We will now present the set of simple properties that we will use to uniquely characterize t-Betweenness Centrality.
First, we will present four common axioms satisfied by t-Betweenness, t-Stress as well as t-Random Walk Betweenness: \emph{Locality}, \emph{Additivity}, \emph{Node Redirect} and \emph{Target Proxy}.
Then, we will introduce two new axioms satisfied only by centralities based on shortest paths, t-Stress and t-Betweenness, namely \emph{Symmetry} and \emph{Direct Link Domination}.
Finally, we introduce a borderline axiom named \emph{Atom $1$-$1$}, satisfied by t-Betweenness and t-Random Walk Betweenness.

\subsection{Common Axioms}
We start with axioms satisfied by all three t-centralities.
For an illustration see Figure~\ref{fig:common_axioms}. 

\begin{axiom}
\label{Locality}
(Locality) For every two graphs $G = (V, E)$, $G' = (V', E') \in \mathcal{G}_t$ with weights $b,b'$ such that $V \cap V' = \{t\}$ and node $w \in V - \{t\}$:
$$F^t_w((G, b) + (G', b')) = F^t_w(G, b)$$
and $F^t_t((G, b) + (G', b')) = F^t_t(G, b) + F^t_t(G', b')$.
\end{axiom}

\begin{axiom}
\label{Additivity}
(Additivity) For every graph $G=(V,E) \in \mathcal{G}_t$ with weights $b, b'$ and node $w \in V$:
$$F^t_w(G, b + b') = F^t_w(G, b) + F^t_w(G, b').$$
\end{axiom}

\begin{axiom}
\label{Node Redirect}
(Node Redirect) For every graph $G=(V, E) \in \mathcal{G}_t$ with weights $b$ and out-twins $v, u \in V - \{t\}$:
$$F^{t}_v(R_{u \rightarrow v}(G, b)) = F^{t}_v(G, b) + F^{t}_{u}(G, b)$$
and $F^{t}_w(R_{u \rightarrow v}(G, b)) \!=\! F^{t}_w(G, b)$ for every $w \!\in\! V - \{v, u\}$.
\end{axiom}

\begin{axiom}
\label{Target Proxy}
(Target Proxy) For every graph $G=(V, E) \in \mathcal{G}_t$ with weights $b$ such that $\Gamma^-_t(G) = \{(v, t)\}$ and $\Gamma^+_v(G) = \{(v, t)\}$, $b(t) = 0$ and node $w \in V - \{t\}$:
$$F^{v}_w(M_{t \rightarrow v}(G, b)) = F^{t}_w(G, b).$$
\end{axiom}

\begin{figure*}[t]
\centering
\begin{tikzpicture}
  \node[node] (v2) 
    [label={[xshift=1.25em, yshift=-1.5em] $2$}] {$v_2$};
  \node[node,below=1.25em of v2, double] (v3)
    [label={[xshift=1.25em, yshift=-1.5em] $3$}] {$t$};
  \node[node,above=1.25em of v2] (s2)
    [label={[xshift=1.25em, yshift=-1.5em] $2$}] {$s_2$};
  \node[node,left=1.25em of s2] (s1)
    [label={[xshift=1.25em, yshift=-1.5em] $1$}] {$s_1$};
  \node[node,below=1.25em of s1] (v1)
    [label={[xshift=1.25em, yshift=-1.5em] $1$}] {$v_1$};
  \node [above=1.25em of s1] (in1) {};
  \node [above left=1.25em and 0.5em of s2] (in2) {};
  \node [above right=1.25em and 0.5em of s2] (in3) {};
  \draw[conn] (in1) -- (s1);
  \draw[conn] (in2) -- (s2);
  \draw[conn] (in3) -- (s2);
  \draw[conn] (s1) -- (v1);
  \draw[conn] (s2) -- (v2);
  \draw[conn] (s2) edge [bend right=25] (v2);
  \draw[conn] (v2) -- (v3);
  \draw[conn] (v3) edge [bend left=25] (v2);
  \draw[conn] (v1) -- (v3);
  \node [below left=3.0em and 1.0em, align=flush center] at (v2) {$G$};
  
  \node [above right=0.5em and 1.0em, align=flush center] (b) at (v2) {};
  \node [above right=0.5em and 7.5em, align=flush center] (e) at (v2) {};
  \draw[conn] (b) -- (e)
    node [midway, below] {\scriptsize Node Redirect};

  \node[node, right=7.5em of v2] (v2) 
    [label={[xshift=1.25em, yshift=-1.5em] $3$}] {$v_2$};
  \node[node,below=1.25em of v2, double] (v3)
    [label={[xshift=1.25em, yshift=-1.5em] $3$}] {$t$};
  \node[node,above=1.25em of v2] (s2)
    [label={[xshift=1.25em, yshift=-1.5em] $2$}] {$s_2$};
  \node[node,left=1.25em of s2] (s1)
    [label={[xshift=1.25em, yshift=-1.5em] $1$}] {$s_1$};
  \node [above=1.25em of s1] (in1) {};
  \node [above left=1.25em and 0.5em of s2] (in2) {};
  \node [above right=1.25em and 0.5em of s2] (in3) {};
  \draw[conn] (in1) -- (s1);
  \draw[conn] (in2) -- (s2);
  \draw[conn] (in3) -- (s2);
  \draw[conn] (s1) -- (v2);
  \draw[conn] (s2) -- (v2);
  \draw[conn] (s2) edge [bend right=25] (v2);
  \draw[conn] (v2) -- (v3);
  \draw[conn] (v3) edge [bend left=25] (v2);
  \node [below left=3.0em and 1.0em, align=flush center] at (v2) {$G_1$};
  
  \node [above right=0.5em and 1.0em, align=flush center] (b) at (v2) {};
  \node [above right=0.5em and 7.0em, align=flush center] (e) at (v2) {};
  \draw[conn] (b) -- (e)
    node [midway, below] {\scriptsize Target Proxy};
  
  \node[node, right=7em of v2, double] (v2) 
    [label={[xshift=1.25em, yshift=-1.5em] $3$}] {$v_2$};
  \node[node,above=1.25em of v2] (s2)
    [label={[xshift=1.25em, yshift=-1.5em] $2$}] {$s_2$};
  \node[node,left=1.25em of s2] (s1)
    [label={[xshift=1.25em, yshift=-1.5em] $1$}] {$s_1$};
  \node [above=1.25em of s1] (in1) {};
  \node [above left=1.25em and 0.5em of s2] (in2) {};
  \node [above right=1.25em and 0.5em of s2] (in3) {};
  \draw[conn] (in1) -- (s1);
  \draw[conn] (in2) -- (s2);
  \draw[conn] (in3) -- (s2);
  \draw[conn] (s1) -- (v2);
  \draw[conn] (s2) -- (v2);
  \draw[conn] (s2) edge [bend right=25] (v2);
  \path[conn] (v2) edge [in=-135,out=-105,loop] node[auto] {} ();
  \path[conn] (v2) edge [in=-75,out=-45,loop] node[auto] {} ();
  \node [below=3.0em, align=flush center] at (v2) {$G_2$};
  
  \node [above right=0.5em and 1.2em, align=flush center] (b) at (v2) {};
  \node [above right=0.5em and 4.5em, align=flush center] (e) at (v2) {};
  \draw[conn] (b) -- (e)
    node [midway, below] {\scriptsize Locality};
  
  \node[node, right=4.5em of v2, double] (v2) 
    [label={[xshift=1.25em, yshift=-1.5em] $1$}] {$v_2$};
  \node[node,above=1.25em of v2] (s1)
    [label={[xshift=1.25em, yshift=-1.5em] $1$}] {$s_1$};
  \node [above=1.25em of s1] (in1) {};
  \draw[conn] (in1) -- (s1);
  \draw[conn] (s1) -- (v2);
  \draw[conn] (v2) edge [loop below] (v2);
  \node [below=3.0em, align=flush center] at (v2) {$G_3$};
  \node[node, right=1.5em of v2, double] (v2') 
    [label={[xshift=1.25em, yshift=-1.5em] $2$}] {$v_2$};
  \node[node,above=1.25em of v2'] (s2)
    [label={[xshift=1.25em, yshift=-1.5em] $2$}] {$s_2$};
  \node [above left=1.25em and 0.5em of s2] (in2) {};
  \node [above right=1.25em and 0.5em of s2] (in3) {};
  \draw[conn] (in2) -- (s2);
  \draw[conn] (in3) -- (s2);
  \draw[conn] (s2) -- (v2');
  \draw[conn] (s2) edge [bend right=25] (v2');
  \draw[conn] (v2') edge [loop below] (v2');
  \node [below=3.0em, align=flush center] at (v2') {$G_4$};
  
  \node [above right=0.5em and 1.5em, align=flush center] (b) at (v2') {};
  \node [above right=0.5em and 5.0em, align=flush center] (e) at (v2') {};
  \draw[conn] (b) -- (e)
    node [midway, below] {\scriptsize Additivity};
  
  \node[node, right=5.0em of v2', double] (v2) 
    [label={[xshift=1.25em, yshift=-1.5em] $1$}] {$v_2$};
  \node[node,above=1.25em of v2] (s1)
    [label={[xshift=1.25em, yshift=-1.5em] $1$}] {$s_1$};
  \node [above=1.25em of s1] (in1) {};
  \draw[conn] (in1) -- (s1);
  \draw[conn] (s1) -- (v2);
  \draw[conn] (v2) edge [loop below] (v2);
  \node [below=3.0em, align=flush center] at (v2) {$G_3$};
  \node[node, right=1.5em of v2, double] (v2') 
    [label={[xshift=1.25em, yshift=-1.5em] $1$}] {$v_2$};
  \node[node,above=1.25em of v2'] (s2)
    [label={[xshift=1.25em, yshift=-1.5em] $1$}] {$s_2$};
  \node [above=1.25em of s2] (in2) {};
  \draw[conn] (in2) -- (s2);
  \draw[conn] (s2) -- (v2');
  \draw[conn] (s2) edge [bend right=25] (v2');
  \draw[conn] (v2') edge [loop below] (v2');
  \node [below=3.0em, align=flush center] at (v2') {$G_5$};
  \node[node, right=1.5em of v2', double] (v2') 
    [label={[xshift=1.25em, yshift=-1.5em] $1$}] {$v_2$};
  \node[node,above=1.25em of v2'] (s2)
    [label={[xshift=1.25em, yshift=-1.5em] $1$}] {$s_2$};
  \node [above=1.25em of s2] (in3) {};
  \draw[conn] (in3) -- (s2);
  \draw[conn] (s2) -- (v2');
  \draw[conn] (s2) edge [bend right=25] (v2');
  \draw[conn] (v2') edge [loop below] (v2');
  \node [below=3.0em, align=flush center] at (v2') {$G_6$};
\end{tikzpicture}
\caption{An illustration of the first four axioms. 
In the initial graph $G$, nodes $s_1$ and $s_2$ are the only sources, $s_2$ with twice as much weight: $b = \mathds{1}^{s_1} + 2 \cdot \mathds{1}^{s_2}$ and node $t$ is the target. 
The values of t-Betweenness Centrality are placed right to the nodes.}
\label{fig:common_axioms}
\end{figure*}
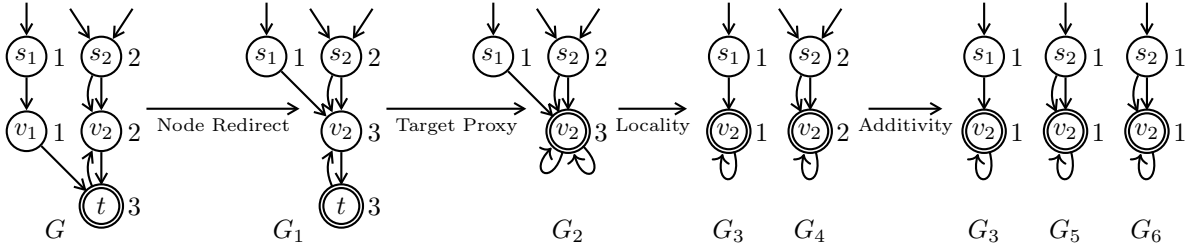

\emph{Locality} describes the operation of joining two networks with the common target.
This may occur for example in the computer network, when two separate subnetworks are connected by one server.
The axiom states that the centrality of all nodes other than the target does not change and the centrality of the target is the sum of its centralities from both separate graphs.
This means that the behavior of data packets is not affected by the existence of another independent part of the network with the same target.

We can also look from the other perspective and say that \emph{Locality} describes splitting the network at the cut vertex.
When the cut vertex is the target node, centralities in each component it joins are independent.
A similar axiom under the same name was proposed for undirected graphs in \cite{Skibski:etal:2019:attachment} where separate connected components are considered.

\emph{Additivity} states that the centrality treated as a function of the node weights is additive. 
This means that the packets travel independently from each other, they do not collide in any way.

\emph{Node Redirect} formalizes the intuition that if the packet from two nodes has the same possible further routes, then redirecting one of these nodes into the other would not change the centralities of other nodes.
Moreover, the centrality of the combined node will be the sum of centralities of both nodes in the original graph.
This axiom was proposed in \cite{Was:Skibski:2020:pagerank}, but in our version we do not allow redirecting of (and to) the target node.

\emph{Target Proxy} can be understood in the following way: if every path to the target node goes through a \emph{proxy}, $v$, then the role in transferring data to the target is the same as the role in transferring data to node $v$.
In the context of the Internet network, this axiom can be interpreted as the \emph{layered system} REST constraint \cite{Fielding:2000}.
It is a rule for designing web API which states that the communication between the clients and the target server should not be affected if the target is hidden behind a firewall, proxy or a load balancer. 

\subsection{Shortest-Paths Axioms}
Let us now present two axioms specific for centralities based on shortest-paths.
See Figure~\ref{fig:B_axioms} for an illustration.

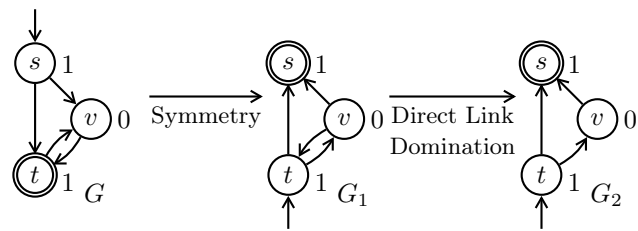
\begin{figure}[b]
\centering
\begin{tikzpicture}
  \node[node] (s) 
    [label={[xshift=1.25em, yshift=-1.5em] $1$}] {$s$};
  \node[node,below right = 1em and 1em of s] (v)
    [label={[xshift=1.25em, yshift=-1.5em] $0$}] {$v$};
  \node[node,below left = 1em and 1em of v, double] (t)
    [label={[xshift=1.25em, yshift=-1.75em] $1$}] {$t$};
  \node [above=1.25em of s] (in1) {};
  \draw[conn] (in1) -- (s);
  \draw[conn] (s) -- (v);
  \draw[conn] (s) -- (t);
  \draw[conn] (v) edge [bend left=15] (t);
  \draw[conn] (t) edge [bend left=15] (v);
  \node [below right=0em and 1.5em, align=flush center] at (t) {$G$};
  
  \node [above right=0.5em and 1.5em, align=flush center] (b) at (v) {};
  \node [above right=0.5em and 6.5em, align=flush center] (e) at (v) {};
  \draw[conn] (b) -- (e)
    node [midway, below] {\small Symmetry};
    
  \node[node, right=8em of s, double] (s) 
    [label={[xshift=1.25em, yshift=-1.5em] $1$}] {$s$};
  \node[node,below right = 1em and 1em of s] (v)
    [label={[xshift=1.25em, yshift=-1.5em] $0$}] {$v$};
  \node[node,below left = 1em and 1em of v] (t)
    [label={[xshift=1.25em, yshift=-1.75em] $1$}] {$t$};
  \node [below=1.25em of t] (in1) {};
  \draw[conn] (in1) -- (t);
  \draw[conn] (v) -- (s);
  \draw[conn] (t) -- (s);
  \draw[conn] (t) edge [bend right=15] (v);
  \draw[conn] (v) edge [bend right=15] (t);
  \node [below right=0em and 1.5em, align=flush center] at (t) {$G_1$};
  
  \node [above right=0.5em and 1.0em, align=flush center] (b) at (v) {};
  \node [above right=0.5em and 6.5em, align=flush center] (e) at (v) {};
  \draw[conn] (b) -- (e)
    node [midway, below, align=center] {\small Direct Link\\ \small Domination};
    
  \node[node, right=8em of s, double] (s) 
    [label={[xshift=1.25em, yshift=-1.5em] $1$}] {$s$};
  \node[node,below right = 1em and 1em of s] (v)
    [label={[xshift=1.25em, yshift=-1.5em] $0$}] {$v$};
  \node[node,below left = 1em and 1em of v] (t)
    [label={[xshift=1.25em, yshift=-1.75em] $1$}] {$t$};
  \node [below=1.25em of t] (in1) {};
  \draw[conn] (in1) -- (t);
  \draw[conn] (v) -- (s);
  \draw[conn] (t) -- (s);
  \draw[conn] (t) edge [bend right=15] (v);
  \node [below right=0em and 1.5em, align=flush center] at (t) {$G_2$};
\end{tikzpicture}
\caption{An illustration of Symmetry and Direct Link Domination.
Node $s$ is the only source: $b = \mathds{1}^s$ and node $t$ is the target. The values of t-Betweenness Centrality are placed right to the nodes.}
\label{fig:B_axioms}
\end{figure}

\begin{axiom}
\label{Symmetry}
(Symmetry) For every graph $G=(V, E) \in \mathcal{G}_t$, source node $s \in V$ and node $w \in V$ such that $G' = (V, \{(u, v) : (v, u) \in E\}) \in \mathcal{G}_s$: 
$$F^{s}_w(G', \mathds{1}^t) = F^{t}_w(G, \mathds{1}^s).$$
\end{axiom}

\begin{axiom}
\label{Direct Link Domination}
(Direct Link Domination) For every graph $G=(V, E) \in \mathcal{G}_t$ with weights $b$ such that $(v, t), (v, u) \in E$, $u \neq t$ and node $w \in V$:
$$F^t_w(G - \{(v, u)\}, b) = F^{t}_w(G, b).$$
\end{axiom}

\emph{Symmetry} states that if there is only one source node, then reversing the graph and swapping source and target nodes does not change centralities in the graph.
Note that this operation applies only if the reversed graph belongs to $\mathcal{G}_s$, i.e., node $s$ which is the target in the reversed graph is reachable from every node.
It is suitable when data packets would be transferred through the same paths from the target to the source in the reversed network. 
This is true if data packets go through the shortest paths.
However, it is not the case for the random walk, for which some paths would be used more often in the reversed network than in the original one.

\emph{Direct Link Domination} states that if from $v$ there is a direct connection to $t$, then we can delete other outgoing edges of $v$. 
This captures the assumption that a node which can send a data packet directly to the target will do so.
For example, when sending parcels through the transportation network it is natural to anticipate that the logistics company will choose the direct connection whenever it is possible.
At the same time, the axiom does not impose any restrictions on the behavior of packets in other nodes. 

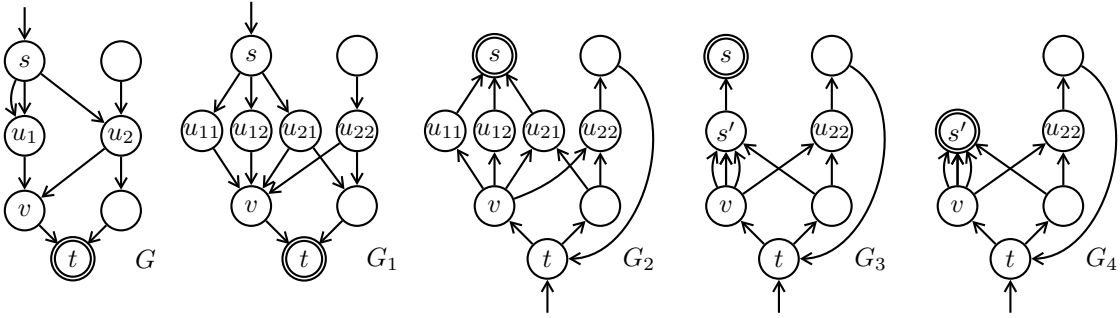
\begin{figure}[!t]
\centering
\begin{tikzpicture}[transform shape]%
  \node[node,yshift=1cm] (r)  at (10, 10) {};
  \node[node,below=1.25em of r] (w2) {$u_2$};
  \node[node,below=1.25em of w2] (u) {};
  \node[node,below left=1.0em of u, double] (t) {$t$};
  \node[node,above left=1.0em of t] (v) {$v$};
  \node[node,above=1.25em of v] (w1) {$u_1$};
  \node[node,above=1.25em of w1] (s) {$s$};
  \node[below=1.25em of t] (in1) {};
  \node[above=1.25em of s] (in1) {};
  \draw[conn] (in1) -- (s);
  \draw[conn] (r) -- (w2);
  \draw[conn] (w2) -- (u);
  \draw[conn] (u) -- (t);
  \draw[conn] (v) -- (t);
  \draw[conn] (w1) -- (v);
  \draw[conn] (w2) -- (v);
  \draw[conn] (s) -- (w1);
  \draw[conn] (s) edge [bend right=25] (w1);
  \draw[conn] (s) -- (w2);
  \node [right=2.0em, align=flush center] at (t) {$G$};
\end{tikzpicture}
\ 
\begin{tikzpicture}[transform shape]
  \node[node,yshift=1cm] (r)  at (10, 10) {};
  \node[node,below=1.25em of r] (w2) {$u_{22}$};
  \node[node,below=1.25em of w2] (u) {};
  \node[node,below left=1.25em of u, double] (t) {$t$};
  \node[node,above left=1.25em of t] (v) {$v$};
  \node[node,above=1.25em of v] (w12) {$u_{12}$};
  \node[node,left=0.25em of w12] (w11) {$u_{11}$};
  \node[node,right=0.25em of w12] (w21) {$u_{21}$};
  \node[node,above=1.25em of w12] (s) {$s$};
  \node[below=1.25em of t] (in1) {};
  \node[above=1.25em of s] (in1) {};
  \draw[conn] (in1) -- (s);
  \draw[conn] (r) -- (w2);
  \draw[conn] (w2) -- (u);
  \draw[conn] (w21) -- (u);
  \draw[conn] (u) -- (t);
  \draw[conn] (v) -- (t);
  \draw[conn] (w12) -- (v);
  \draw[conn] (w11) -- (v);
  \draw[conn] (w2) -- (v);
  \draw[conn] (w21) -- (v);
  \draw[conn] (s) -- (w12);
  \draw[conn] (s) -- (w11);
  \draw[conn] (s) -- (w21);
  \node [right=2.0em, align=flush center] at (t) {$G_1$};
\end{tikzpicture}
\ 
\begin{tikzpicture}[transform shape]
  \node[node,yshift=1cm] (r)  at (10, 10) {};
  \node[node,below=1.25em of r] (w2) {$u_{22}$};
  \node[node,below=1.25em of w2] (u) {};
  \node[node,below left=1.25em of u] (t) {$t$};
  \node[node,above left=1.25em of t] (v) {$v$};
  \node[node,above=1.25em of v] (w12) {$u_{12}$};
  \node[node,left=0.25em of w12] (w11) {$u_{11}$};
  \node[node,right=0.25em of w12] (w21) {$u_{21}$};
  \node[node,above=1.25em of w12, double] (s) {$s$};
  \node[below=1.25em of t] (in1) {};
  \draw[conn] (in1) -- (t);
  \draw[conn] (w2) -- (r);
  \draw[conn] (u) -- (w2);
  \draw[conn] (u) -- (w21);
  \draw[conn] (t) -- (u);
  \draw[conn] (t) -- (v);
  \draw[conn] (v) -- (w12);
  \draw[conn] (v) -- (w11);
  \draw[conn] (v) edge [bend right=15] (w2);
  \draw[conn] (v) -- (w21);
  \draw[conn] (w12) -- (s);
  \draw[conn] (w11) -- (s);
  \draw[conn] (w21) -- (s);
  \draw[conn] (r) edge [bend left=75] (t);
  \node [right=2.5em, align=flush center] at (t) {$G_2$};
\end{tikzpicture}
\ 
\begin{tikzpicture}[transform shape]%
  \node[node,yshift=1cm] (r)  at (10, 10) {};
  \node[node,below=1.25em of r] (w2) {$u_{22}$};
  \node[node,below=1.25em of w2] (u) {};
  \node[node,below left=1.25em of u] (t) {$t$};
  \node[node,above left=1.25em of t] (v) {$v$};
\  \node[node,above=1.25em of v] (w1) {$s'$};
  \node[node,above=1.25em of w1, double] (s) {$s$};
  \node[below=1.25em of t] (in1) {};
  \draw[conn] (in1) -- (t);
  \draw[conn] (w2) -- (r);
  \draw[conn] (u) -- (w2);
  \draw[conn] (t) -- (u);
  \draw[conn] (t) -- (v);
  \draw[conn] (u) -- (w1);
  \draw[conn] (v) -- (w1);
  \draw[conn] (v) edge [bend left=25] (w1);
  \draw[conn] (v) -- (w1);
  \draw[conn] (v) edge [bend right=25] (w1);
  \draw[conn] (v) -- (w2);
  \draw[conn] (w1) -- (s);
  \draw[conn] (r) edge [bend left=75] (t);
  \node [right=2.5em, align=flush center] at (t) {$G_3$};
\end{tikzpicture}
\ 
\begin{tikzpicture}[transform shape]%
  \node[node,yshift=1cm] (r)  at (10, 10) {};
  \node[node,below=1.25em of r] (w2) {$u_{22}$};
  \node[node,below=1.25em of w2] (u) {};
  \node[node,below left=1.25em of u] (t) {$t$};
  \node[node,above left=1.25em of t] (v) {$v$};
  \node[node,above=1.25em of v, double] (w1) {$s'$};
  \node[below=1.25em of t] (in1) {};
  \draw[conn] (in1) -- (t);
  \draw[conn] (w2) -- (r);
  \draw[conn] (u) -- (w2);
  \draw[conn] (t) -- (u);
  \draw[conn] (t) -- (v);
  \draw[conn] (u) -- (w1);
  \draw[conn] (v) -- (w1);
  \draw[conn] (v) edge [bend left=25] (w1);
  \draw[conn] (v) -- (w1);
  \draw[conn] (v) edge [bend right=25] (w1);
  \draw[conn] (v) -- (w2);
  \draw[conn] (r) edge [bend left=75] (t);
  \node [right=2.5em, align=flush center] at (t) {$G_4$};
\end{tikzpicture}
\caption{The key inductive step from the proof of Theorems~\ref{theorem:betweenness} and~\ref{theorem:stress}.
Assume node $v$ is at distance at least $2$ from $s$.
We transform the original graph $G$ in a way that the distance from the source to the target is shortened, but the centrality of $v$ remains unchanged. 
First, using Node Redirect, we create copies of direct successors of $s$, each with the original outgoing edges, but only one incoming edge ($G_1$). 
Then, using Symmetry (and No Target Outlet), we reverse the graph ($G_2$). 
Next, using Node Redirect, we merge copies with an edge to $s$ into one node $s'$ ($G_3$). 
Finally, using Target Proxy, we merge $s$ and $s'$ ($G_4$).}
\label{fig:B_induction}
\end{figure}

\subsection{Atom Axiom}
We conclude with a simple atom axiom.

\begin{axiom}
\label{Atom11}
(Atom $1$-$1$) For every node $s$, natural number $k \in \mathds{N}_+$ and graph $G = (\{s,t\}, k \cdot \{(s,t)\})$:
$$F^t_s(G, \mathds{1}^s) = 1 = F^t_t(G, \mathds{1}^s).$$
\end{axiom}

\emph{Atom $1$-$1$} specifies the centrality in a simple graph with two nodes, the source $s$ and the target $t$, and $k$ edges from $s$ to $t$.
It states that both nodes should have centrality equal to $1$, as this is the amount of information packets both nodes are responsible for.
Hence, the number of edges does not matter.
This is the case, for example, when edges represent links between webpages: if all links from one website point to the other, their number does not matter.
Atom $1$-$1$ is satisfied by t-Betweenness and t-Random Walk Betweenness.

The following theorem contains the axiomatic characterization of t-Betweenness. 
The full proof can be found in the appendix.

\begin{theorem}\label{theorem:betweenness}
A t-centrality satisfies Locality, Additivity, Node Redirect, Target Proxy, Symmetry, Direct Link Domination and Atom $1$-$1$ if and only if it is t-Betweenness Centrality.
\end{theorem}
\begin{proof}[Sketch of the proof]
It is easy to check that t-Betweenness Centrality satisfies the axioms listed in Theorem~\ref{theorem:betweenness} (Lemma~\ref{lemma:b_to_axioms} in the appendix).
Hence, in what follows, we will show that this set of axioms uniquely characterizes a centrality measure.

We begin by showing three simple properties.
\begin{itemize}
    \item \emph{(Anonymity)}: Any node other than the target can be renamed without changing any centralities (Lemma~\ref{lemma:all:anonymity}).
    \item \emph{(Target Self-Loop)}: Deleting a self-loop of the target does not change any centralities (Lemma~\ref{lemma:all:t_loop}).
    \item \emph{(No Target Outlet)}: Deleting any outgoing edge of the target does not change any centralities (Lemma~\ref{lemma:b:no_target_outlet}).
\end{itemize}

No Target Outlet is an especially useful property as it allows us to add edges from the target $t$ to all other nodes in the graph before using Symmetry.
In this way, we make sure that after reversing the graph from each node there will be a path to the new target, possibly through $t$, which is the requirement of the Symmetry axiom.

For now, let us concentrate on a graph with one source $s$ with weight $1$, i.e., $b = \mathds{1}^s$.
We proceed by induction on the distance from the source $s$ to the target $t$: $dist_{s,t}(G)$. 
In the base case we consider $dist_{s,t}(G) \le 2$:
\begin{itemize}
\item If $dist_{s, t}(G) = 0$ (Lemma~\ref{lemma:b:dist_0}), then the only source is also the target.
In such a case, we present the graph as the sum of two graphs:
the original graph with weight of $t$ changed to zero and the second graph with only one node $t$ with unitary weight.
From Locality and Additivity we get that $F^t_t(G, \mathds{1}^t) = F^t_t((\{t\}, \{\}), \mathds{1}^t)$ and $F^t_v(G, \mathds{1}^t) = 0$ for $v \in V - \{t\}$.
Now, the centrality of $t$ in graph $(\{t\}, \{\})$ with weights $\mathds{1}^t$ can be determined based on the Atom axiom by using Target Proxy and Target Self-Loop.

\item If $dist_{s, t}(G) = 1$ (Lemma~\ref{lemma:b:dist_1}), then the only source and the target are connected by at least one edge. 
Here, using Node Redirect and Locality we decompose graph $G$ into the original graph with weight of $s$ changed to zero and a graph with two nodes, source $s$ and target $t$, and $k$ edges from $s$ to $t$, with unitary weight of $s$.
Centralities in the first graph all equal zero from Additivity and centralities in the second graph are known from the Atom axiom.

\item If $dist_{s, t}(G) = 2$ (Lemmas~\ref{lemma:b:dist_2_simple}--\ref{lemma:b:dist_2}), then we know that at least one of the successors of $s$ is a predecessor of $t$. 
First, we use Node Redirect to split each successor of $s$ into several nodes so that each has only one incoming edge.
Then, using Symmetry and Node Redirect for the reversed graph we split predecessors of $t$ so that each copy has only one outgoing edge.
In the resulting graph all nodes which are both successors of $s$ and predecessors of $t$ are isomorphic, which allows us to deduce that they have equal centralities.
To argue what are the centralities of other nodes, we merge isomorphic nodes using Node Redirect and use Target Proxy to obtain the case where the distance from $s$ to $t$ equals one.
\end{itemize}

Let us discuss the inductive step (Lemma~\ref{lemma:b:single_source}). 
Fix graph $G$ with $dist_{s,t}(G) \geq 3$ and some node $v$. 
Since 
\[ dist_{s, v}(G) + dist_{v, t}(G) \geq dist_{s, t}(G) \geq 3,\]
we either have $dist_{s, v}(G) \geq 2$ or $dist_{v, t}(G) \geq 2$.
Let us assume the former; in the other case, we reverse the graph and based on Symmetry proceed in the same way.
Now, we show that we can transform the graph in a way that the distance from the source to the target decreases, but the centrality of $v$ remains unchanged.
We present this key construction in Figure~\ref{fig:B_induction}.

So far, we have considered only one source.
If there are multiple sources, by using Additivity we split the graph into several copies, each with a single unitary source (Lemma~\ref{lemma:b:determined}):
\[F^t_v(G, b) = \sum_{s \in V} b(s) \cdot F^t_v(G, \mathds{1}^s).\]
Based on that, we show that if a centrality measure satisfies Atom $1$-$1$, then it is t-Betweenness Centrality (Lemma~\ref{lemma:axioms_to_betweenness}).
This concludes the proof.
\end{proof}

\section{Axiomatization of t-Stress Centrality}
To characterize t-Stress Centrality, we propose the following modification of Atom $1$-$1$:

\begin{axiom}
\label{Atomkk}
(Atom $k$-$k$) For every node $s$, natural number $k \in \mathds{N}_+$ and graph $G = (\{s,t\}, k \cdot \{(s,t)\})$:
$$F^t_s(G, \mathds{1}^s) = k = F^t_t(G, \mathds{1}^s).$$
\end{axiom}

As Atom $1$-$1$, \emph{Atom $k$-$k$} specifies the centrality in a graph with two nodes, the source $s$ and the target $t$, and $k$ edges from $s$ to $t$.
In such a case, Atom $k$-$k$ states that both nodes have centrality equal to the number of edges between them.
In particular, the nodes' assessment increases when there are more edges between the source and the target.
Such an approach makes sense if we assume the information packets can be duplicated and sent through each edge.
Then, if we measure not the relative importance of nodes, but the absolute number of packets that go through the node we get these values.
Out of three centralities considered by us, Atom $k$-$k$ is satisfied only by t-Stress Centrality.

Now, we show that replacing Atom $1$-$1$ by Atom $k$-$k$ in the axiomatization of t-Betweenness Centrality results in the axiomatization of t-Stress Centrality. 

\begin{theorem}\label{theorem:stress}
A t-centrality satisfies Locality, Additivity, Node Redirect, Target Proxy, Symmetry, Direct Link Domination and Atom $k$-$k$ if and only if it is t-Stress Centrality.
\end{theorem}

The proof of Theorem~\ref{theorem:stress} is analogous to the proof of Theorem~\ref{theorem:betweenness}.
Specifically, we prove that t-Stress Centrality satisfies the axioms (Lemma~\ref{lemma:s_to_axioms}) and then using the reasoning from the proof of Theorem~\ref{theorem:betweenness} we show it is a unique such measure (Lemma~\ref{lemma:axioms_to_stress}).

\section{Axiomatization of t-Random Walk Betweenness}

So far, we have considered two centrality measures based on shortest paths. 
In this section, we present the axiomatization of their random-walk counterpart: t-Random Walk Betweenness.

To this end, we adapt to our setting two axioms from the axiomatization of PageRank in \cite{Was:Skibski:2020:pagerank}: \emph{Edge Swap} and \emph{Edge Multiplication}.
Both axioms are satisfied by t-Random Walk Betweenness Centrality, but not by t-Betweenness nor t-Stress Centralities.

\begin{figure}[b]
\centering
\begin{tikzpicture}
  \node[node] (s1) 
    [label={[xshift=1.25em, yshift=-1.5em] $1$}] {$s_1$};
  \node[node,right = 1.25em of s1] (s2)
    [label={[xshift=1.25em, yshift=-1.5em] $1$}] {$s_2$};
  \node[node,below = 1.25em of s1] (v1)
    [label={[xshift=1.25em, yshift=-1.5em] $1$}] {$v_1$};
  \node[node,below = 1.25em of s2] (v2)
    [label={[xshift=1.25em, yshift=-1.5em] $\frac{1}{2}$}] {$v_2$};
  \node[node,below right = 1.25em and 0.5em of v1, double] (t)
    [label={[xshift=1.25em, yshift=-1.5em] $2$}] {$t$};
  \node [above=1.25em of s1] (in1) {};
  \node [above=1.25em of s2] (in2) {};
  \draw[conn] (in1) -- (s1);
  \draw[conn] (in2) -- (s2);
  \draw[conn] (s1) -- (v1);
  \draw[conn] (s1) edge [bend left=25] (t);
  \draw[conn] (s2) -- (v1);
  \draw[conn] (s2) -- (v2);
  \draw[conn] (v1) -- (t);
  \draw[conn] (v2) -- (t);
  \node [below left=0em and 1em, align=flush center] at (t) {$G$};
  
  \node [above right=0.25em and 1.5em, align=flush center] (b) at (v2) {};
  \node [above right=0.25em and 5.0em, align=flush center] (e) at (v2) {};
  \draw[conn] (b) -- (e)
    node [midway, below, align=center] {\small Edge\\ \small Swap};
    
  \node[node, right = 8em of s1] (s1) 
    [label={[xshift=1.25em, yshift=-1.5em] $1$}] {$s_1$};
  \node[node,right = 1.25em of s1] (s2)
    [label={[xshift=1.25em, yshift=-1.5em] $1$}] {$s_2$};
  \node[node,below = 1.25em of s1] (v1)
    [label={[xshift=1.25em, yshift=-1.5em] $1$}] {$v_1$};
  \node[node,below = 1.25em of s2] (v2)
    [label={[xshift=1.25em, yshift=-1.5em] $\frac{1}{2}$}] {$v_2$};
  \node[node,below right = 1.25em and 0.5em of v1, double] (t)
    [label={[xshift=1.25em, yshift=-1.5em] $2$}] {$t$};
  \node [above=1.25em of s1] (in1) {};
  \node [above=1.25em of s2] (in2) {};
  \draw[conn] (in1) -- (s1);
  \draw[conn] (in2) -- (s2);
  \draw[conn] (s1) -- (v1);
  \draw[conn] (s1) edge [bend left=25] (v1);
  \draw[conn] (s2) edge [bend right=15] (t);
  \draw[conn] (s2) -- (v2);
  \draw[conn] (v1) -- (t);
  \draw[conn] (v2) -- (t);
  \node [below left=0em and 0.75em, align=flush center] at (t) {$G_1$};
  
  \node [above right=0.25em and 1.0em, align=flush center] (b) at (v2) {};
  \node [above right=0.25em and 5.0em, align=flush center] (e) at (v2) {};
  \draw[conn] (b) -- (e)
    node [midway, below, align=center] {\small Edge\\ \small Multiplication};
    
  \node[node, right = 8em of s1] (s1) 
    [label={[xshift=1.25em, yshift=-1.5em] $1$}] {$s_1$};
  \node[node,right = 1.25em of s1] (s2)
    [label={[xshift=1.25em, yshift=-1.5em] $1$}] {$s_2$};
  \node[node,below = 1.25em of s1] (v1)
    [label={[xshift=1.25em, yshift=-1.5em] $1$}] {$v_1$};
  \node[node,below = 1.25em of s2] (v2)
    [label={[xshift=1.25em, yshift=-1.5em] $\frac{1}{2}$}] {$v_2$};
  \node[node,below right = 1.25em and 0.5em of v1, double] (t)
    [label={[xshift=1.25em, yshift=-1.5em] $2$}] {$t$};
  \node [above=1.25em of s1] (in1) {};
  \node [above=1.25em of s2] (in2) {};
  \draw[conn] (in1) -- (s1);
  \draw[conn] (in2) -- (s2);
  \draw[conn] (s1) -- (v1);
  \draw[conn] (s2) edge [bend right=15] (t);
  \draw[conn] (s2) -- (v2);
  \draw[conn] (v1) -- (t);
  \draw[conn] (v2) -- (t);
  \node [below left=0em and 0.75em, align=flush center] at (t) {$G_2$};
\end{tikzpicture}
\caption{An illustration of Edge Swap and Edge Multiplication. 
Nodes $s_1$ and $s_2$ are the only sources: $b = \mathds{1}^{s_1} + \mathds{1}^{s_2}$ and node $t$ is the target.
The values according to t-Random Walk Betweenness are placed right to the nodes.}
\label{fig:PR_axioms}
\end{figure}
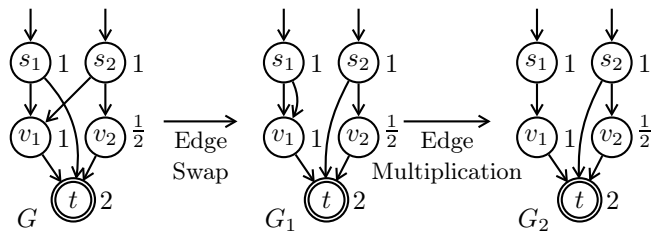

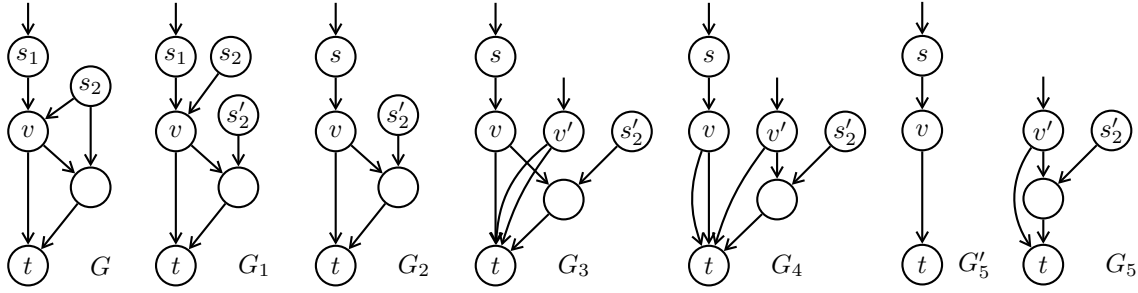
\begin{figure*}[!t]
\centering
\begin{tikzpicture}%
  \node[node] (s1)  at (10, 10) {$s_1$};
  \node[node,below right=0.0em and 1.25em of s1] (s2) {$s_2$};
  \node[node,below=1.25em of s1] (v) {$v$};
  \node[node,below right=1.0em and 1.25em of v] (w) {};
  \node[node,below=3.5em of v] (t) {$t$};
  \node[above=1.25em of s1] (in1) {};
  \draw[conn] (in1) -- (s1);
  \draw[conn] (s1) -- (v);
  \draw[conn] (s2) -- (v);
  \draw[conn] (s2) -- (w);
  \draw[conn] (v) -- (w);
  \draw[conn] (w) -- (t);
  \draw[conn] (v) -- (t);
  \node [right=2.0em, align=flush center] at (t) {$G$};
\end{tikzpicture}
\ \ \ 
\begin{tikzpicture}%
  \node[node] (s1)  at (10, 10) {$s_1$};
  \node[node,right=0.5em of s1] (s2) {$s_2$};
  \node[node,below=1.25em of s1] (v) {$v$};
  \node[node,above right=-0.5em and 1.25em of v] (s3) {$s'_2$};
  \node[node,below right=1.0em and 1.25em of v] (w) {};
  \node[node,below=3.5em of v] (t) {$t$};
  \node[above=1.25em of s1] (in1) {};
  \draw[conn] (in1) -- (s1);
  \draw[conn] (s1) -- (v);
  \draw[conn] (s2) -- (v);
  \draw[conn] (s3) -- (w);
  \draw[conn] (v) -- (w);
  \draw[conn] (w) -- (t);
  \draw[conn] (v) -- (t);
  \node [right=2.0em, align=flush center] at (t) {$G_1$};
\end{tikzpicture}
\ \ \ 
\begin{tikzpicture}%
  \node[node] (s1)  at (10, 10) {$s$};
  \node[node,below=1.25em of s1] (v) {$v$};
  \node[node,above right=-0.5em and 1.25em of v] (s3) {$s'_2$};
  \node[node,below right=1.0em and 1.25em of v] (w) {};
  \node[node,below=3.5em of v] (t) {$t$};
  \node[above=1.25em of s1] (in1) {};
  \draw[conn] (in1) -- (s1);
  \draw[conn] (s1) -- (v);
  \draw[conn] (s3) -- (w);
  \draw[conn] (v) -- (w);
  \draw[conn] (w) -- (t);
  \draw[conn] (v) -- (t);
  \node [right=2.0em, align=flush center] at (t) {$G_2$};
\end{tikzpicture}
\ \ \ 
\begin{tikzpicture}%
  \node[node] (s1)  at (10, 10) {$s$};
  \node[node,below=1.25em of s1] (v) {$v$};
  \node[node,right=1.0em of v] (v') {$v'$};
  \node[node,right=1.0em of v'] (s3) {$s'_2$};
  \node[node,below=1.0em of v'] (w) {};
  \node[node,below=3.5em of v] (t) {$t$};
  \node[above=1.25em of s1] (in1) {};
  \draw[conn] (in1) -- (s1);
  \node[above=1.25em of v'] (in2) {};
  \draw[conn] (in2) -- (v');
  \draw[conn] (s1) -- (v);
  \draw[conn] (s3) -- (w);
  \draw[conn] (v) -- (w);
  \draw[conn] (w) -- (t);
  \draw[conn] (v) -- (t);
  \draw[conn] (v') edge [bend right=10] (t);
  \draw[conn] (v') edge [bend right=25] (t);
  \node [right=2.0em, align=flush center] at (t) {$G_3$};
\end{tikzpicture}
\ \ \ 
\begin{tikzpicture}%
  \node[node] (s1)  at (10, 10) {$s$};
  \node[node,below=1.25em of s1] (v) {$v$};
  \node[node,right=1.0em of v] (v') {$v'$};
  \node[node,right=1.0em of v'] (s3) {$s'_2$};
  \node[node,below=1.0em of v'] (w) {};
  \node[node,below=3.5em of v] (t) {$t$};
  \node[above=1.25em of s1] (in1) {};
  \draw[conn] (in1) -- (s1);
  \node[above=1.25em of v'] (in2) {};
  \draw[conn] (in2) -- (v');
  \draw[conn] (s1) -- (v);
  \draw[conn] (s3) -- (w);
  \draw[conn] (v') -- (w);
  \draw[conn] (w) -- (t);
  \draw[conn] (v') edge [bend right=10] (t);
  \draw[conn] (v) -- (t);
  \draw[conn] (v) edge [bend right=20] (t);
  \node [right=2.0em, align=flush center] at (t) {$G_4$};
\end{tikzpicture}
\ \ \ 
\begin{tikzpicture}%
  \node[node] (s1)  at (10, 10) {$s$};
  \node[node,below=1.25em of s1] (v) {$v$};
  \node[node,below=3.5em of v] (t) {$t$};
  \node[node,right=3.0em of v] (v') {$v'$};
  \node[node,right=1.0em of v'] (s3) {$s'_2$};
  \node[node,below=1.0em of v'] (w) {};
  \node[node,right=3.0em of t] (t') {$t$};
  \node[above=1.25em of s1] (in1) {};
  \draw[conn] (in1) -- (s1);
  \node[above=1.25em of v'] (in2) {};
  \draw[conn] (in2) -- (v');
  \draw[conn] (s1) -- (v);
  \draw[conn] (s3) -- (w);
  \draw[conn] (v') -- (w);
  \draw[conn] (w) -- (t');
  \draw[conn] (v') edge [bend right=35] (t');
  \draw[conn] (v) -- (t);
  \node [right=1.0em, align=flush center] at (t) {$G_5'$};
  \node [right=2.0em, align=flush center] at (t') {$G_5$};
\end{tikzpicture}
\caption{The key inductive step from the proof of Theorem~\ref{theorem:random_walk_betweenness}.
We transform the original acyclic graph $G$ in a way that the number of edges is decreased, but the centrality of $v$ remains unchanged.
First, we create copies of nodes without incoming edges, each with one outgoing edge ($G_1$).
Then, using Node Redirect, we merge copies of direct predecessors of $v$ into one node $s$ ($G_2$).
Now, using the new node technique, we add a node $v'$ with the same number of outgoing edges as $v$, all to $t$, and the weight equal to the centrality of $v$ ($G_3$).
Next, we use Edge Swap to exchange the outgoing edges of $v$ with the outgoing edges of $v'$ ($G_4$).
Finally, using Locality and Edge Multiplication, we split the graph in two graphs, one which has one edge less than $G$ ($G_5$) and another simple graph for which centralities follow from previous lemmas ($G_5'$).}
\label{fig:PR_DAG_induction}
\end{figure*} 

\begin{axiom}
\label{Edge Swap}
(Edge Swap) For every graph $G = (V, E) \in \mathcal{G}_t$ with weights $b$ and edges $(v, v'), (u, u') \in E$ such that $v, u \neq t$, $F_v(G, b) = F_u(G, b)$, $|\Gamma^+_v(G)| = |\Gamma^+_u(G)|$, $G' = G - \{(v, v'),(u, u')\}+\{(v, u'),(u, v')\}\in\mathcal{G}_t$ and node $w\in V$:
$$F^{t}_w(G', b) = F^{t}_w(G, b).$$
\end{axiom}

\begin{axiom}
\label{Edge Multiplication}
(Edge Multiplication) For every graph $G = (V, E) \in \mathcal{G}_t$ with weights $b$, number $k \in \mathds{Z}_{\geq 0}$ and nodes $v, w \in V$:
$$F^{t}_w(G + k \cdot \Gamma^+_v(G), b) = F^{t}_w(G, b).$$
\end{axiom}

\emph{Edge Swap} states that swapping ends of two outgoing edges of nodes with equal centralities and out-degrees does not affect the centrality of any node. 
This means that if information packets are equally often in two nodes, then it does not matter from which of them a node has an incoming edge.
The only difference between our version of the axiom and the original one from \cite{Was:Skibski:2020:pagerank} is the fact that swapped edges cannot start in the target node.

\emph{Edge Multiplication} states that creating additional copies of the outgoing edges of a node does not affect the centrality of any node. 
This means that it is not the total number of edges that matters, but their proportions.

Now, replacing Symmetry and Direct Link Domination by Edge Swap and Edge Multiplication results in the axiomatization of t-Random Walk Betweenness.
Interestingly, our t-Random Walk Betweenness axiomatization is very similar to the axiomatization of t-Betweenness, despite these centralities being based on completely different models of transmission.

Note that Additivity, while satisfied by t-Random Walk Betweenness, is implied by other axioms (Locality, Node Redirect, Target Proxy, Edge Swap, Edge Multiplication and Atom $1$-$1$). 
Hence, it does not appear in the theorem statement.

\begin{theorem}\label{theorem:random_walk_betweenness}
A t-centrality satisfies Locality, Node Redirect, Target Proxy, Edge Swap, Edge Multiplication and Atom $1$-$1$ if and only if it is t-Random Walk Betweenness.
\end{theorem}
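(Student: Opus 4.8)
The proof runs in close parallel with that of Theorem~\ref{theorem:pagerank}; the Baseline axiom enters only to fix a single scalar, so I will stress the shared skeleton and the one point of divergence. \emph{Soundness.} First I would check that $RWB^t$ satisfies Locality, Node Redirect, Target Proxy, Edge Swap, Edge Multiplication, Additivity and Baseline $1$-$1$. The key preliminary remark is that, because $t$ is reachable from every node of any $G\in\mathcal{G}_t$, the absorbing walk on $G-\Gamma^+_t(G)$ hits $t$ almost surely, so the infinite sum defining $RWB^t$ converges and all values are finite. Granting this, each axiom reduces to an elementary identity for the expected-visit-count vector: Edge Multiplication and Edge Swap because duplicating, respectively re-routing, the outgoing edges of nodes that are visited equally often leaves the walk's transition probabilities (hence the visit counts) unchanged; Locality because an extra component meeting the rest only in $t$ produces walks that are instantly absorbed; Node Redirect, Target Proxy and Additivity by the same bookkeeping as for $PR^{a,t}$. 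Baseline $1$-$1$ holds since on $(\{s,t\},k\cdot\{(s,t)\})$ the packet visits $s$ once and is then delivered to $t$ with probability one. These are the $a\to 1$ specializations of the soundness checks for $t$-PageRank and should be routine.

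\emph{Uniqueness.} Assume $F^t$ satisfies the six axioms. As in the $t$-PageRank argument I would first derive the auxiliary properties that do not use Symmetry — Anonymity, Target Self-Loop and Additivity in the weights — and then note that Edge Multiplication applied to the source makes the values on $(\{s,t\},k\cdot\{(s,t)\})$ independent of $k$, so Baseline $1$-$1$ gives the single datum $F^t_s=F^t_t=1$ on this graph; this is the only fact about two-node graphs that the rest of the proof consumes. From here I would reproduce the $t$-PageRank proof: settle the small base cases by peeling off a two-node component with Locality and collapsing proxies via Target Proxy, and then carry out the induction on acyclic graphs depicted in Figure~\ref{fig:PR_DAG_induction} — normalize the in-degree-zero nodes to out-degree one, merge the in-neighbours of the chosen node $v$ into a single source $s$ with Node Redirect, adjoin a fresh node $v'$ with $|\Gamma^+_v(G)|$ edges to $t$ and weight $F^t_v$ (so that $F^t_{v'}=F^t_v$ by Locality together with Baseline $1$-$1$), apply Edge Swap to swing the outgoing edges of $v$ onto $t$, and split off a strictly smaller graph using Locality after rebalancing multiplicities with Edge Multiplication. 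A preliminary reduction (as in the $t$-PageRank proof) removes directed cycles, after which the acyclic induction applies. Since every operation keeps the graph in $\mathcal{G}_t$, finiteness is automatic throughout, so the induction shows the six axioms determine $F^t$ uniquely; as $RWB^t$ satisfies them, $F^t=RWB^t$.

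\emph{Main obstacle.} The delicate parts are exactly those inherited from the $t$-PageRank proof, and they are what prevents this theorem from being a formal corollary of Theorem~\ref{theorem:pagerank} (whose Baseline axiom, $1$-$a$, is incompatible with Baseline $1$-$1$). The first is the cycle-elimination reduction: one must verify it is valid at the ``$a=1$'' endpoint, i.e.\ that the geometric corrections it introduces around a strongly connected component still converge and still preserve centralities when there is no decay. They do — staying inside $\mathcal{G}_t$ forces the escape probability from any such component to be positive, so the relevant ratios are genuine probabilities strictly below one, and the absorbing walk terminates almost surely — but this must be argued rather than assumed. The second is the ``auxiliary node of weight $F^t_v$'' step followed by Edge Swap, which must be seen to go through using only the common axioms and Baseline $1$-$1$ (in particular that $v$ and $v'$ really acquire equal centralities and out-degrees). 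Once these checks are in place, substituting Baseline $1$-$1$ for Baseline $1$-$a$ at each of the few points where the $t$-PageRank proof invokes a Baseline axiom forces the decay to be $1$, and the unique centrality pinned down by the axioms is $PR^{1,t}=RWB^t$.
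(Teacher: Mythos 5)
Your proposal follows essentially the same route as the paper: the uniqueness argument is run jointly with the t-PageRank case, with all lemmas (new-node technique, siphon/siphon-split, the acyclic induction of Figure~\ref{fig:PR_DAG_induction}, and the cycle-elimination step) parameterized by the constant $a^* = F^t_t((\{s,t\},\{(s,t)\}),\mathds{1}^s)\in[0,1]$, and Baseline $1$-$1$ enters only at the end to force $a^*=1$, yielding $RWB^t = PR^{1,t}$. The points you flag as delicate (validity of the auxiliary-node-plus-Edge-Swap step and of the cycle reduction at the $a=1$ endpoint) are exactly the ones the paper handles via the k-Arrow Adjoin lemma and the induction on the number of cycles, so the sketch is sound.
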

\begin{proof}[Sketch of the proof]
It is easy to check that t-Random Walk Betweenness satisfies the axioms listed in Theorem~\ref{theorem:random_walk_betweenness} (Lemma~\ref{lemma:rwb_to_axioms} in the appendix).
Hence, in what follows, we will show that this set of axioms uniquely characterizes a centrality measure.

We begin by considering simple graphs with two nodes: the source $s$ and the target $t$ (which has $0$ weight).
\begin{itemize}
    \item First, we show that if there is only one edge, from $s$ to $t$, then the centralities of both nodes are equal $b(s)$. (Lemma~\ref{lemma:pr:one_arrow}). 
    \item Second, we show that outgoing edges from $t$ does not affect the centralities of $s$ and $t$ (Lemma~\ref{lemma:pr:almost_one_arrow}).
\end{itemize}

Building upon this, we show that if we add to a graph a new node $v$ with several edges $\{v,t\}$, then its centrality will be equal to its weight and centralities of nodes other than $t$ will not change (Lemma~\ref{lemma:pr:k_arrow_adjoin}).
This operation, that we call \emph{new node technique}, is used frequently in the remainder of the proof.
In particular, we use it to show that every node $v$ without incoming edges has centrality equal to its weight (Lemma~\ref{lemma:pr:siphon}).
Furthermore, assuming such $v$ has $k$ outgoing edges, if we replace it with $k$ copies, each with one outgoing edge and weight $b(v)/k$, then the centralities of other nodes will not change (Lemma~\ref{lemma:pr:siphon_split}).

In the next part of the proof, we show that on acyclic graphs the centrality is equal to t-Random Walk Betweenness.
The proof proceeds by induction on the number of edges (Lemmas~\ref{lemma:pr:long_arrow}--\ref{lemma:pr:dag}).
\begin{itemize}
\item If $|E| \le 1$, then the thesis follows from previous lemmas.
\item If $|E| \ge 2$, but all edges ends in $t$, then this result follows from Locality.
\item If $|E| = 2$, and not all edges ends in $t$, then we have: $G = (\{s,v,t\}, \{(s,v), (v,t)\})$. In such a case, centralities of $s$ and $v$ can be deduced by using Target Proxy and previous lemma concerning graph with only one edge. The centrality of $t$ can be obtained using Target Proxy, Edge Swap and Locality.
\item If $|E| \ge 3$, but not all edges end in $t$, then there exists a node other than the target with incoming edges.
Let $v$ be the topologically greatest node (first node in the topological order of nodes). Because it is topologically greatest, all its predecessors have no incoming edges.
Now, we show that we can transform a graph in a way that the number of edges decreases, but the centrality of $v$ remains unchanged. 
This construction is described in Figure~\ref{fig:PR_DAG_induction}.
\end{itemize}

Acyclic graphs are the basis for the induction on the number of cycles which constitutes the top level of the proof.

Now, assume that a graph contains some cycles.
We can also assume that the target node does not have any outgoing edges, as otherwise they can be deleted without changing any centralities (Lemma~\ref{lemma:pr:no_target_outlet}).

Let us discuss the inductive step of the induction on the number of cycles (Lemma~\ref{lemma:pr:determined}):
Fix a node $v$ that belongs to at least one cycle.
Using again the new node technique we add a node $v'$ with the same number of outgoing edges as $v$, all to $t$, and the weight equal to the centrality of $v$.
Now, using Edge Swap, we swap all outgoing edges of $v$ with all outgoing edges of $v'$.
In the new graph, nodes $v$ and $v'$ do not belong to any cycles, hence the number of cycles has decreased.
This concludes the proof.
\end{proof}

On the top level, our proof has a similar structure to the proof of the axiomatization of PageRank \cite{Was:Skibski:2020:pagerank}.
However, there are several differences. 
First of all, three out of six axioms (Locality, Target Proxy and Atom $1$-$1$) have not appeared in the axiomatization of PageRank.
Second of all, these axioms which have appeared have additional constraints, excluding their use on the target node.
Third, in our paper, we restrict ourselves to graphs in which the target node is reachable which makes many constructions from the original proof impossible.
As a result, five out of nine our lemmas for Random Walk Betweenness do not have direct counterparts in the mentioned paper.
Moreover, even some lemmas that have counterparts have vastly different proofs (in particular, key Lemma~\ref{lemma:pr:dag} presented in Figure~\ref{fig:PR_DAG_induction}).

\section{Conclusions}
We proposed the first axiomatization of three medial centralities: Betweenness, Stress and Random Walk Betweenness.
We focused on a setting with one target node and arbitrarily many source nodes.
This allowed us to focus on the key aspect of these measures.
We specified several properties which are satisfied by all three centrality measures. 
Also, we proposed axioms specific for Betweenness and Stress and determined axioms specific for Random Walk Betweenness which highlights the differences between these two approaches.
Our characterization not only deepens the understanding of centrality measures, but also could help in choosing a centrality measure for a specific application at hand.

Our work can be extended in many ways. 
The ultimate goal would be to create an axiomatization of considered measures in a setting with arbitrary many targets. 
This is, however, challenging as it precludes the use of axioms that do not apply to target nodes.
Also, some axioms (e.g., Node Redirect) are not satisfied by Betweenness, Stress and Random Walk Betweenness in their general form with multiple targets.

Another interesting question is how to include in the axiom system other, less popular medial centralities, such as Flow Betweenness Centrality~\cite{Freeman:etal:1991}.
Interestingly, Flow Betweenness Centrality satisfies several of the axioms that we considered (in particular, Locality, Additivity, Symmetry and Atom $k$-$k$) although it is based on yet another model of transition.
Finally, undirected or edge-weighted graphs can be considered.

\section*{Acknowledgements}
Wiktoria Ko{\'s}ny and Oskar Skibski were supported by the Polish National Science Centre Grant No. 2018/31/B/ST6/03201.

\bibliographystyle{plainnat} 
\bibliography{bibliography}

\appendix
\clearpage

\newgeometry{margin=2.5cm}

\begin{multicols}{2}
\section{Proof of Theorem~\ref{theorem:betweenness}}

In this section, we present the full proof of Theorem~\ref{theorem:betweenness}.
We start by showing that t-Betweenness Centrality satisfies the axioms.
In the main part, we prove that the axioms uniquely characterize a centrality measure.

\subsection{t-Betweenness \texorpdfstring{$\Rightarrow$}{=>} Axioms}
We will now consider each of the axioms: Locality, Additivity, Node Redirect, Target Proxy, Symmetry, Direct Link Domination, Atom 1-1 and show that t-Betweenness Centrality satisfies it.

\begin{lemma}\label{lemma:b_to_axioms}
t-Betweenness Centrality satisfies Locality, Additivity, Node Redirect, Target Proxy, Symmetry, Direct Link Domination and Atom $1$-$1$.
\end{lemma}

Let $\Pi_{v, u}(G)$ denote the set of shortest paths from $v$ to $u$ in $G$ (in particular, $\sigma_{v, u}(G) = |\Pi_{v, u}(G)|$).

\subsubsection{Locality:} 
Let us notice that for every $s \in V$ no shortest path from $s$ to $t$ in $G+G'$ includes any edge from $E'$. 
Let us assume such a path exists and let $(v', u')$ be the first edge from $E'$ in this path. 
We have $v' \in V'$, but -- since up to this point the path was in $G$ -- also $v' \in V$, so $v' = t$.
Hence, the path is not a shortest path to $t$. 
This means that for every $s \in V$ it holds $\Pi_{s, t}(G + G') = \Pi_{s, t}(G)$.
In particular, for every $w \in V$, $w' \in V' - \{t\}$ we have:
    \[\sigma_{s,t}(G + G', w) = \sigma_{s,t}(G, w),\]
    \[\sigma_{s,t}(G + G', w') = 0,\]
    \[\sigma_{s,t}(G + G') = \sigma_{s,t}(G).\]

This gives:
\begin{align*}
&B^t_w((G, b) + (G', b'))\\
&= \sum_{s \in V \cup V'} (b+b')(s) \cdot \frac{\sigma_{s,t}(G+G', w)}{\sigma_{s,t}(G+G')}\\
&= \sum_{s \in V - \{t\}} b(s) \cdot \frac{\sigma_{s,t}(G, w)}{\sigma_{s,t}(G)}
   +(b(t) + b'(t)) \cdot 0\\  
&+ \sum_{s' \in V' - \{t\}} b'(s') \cdot \frac{0}{\sigma_{s',t}(G+G')} 
 = B^t_w(G, b).
\end{align*}
    
As for the centrality of $t$, by noting that the empty path is always the only shortest path from $t$ to $t$ we get:
\begin{align*}
&B^t_t((G, b) + (G', b'))\\
&= \sum_{s \in V \cup V'} (b+b')(s) \cdot \frac{\sigma_{s,t}(G+G', t)}{\sigma_{s,t}(G+G')}\\
&= \sum_{s \in V - \{t\}} b(s) \cdot \frac{\sigma_{s,t}(G, t)}{\sigma_{s,t}(G)}
   +b(t) + b'(t)\\  
&+ \sum_{s' \in V' - \{t\}} b'(s') \cdot \frac{\sigma_{s',t}(G', t)}{\sigma_{s',t}(G')}\\
&= B^t_t(G, b) + B^t_t(G', b').
\end{align*}

\subsubsection{Additivity:} 
From the definition we have:
\begin{align*}
&B^t_w(G, b + b') = \sum_{s \in V} (b(s) + b'(s)) \cdot \frac{\sigma_{s,t}(G, w)}{\sigma_{s,t}(G)}\\
&= \sum_{s \in V} b(s) \cdot \frac{\sigma_{s,t}(G, w)}{\sigma_{s,t}(G)} + 
   \sum_{s \in V} b'(s) \cdot \frac{\sigma_{s,t}(G, w)}{\sigma_{s,t}(G)}\\ 
&= B^t_w(G, b) + B^t_w(G, b').
\end{align*}
        
\subsubsection{Node Redirect:} 
Let $(G', b') = R_{u \rightarrow v}(G, b)$.
We know shortest paths from $v$ to $t$ are the same in $G$ as in $G'$, as $u$, the out-twin of $v$, cannot be on a shortest path to $t$ in $G$.
Hence, $\Pi_{v, t}(G') = \Pi_{v, t}(G)$.
Moreover, since $u$ is the out-twin of $v$, they have the same shortest paths to $t$ in $G$, up to appropriately changing first edge, that is $\Pi_{v, t}(G') = \{((v, w_1), ... (w_k, t)) : ((u, w_1), ... (w_k, t)) \in \Pi_{u, t}(G)\}$.
In particular, for every $w \in V$ we have:
    \[\sigma_{v, t}(G', w) = \sigma_{v,t}(G, w) = \sigma_{u,t}(G, w),\]
    \[\sigma_{v, t}(G') = \sigma_{v,t}(G) = \sigma_{u,t}(G).\] 
 
Now, fix $s \in V - \{v, u\}$.
Clearly, shortest paths from $s$ to $t$ that do not go through $v$ or $u$ are the same in $G$ as in $G'$.
Consider paths that go through $v$ or $u$. 
Each such a path goes through one of the incoming edges of $v$ or $u$.
For every path from $s$ to $t$ in $G$ that goes through some edge to $v$ there is a path from $s$ to $t$ in $G'$ that also goes through this edge.
For every path from $s$ to $t$ in $G$ that goes through an edge $(w,u)$ to $u$ there is a path from $s$ to $t$ in $G'$ that goes through the corresponding edge $(w, v)$ and through node $v$.
And similarly, for every path from $s$ to $t$ in $G'$ that goes through $(w,v)$ to $v$ there is a corresponding path from $s$ to $t$ that is either the same, or the same apart from going through $(w, u)$ instead.
Let $R_{u \rightarrow v}(((w_1, w_2), ..., (w_i, u), (u, w_{i+2}), ..., (w_{k-1}, w_{k}))) = ((w_1, w_2), \dots, (w_i, v), (v, w_{i+2}), \dots, (w_{k-1}, w_{k})$.
We get $\Pi_{v, t}(G') = \Pi_{v, t}(G) + \{R_{u \rightarrow v}(\pi) : \pi \in \Pi_{u, t}(G)\}$.
In particular, for every $w \in V - \{v, v'\}$ we have:
    \[\sigma_{s, t}(G', w) = \sigma_{s,t}(G, w),\]
    \[\sigma_{s, t}(G', v) = \sigma_{s,t}(G, v) + \sigma_{s,t}(G, u),\]
    \[\sigma_{s, t}(G') = \sigma_{s,t}(G).\]

This gives:
\begin{align*}
&B^t_v(G', b') \\
&= \sum_{s \in V - \{v, u\}} b'(s) \frac{\sigma_{s,t}(G', v)}{\sigma_{s,t}(G')} 
    + b'(v) \frac{\sigma_{v,t}(G', v)}{\sigma_{v,t}(G')}\\
&= \sum_{s \in V - \{v, u\}} b(s) \frac{\sigma_{s,t}(G, v) + \sigma_{s,t}(G, u)}{\sigma_{s,t}(G)}\\ 
&+ (b(v) + b(u)) \frac{\sigma_{v,t}(G, v)}{\sigma_{v,t}(G)}\\
&= B^t_v(G, b) + B^t_u(G, b)
\end{align*}

and for every $w \in V - \{v, u\}$:
\begin{align*}
&B^t_w(G', b') \\
&= \sum_{s \in V - \{v, u\}} b'(s) \frac{\sigma_{s,t}(G', w)}{\sigma_{s,t}(G')} 
    + b'(v) \frac{\sigma_{v,t}(G', w)}{\sigma_{v,t}(G')}\\
&= \sum_{s \in V - \{v, u\}} b(s) \frac{\sigma_{s,t}(G, w)}{\sigma_{s,t}(G)} 
    + (b(v) + b(u)) \frac{\sigma_{v,t}(G, w)}{\sigma_{v,t}(G)}\\
&= B^t_w(G, b).
\end{align*}

\subsubsection{Target Proxy:}
Let $G' = M_{t \rightarrow v}(G, b)$.
Let us notice for every $s \in V - \{t\}$ every path from $s$ to $t$ has $(v, t)$ as the last edge, because it is the only edge ending in $t$.
Hence, for every shortest path from $s$ to $t$ in $G$ there is a path from $s$ to $v$ in $G'$ that lacks only the last edge and for every shortest path from $s$ to $v$ in $G'$ there is a path from $s$ to $t$ in $G$ that is the same apart from the additional edge $(v, t)$ at the end.
This means that for every $s \in V$ it holds $\Pi_{s, v}(G') = \{(e_1, \dots, e_{k-1}): (e_1, \dots, e_{k-1}, e_k) \in \Pi_{s, t}(G)\}$.
In particular, for every $w \in V - \{t\}$ we have:
    \[\sigma_{s, v}(G', w) = \sigma_{s,t}(G, w),\]
    \[\sigma_{s, v}(G') = \sigma_{s,t}(G).\]

This gives for every $w \in V - \{t\}$:
\begin{align*}
B^v_w(G', b) &= \sum_{s \in V - \{t\}} b(s) \cdot \frac{\sigma_{s,v}(G', w)}{\sigma_{s,v}(G')}\\
             &= \sum_{s \in V - \{t\}} b(s) \cdot \frac{\sigma_{s,t}(G, w)}{\sigma_{s,t}(G)} 
             + 0 \cdot \frac{\sigma_{t,t}(G', w)}{\sigma_{t,t}(G')}\\
             &= B^t_w(G, b).
\end{align*}

\subsubsection{Symmetry:} 
Fix a path $\pi = ((v_1, v_2), \dots, (v_{k-1}, v_k))$ and let $\pi^R = ((v_k, v_{k-1}), \dots, (v_2, v_1))$.
Note that for every path $\pi$ from $s$ to $t$ in $G$ there is a path $\pi^R$ from $t$ to $s$ in $G'$.
This means that for every $s \in V$ it holds $\Pi_{t, s}(G') = \{\pi^R: \pi \in \Pi_{s, t}(G)\}$.
In particular, for every $w \in V$ we have:
    \[\sigma_{t,s}(G', w) = \sigma_{s,t}(G, w),\]
    \[\sigma_{t,s}(G') = \sigma_{s,t}(G).\]

This gives for every $w \in V$:
\begin{align*}
B^s_w(G', \mathds{1}^t) = \frac{\sigma_{t,s}(G', w)}{\sigma_{t,s}(G')}\! 
                       =\!\frac{\sigma_{s,t}(G, w)}{\sigma_{s,t}(G)} = B^t_w(G, \mathds{1}^s).
\end{align*}

\subsubsection{Direct Link Domination:}
Let $G' = G - \{(v, u)\}$.
We know deleted edge $(v, u)$ is not a part of any shortest path from any $s$ to $t$, because if it was, then the path having the same edges from $s$ to $v$ and then edge $(v, t)$ would be shorter.
This means that for every $s \in V$ it holds $\Pi_{s, t}(G') = \Pi_{s, t}(G)$.
In particular, for every $w \in V$ we have:
    \[\sigma_{s,t}(G', w) = \sigma_{s,t}(G, w),\]
    \[\sigma_{s,t}(G') = \sigma_{s,t}(G).\]

This gives for every $w \in V$:
\begin{align*}
B^t_w(G', b) &= \sum_{s \in V} b(s) \cdot \frac{\sigma_{s,t}(G', w)}{\sigma_{s,t}(G')}
           \\&= \sum_{s \in V} b(s) \cdot \frac{\sigma_{s,t}(G, w)}{\sigma_{s,t}(G)} = B^t_w(G, b).
\end{align*}

\subsubsection{Atom \texorpdfstring{$1$-$1$}{1-1}:}
In the graph $(\{s,t\}, k \cdot \{(s,t)\})$ from Atom axioms each edge constitutes a shortest path from $s$ to $t$, and both nodes are on all of these paths.
From definitions we have:
    \[B^t_s(G, \mathds{1}^s) = \frac{\sigma_{s,t}(G, s)}{\sigma_{s,t}(G)}\!=\!\frac{k}{k}\! =\!\frac{\sigma_{s,t}(G, t)}{\sigma_{s,t}(G)} = B^t_t(G, \mathds{1}^s).\]

\subsection{Axioms \texorpdfstring{$\Rightarrow$}{=>} t-Betweenness}

We will now prove that there is at most one centrality that satisfies Locality, Additivity, Node Redirect, Target Proxy, Symmetry, Direct Link Domination and Atom $1$-$1$.
Combined with the fact that t-Betweenness satisfies the axioms, we get that the axioms determine the centrality to be t-Betweenness.

First, let us introduce an auxiliary axiom, Atom $f(k)$-$f(k)$, as a generalization of both Atom $1$-$1$ and Atom $k$-$k$.
It allows us to construct a joint proof of Theorems~\ref{theorem:betweenness} and~\ref{theorem:stress}.
\begin{axiom}
\label{AtomXX}
(Atom $f(k)$-$f(k)$) There exists a function $f: \mathds{N}_+ \rightarrow \mathds{R}_{\geq 0}$, such that for every node $s$, natural number $k \in \mathds{N}_+$ and graph $G = (\{s,t\}, k \cdot \{(s,t)\})$:
$$F^t_s(G, \mathds{1}^s) = f(k) = F^t_t(G, \mathds{1}^s).$$
\end{axiom}

\begin{lemma}\label{lemma:all:anonymity}
(Anonymity) Let $G = (V,E) \in \mathcal{G}_t$ be an arbitrary graph with node $v \neq t$ and assume $v' \not \in V$. 
Let $f: V \rightarrow V - \{v\} \cup \{v'\}$ be a renaming function such that $f(v) = v'$ and $f(w) = w$ for $w \in V - \{v\}$ and let $G' = (\{f(w) : w \in V\}, \{(f(w), f(w')): (w, w') \in E\})$ and $b' = b \cdot f^{-1}$.
Now, if $F$ satisfies Node Redirect, then for every node $u \in V$: 
    \[F^{t}_{f(u)}(G', b') = F^t_u(G, b).\]
\end{lemma}

\begin{proof}
Consider a graph $G''$ obtained from $G$ by adding node $v'$ as the out-twin of node $v$ with no incoming edges and zero weight:
$G'' = (V \cup \{v'\}, E + \{(v', w): (v, w) \in E\})$ and $b''(w) = b''(w)$ for every $w \in V$, $b''(v')=0$.
Since $R_{v'\rightarrow v}(G'') = G$, from Node Redirect we know that for every $u \in V - \{v\}$:
\begin{align} 
    F^t_u(G, b) & = F^t_{u}(G'', b''),\label{eq:all:anonymity:G''_u}\\
    F^t_v(G, b) & = F^t_{v}(G'', b'') + F^t_{v'}(G'', b'').\label{eq:all:anonymity:G''_v}
\end{align}

Now, observe that $R_{v \rightarrow v'}(G'', b'') = (G', b')$.
Hence, from Node Redirect we know that for every node $u \in V - \{v\}$:
\begin{align} 
    F^t_{u}(G', b') & = F^t_u(G'', b''),\label{eq:all:anonymity:G'_u}\\
    F^t_{v'}(G', b') & = F^t_{v}(G'', b'') + F^t_{v'}(G'', b'').\label{eq:all:anonymity:G'_v}
\end{align}
Hence, from \eqref{eq:all:anonymity:G''_u} and~\eqref{eq:all:anonymity:G'_u} we get that for every $u \in V - \{v\}$ it holds $F^t_{u}(G', b') = F^t_u(G, b)$ and from \eqref{eq:all:anonymity:G''_v} and~\eqref{eq:all:anonymity:G'_v} we get $F^t_{v'}(G', b') = F^t_v(G, b)$.
\end{proof}

\begin{lemma}\label{lemma:all:t_loop}
(Target Self-Loop) If $F$ satisfies Locality, Node Redirect and Target Proxy then for every graph $G = (V, E) \in \mathcal{G}_t$ such that $(t, t) \in E$ and node $v \in V$: 
    \[F^t_v(G - \{(t, t)\}, b) = F^t_v(G, b).\]
\end{lemma}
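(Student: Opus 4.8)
The plan is to realise $G$ as a Locality-sum of two graphs meeting only at $t$, where the second summand is the isolated node $t$ carrying the self-loop with zero weight, and then to treat separately the two halves of the Locality conclusion. Set $L := (\{t\},\{(t,t)\})$ and let $\mathbf{0}$ denote the all-zero weight function (on whatever node set is relevant). Then $(G,b) = (G-\{(t,t)\},b) + (L,\mathbf{0})$: the node sets meet exactly in $\{t\}$, and both summands lie in $\mathcal{G}_t$ (deleting a self-loop on $t$ cannot destroy reachability of $t$, since a path ending at $t$ never needs it), so Locality applies. For every $w\in V-\{t\}$ it yields $F^t_w(G,b) = F^t_w(G-\{(t,t)\},b)$, which is exactly the claim for non-target nodes. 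For $w=t$ it yields $F^t_t(G,b) = F^t_t(G-\{(t,t)\},b) + F^t_t(L,\mathbf{0})$, so it remains only to prove the degenerate identity $F^t_t(L,\mathbf{0}) = 0$.

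To establish $F^t_t(L,\mathbf{0})=0$ I would first use Target Proxy to replace the single-node graph $L$ by a two-node graph. Pick a fresh node $t'$ and consider $H := (\{t',t\},\{(t,t')\})$ with target $t'$ and all-zero weights: here $\Gamma^-_{t'}(H)=\{(t,t')\}$, $\Gamma^+_t(H)=\{(t,t')\}$, the target has weight $0$, and $M_{t'\to t}(H,\mathbf{0})=(L,\mathbf{0})$, so Target Proxy gives $F^t_t(L,\mathbf{0}) = F^{t'}_t(H,\mathbf{0})$. Now $H$ has a genuine non-target node, so the value $F^{t'}_t(H,\mathbf{0})$ can be ``doubled and collapsed''. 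Take another fresh node $\tau$ and form $H^+ := H + (\{t',\tau\},\{(\tau,t')\})$ with zero weights; the two summands lie in $\mathcal{G}_{t'}$ and share only $t'$, so Locality gives $F^{t'}_t(H^+,\mathbf{0}) = F^{t'}_t(H,\mathbf{0})$ and, after identifying the second summand with $H$ via Anonymity (Lemma~\ref{lemma:all:anonymity}), $F^{t'}_\tau(H^+,\mathbf{0}) = F^{t'}_t(H,\mathbf{0})$ as well. Since $t$ and $\tau$ are out-twins in $H^+$ and $R_{\tau\to t}(H^+,\mathbf{0}) = H$, Node Redirect gives $F^{t'}_t(H,\mathbf{0}) = F^{t'}_t(H^+,\mathbf{0}) + F^{t'}_\tau(H^+,\mathbf{0}) = 2\,F^{t'}_t(H,\mathbf{0})$, and non-negativity of centralities forces $F^{t'}_t(H,\mathbf{0})=0$. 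Hence $F^t_t(L,\mathbf{0}) = 0$, and combining with the first paragraph completes the proof for all nodes $v$.

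The only real obstacle is this degenerate identity: Node Redirect needs two out-twins distinct from the target, but $L$ has a single node (the target itself), so the self-cancellation trick cannot be run inside $L$; Target Proxy is needed precisely to manufacture the auxiliary node $t$ inside $H$ (and then $\tau$) before the $2x=x$ argument applies. (With Additivity in hand it would be immediate via $F^t_t(L,\mathbf{0}+\mathbf{0}) = 2F^t_t(L,\mathbf{0})$, but the stated hypotheses are only Locality, Node Redirect and Target Proxy.) Apart from that, the proof is routine: at each step one just has to verify that all auxiliary graphs stay in the relevant class $\mathcal{G}_{(\cdot)}$ and that the indicated merge and redirection produce exactly the stated graphs, which is immediate from the definitions.
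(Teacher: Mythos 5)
Your proof is correct and follows essentially the same route as the paper: Locality with the zero-weighted summand $(\{t\},\{(t,t)\})$ reduces everything to showing $F^t_t((\{t\},\{(t,t)\}),\mathbf{0})=0$, which is then obtained via Target Proxy (introducing a fresh target) followed by the out-twin duplication/Node Redirect argument yielding $x=2x$. The only cosmetic difference is that you build the three-node auxiliary graph as a Locality-sum and apply Node Redirect forwards, whereas the paper phrases the same step as ``splitting $t$ into out-twins''; the two readings are identical.
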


\begin{proof}
Consider graph $(\{t\}, \{(t, t)\})$ in which node $t$ has zero weight (we will denote the zero weight function by $0$).
From Locality we know that for every node $v \in V - \{t\}$:
\begin{align*} 
F^t_v(G, b) & = F^t_v(G - \{(t, t)\}, b) \\
F^t_t(G, b) & = F^t_t(G - \{(t, t)\}, b) + F^t_t((\{t\}, \{(t, t)\}), 0).
\end{align*}
Thus, it remains to show that $F^t_t((\{t\}, \{(t, t)\}), 0) = 0$.

To this end, first consider graph $(\{t, r\}, \{(t, r)\})$ with zero weight function.
From Target Proxy we get:
\begin{equation} \label{eq:all:t_loop:TP_t}
F^t_t((\{t\}, \{(t, t)\}), 0) = F^r_t((\{t, r\}, \{(t, r)\}), 0).
\end{equation}
Now, let us split $t$ into two out-twins $t$ and $t'$, obtaining $G' = (\{t, t', r\}, \{(t, r), (t', r)\})$.
From Node Redirect we get:
\begin{equation} \label{eq:all:t_loop:NR_t}
    F^r_t((\{t, r\}, \{(t, r)\}), 0) = F^r_t(G', 0) + F^r_{t'}(G', 0).
\end{equation}
From Locality we know that:
\begin{align}
    F^r_t(G', 0) & = F^r_t((\{t, r\}, \{(t, r)\}), 0), \label{eq:all:t_loop:LOC_t}\\
    F^r_{t'}(G', 0) & = F^r_{t'}((\{t', r\}, \{(t', r)\}), 0).\label{eq:all:t_loop:LOC_t_2}
\end{align}
However, Lemma~\ref{lemma:all:anonymity} (Anonymity) implies that the right-hand sides of \eqref{eq:all:t_loop:LOC_t} and \eqref{eq:all:t_loop:LOC_t_2} are equal. 
Hence, \eqref{eq:all:t_loop:NR_t} simplifies to:
\[ F^r_t((\{t, r\}, \{(t, r)\}), 0) = 2 \cdot F^r_t((\{t, r\}, \{(t, r)\}), 0) \]
which implies $F^r_t((\{t, r\}, \{(t, r)\}), 0) = 0$ and from \eqref{eq:all:t_loop:TP_t} also $F^t_t((\{t, t\}, \{(t, t)\}), 0) = 0$. 
This concludes the proof.
\end{proof}

\begin{lemma}\label{lemma:b:no_target_outlet}
(No Target Outlet) If $F$ satisfies Locality, Node Redirect, Target Proxy and Direct Link Domination then for every graph $G = (V, E) \in \mathcal{G}_t$ such that $(t, v) \in E$ and node $u \in V$: 
    \[F^t_u(G - \{(t, v)\}, b) = F^t_u(G, b).\]
\end{lemma}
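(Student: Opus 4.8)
The plan is to turn the deletion of the single target-outgoing edge $(t,v)$ into one application of Direct Link Domination, after temporarily installing a self-loop at $t$ so that Direct Link Domination becomes applicable with $t$ itself in the role of the ``dominating'' node, its direct link to the target being the self-loop.

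First I would dispose of the degenerate case $v=t$: then $G-\{(t,v)\}=G-\{(t,t)\}$ and the claim is literally Target Self-Loop (Lemma~\ref{lemma:all:t_loop}). So assume $v\neq t$. Reading Lemma~\ref{lemma:all:t_loop} in reverse (apply it to $G+\{(t,t)\}$, which is still in $\mathcal{G}_t$ because adding a self-loop changes no reachabilities), we get $F^t_u(G,b)=F^t_u(G+\{(t,t)\},b)$ for every node $u$. In the graph $G+\{(t,t)\}$ both $(t,t)$ and $(t,v)$ are edges and $v\neq t$, so Direct Link Domination applied with ``$v$''$=t$, ``$u$''$=v$ gives $F^t_u\big((G+\{(t,t)\})-\{(t,v)\},b\big)=F^t_u(G+\{(t,t)\},b)$; the only side condition to check is that $(G+\{(t,t)\})-\{(t,v)\}=(G-\{(t,v)\})+\{(t,t)\}$ still belongs to $\mathcal{G}_t$, which holds since deleting an outgoing edge of $t$ never destroys a path that ends at $t$. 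Finally, one more application of Target Self-Loop removes the auxiliary self-loop: $F^t_u\big((G-\{(t,v)\})+\{(t,t)\},b\big)=F^t_u(G-\{(t,v)\},b)$. Chaining the three equalities yields $F^t_u(G,b)=F^t_u(G-\{(t,v)\},b)$, as required; note that the hypothesised axioms Locality, Node Redirect and Target Proxy enter only through Lemma~\ref{lemma:all:t_loop}, while Direct Link Domination is used directly.

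The step I expect to be the real obstacle is conceptual rather than computational: recognising that Direct Link Domination — usually read as ``a node with a shortcut to $t$ may forget its other outgoing edges'' — can be bootstrapped at the target node by momentarily attaching a self-loop to $t$, an edge that Target Self-Loop lets us add and delete at no cost. Once that trick is spotted, every remaining obligation (the $\mathcal{G}_t$-membership checks and the multiset identity $(G+\{(t,t)\})-\{(t,v)\}=(G-\{(t,v)\})+\{(t,t)\}$) is immediate. For completeness I might also record the mild strengthening obtained the same way: deleting the whole multiset $\Gamma^+_t(G)$ preserves all centralities — peel the non-loop outgoing edges of $t$ off one at a time by Direct Link Domination against the installed self-loop, then clear the self-loops by Target Self-Loop — which, applied to both $G$ and $G-\{(t,v)\}$, gives the lemma as a corollary; but the direct three-line argument above is shorter.
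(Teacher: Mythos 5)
Your proposal is correct and follows exactly the paper's own argument: add a self-loop at $t$ via Target Self-Loop, invoke Direct Link Domination with the self-loop serving as $t$'s direct link to the target to delete $(t,v)$, then remove the self-loop again. No gaps.
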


\begin{proof}
Fix $u \in V$.
If $v = t$, then the thesis follows from Lemma~\ref{lemma:all:t_loop} (Target Self-Loop).

Assume otherwise. 
From Lemma~\ref{lemma:all:t_loop} (Target Self-Loop) we know that adding a self-loop to $t$ does not change the centrality of $u$:
\begin{equation}\label{eq:b:no_target_outlet:1}
F^t_u(G, b) = F^t_u(G + \{(t, t)\}, b).
\end{equation}
Now, since $t$ has an edge to $t$ in the former graph, from Direct Link Domination we know that removing edge $(t,v)$ does not affect centralities in the graph:
\begin{equation}\label{eq:b:no_target_outlet:2}
F^t_u(G + \{(t, t)\}, b) = F^t_u(G + \{(t, t)\} - \{(t, v)\}, b).
\end{equation}
Finally, using Lemma~\ref{lemma:all:t_loop} (Target Self-Loop) again, we can delete a self-loop of $t$:
\begin{equation}\label{eq:b:no_target_outlet:3}
F^t_u(G + \{(t, t)\} - \{(t, v)\}, b) = F^t_u(G - \{(t, v)\}, b).
\end{equation}
Combining  \eqref{eq:b:no_target_outlet:1}, \eqref{eq:b:no_target_outlet:2} and \eqref{eq:b:no_target_outlet:3} concludes the proof.
\end{proof}

\begin{lemma}\label{lemma:b:dist_0}
(Distance 0) If $F$ satisfies Locality, Additivity, Node Redirect, Target Proxy and Atom $f(k)$-$f(k)$, then for every graph $G = (V,E) \in \mathcal{G}_t$ and node $v \in V - \{t\}$:
    \[F^t_v(G, \mathds{1}^t) = 0\]
and $F^t_t(G, \mathds{1}^t) = f(1)$.
\end{lemma}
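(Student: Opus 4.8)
The plan is to collapse an arbitrary $G\in\mathcal{G}_t$ down to the one-node graph $(\{t\},\emptyset)$ using Locality and Additivity, and then determine $F^t_t(\{t\},\emptyset,\mathds{1}^t)$ from a Baseline axiom via Target Proxy and Target Self-Loop.

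First I would decompose $(G,\mathds{1}^t)$ as the sum of weighted graphs $(G,0)+((\{t\},\emptyset),\mathds{1}^t)$. Since $t\in V$, the two node sets meet exactly in $\{t\}$ and both summands lie in $\mathcal{G}_t$, so Locality applies: for every $w\in V-\{t\}$ it gives $F^t_w(G,\mathds{1}^t)=F^t_w(G,0)$, and for the target $F^t_t(G,\mathds{1}^t)=F^t_t(G,0)+F^t_t(\{t\},\emptyset,\mathds{1}^t)$. Additivity applied with $b=b'=0$ forces $F^t_v(G,0)=0$ for every node $v$ (a non-negative real equal to twice itself is zero). This already establishes $F^t_w(G,\mathds{1}^t)=0$ for all $w\neq t$, and reduces the claim for the target to proving $F^t_t(\{t\},\emptyset,\mathds{1}^t)=1$.

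To evaluate the single-node case I would pass through a Baseline graph. By Lemma~\ref{lemma:all:t_loop} (Target Self-Loop), $F^t_t(\{t\},\emptyset,\mathds{1}^t)=F^t_t(\{t\},\{(t,t)\},\mathds{1}^t)$. Introduce a fresh node $r$ and consider $H=(\{t,r\},\{(t,r)\})$ with target $r$ and weights $\mathds{1}^t$; then $H\in\mathcal{G}_r$, we have $\Gamma^-_r(H)=\{(t,r)\}=\Gamma^+_t(H)$, and the weight of $r$ is $0$, so Target Proxy applies with proxy node $t$. Merging $r$ into $t$ turns the edge $(t,r)$ into the self-loop $(t,t)$ and puts weight $1$ on $t$, i.e. $M_{r\to t}(H,\mathds{1}^t)=(\{t\},\{(t,t)\},\mathds{1}^t)$, so Target Proxy gives $F^t_t(\{t\},\{(t,t)\},\mathds{1}^t)=F^r_t(H,\mathds{1}^t)$. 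Finally $H$ is precisely a Baseline graph with source $t$, target $r$ and $k=1$ parallel edges; whichever of Baseline $1$-$1$ or Baseline $k$-$k$ is assumed, the $k=1$ instance yields value $1$, so $F^r_t(H,\mathds{1}^t)=1$. Chaining the equalities gives $F^t_t(\{t\},\emptyset,\mathds{1}^t)=1$ and hence $F^t_t(G,\mathds{1}^t)=1$.

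I do not expect a genuine obstacle here: the lemma is a warm-up that fixes the normalization of $F$. The only care needed is bookkeeping — verifying that each auxiliary graph lies in the appropriate class $\mathcal{G}_{(\cdot)}$, that all of Target Proxy's structural preconditions hold for $H$ after inserting $r$, and that the Baseline axiom is used only in its $k=1$ instance, where the $1$-$1$ and $k$-$k$ versions coincide. As in the proof of Lemma~\ref{lemma:all:t_loop}, we treat the axioms as available for any choice of target node.
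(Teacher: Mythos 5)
Your proof is correct and follows essentially the same route as the paper's: split off $(\{t\},\emptyset)$ with unit weight via Locality, kill $(G,0)$ with Additivity, and pin down the single-node value by chaining Target Self-Loop, Target Proxy (with the fresh target $r$ and proxy $t$), and the $k=1$ instance of the Baseline axiom. The only difference is that you traverse this chain of equalities in the opposite direction from the paper, which is immaterial.
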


\begin{proof}
Consider graph $(\{t\}, \{\})$ in which node $t$ has weight~$1$.
From Locality we know that for every $v \in V - \{t\}$:
\begin{align*}
F^t_v(G, \mathds{1}^t) & = F^t_v(G, 0), \\
F^t_t(G, \mathds{1}^t) & = F^t_t(G, 0) + F^t_t((\{t\}, \{\}), \mathds{1}^t).
\end{align*}
Now, in a graph with zero weights all centralities are equal zero which follows from Additivity, as for every $u \in V$ it holds:
\[ F^t_u(G,0) = F^t_u(G,0) + F^t_u(G,0). \]
Thus, it remains to prove that $F^t_t((\{t\}, \{\}), \mathds{1}^t)$ is determined.

To this end, let us consider graph $(\{t, r\}, \{(t, r)\})$ with weight $\mathds{1}^t$.
From Atom $f(k)$-$f(k)$ we know that: 
\begin{equation}\label{eq:b:dist_0:1}
F^r_t((\{t, r\}, \{(t, r)\}), \mathds{1}^t) = f(1).
\end{equation}
Now, from Target Proxy:
\begin{equation}\label{eq:b:dist_0:2}
F^t_t((\{t, r\}, \{(t, r)\}), \mathds{1}^t) = F^t_t((\{t\}, \{(t, t)\}), \mathds{1}^t).
\end{equation}
However, from Lemma~\ref{lemma:all:t_loop} (Target Self-Loop) we know that removing self-loop of $t$ does not affect its centrality. Hence:
\begin{equation}\label{eq:b:dist_0:3}
F^t_t((\{t\}, \{(t, t)\}), \mathds{1}^t) = F^t_t((\{t\}, \{\}), \mathds{1}^t).
\end{equation}
By combining \eqref{eq:b:dist_0:1}, \eqref{eq:b:dist_0:2} and \eqref{eq:b:dist_0:3} we get the thesis.
\end{proof}

\begin{lemma}\label{lemma:b:dist_1}
(Distance 1) If $F$ satisfies Locality, Additivity, Node Redirect, Direct Link Domination and Atom $f(k)$-$f(k)$, then for every graph $G = (V, E) \in \mathcal{G}_t$, source node $s \neq t$, such that $(s, t) \in E$ and node $v \in V - \{s, t\}$: 
    \[F^t_v(G, \mathds{1}^s) = 0,\]
and $F^t_s(G, \mathds{1}^s) = F^t_t(G, \mathds{1}^s) = f(m_{(s,t)}(E))$.
\end{lemma}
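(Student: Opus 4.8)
The plan is to peel $G$ down to the two-node Baseline graph, using only the axioms at hand — crucially, Target Proxy (and therefore the derived Target Self-Loop and No Target Outlet) is \emph{not} available here, so the argument must route everything through Locality, Additivity, Node Redirect and Direct Link Domination. First I would use Direct Link Domination to simplify the source: since $(s,t)\in E$, every other outgoing edge of $s$ — self-loops $(s,s)$ included, as they end in a node $\ne t$ — can be deleted one at a time without affecting any centrality, yielding a graph $G'=(V,E')$ with $\Gamma^+_s(G')=k\cdot\{(s,t)\}$, where $k=m_{(s,t)}(E)\ge1$, and $F^t_w(G',\mathds 1^s)=F^t_w(G,\mathds 1^s)$ for all $w$. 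Along the way one checks that each intermediate graph lies in $\mathcal G_t$: a path through a deleted edge $(s,u)$ reroutes as $\cdots\to s\to t$, so $t$ remains reachable from every node, which is what makes Direct Link Domination applicable at each step.

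Next, in $G'$ the source $s$ reaches only $t$, yet it still carries incoming edges, so Locality cannot immediately split off the $\{s,t\}$ gadget. The idea is to \emph{reverse} Node Redirect to create an isolated copy of $s$. Choose a fresh label $s''\notin V$ and let $G''$ be the graph obtained from $G'$ by replacing $s$ with two out-twins: $s$ itself with weight $0$ and its original incoming edges, and $s''$ with weight $1$, no incoming edge, and the same $k$ edges to $t$ (call the weight function $b''$). One verifies $G''\in\mathcal G_t$ and $R_{s''\to s}(G'',b'')=(G',\mathds 1^s)$, so Node Redirect gives $F^t_s(G',\mathds 1^s)=F^t_s(G'',b'')+F^t_{s''}(G'',b'')$ and $F^t_w(G',\mathds 1^s)=F^t_w(G'',b'')$ for every $w\in V-\{s\}$ (in particular for $w=t$ and for $w\in V-\{s,t\}$).

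Now $G''$ equals, under the paper's graph addition, the sum of $H_1=(\{s'',t\},k\cdot\{(s'',t)\})$ with weight $\mathds 1^{s''}$ and $H_2=(V,E')$ with the all-zero weight, and the node sets of these two graphs intersect exactly in $\{t\}$. By Locality, $F^t_{s''}(G'',b'')=F^t_{s''}(H_1,\mathds 1^{s''})$, $F^t_w(G'',b'')=F^t_w(H_2,0)$ for $w\in V-\{t\}$, and $F^t_t(G'',b'')=F^t_t(H_1,\mathds 1^{s''})+F^t_t(H_2,0)$; and by Additivity every centrality of $(H_2,0)$ is $0$, exactly as in the proof of Lemma~\ref{lemma:b:dist_0}. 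Chaining these equalities back through $G''$, $G'$, $G$ gives $F^t_v(G,\mathds 1^s)=0$ for $v\in V-\{s,t\}$, together with $F^t_s(G,\mathds 1^s)=F^t_{s''}(H_1,\mathds 1^{s''})$ and $F^t_t(G,\mathds 1^s)=F^t_t(H_1,\mathds 1^{s''})$. Finally, Lemma~\ref{lemma:all:anonymity} renames $s''$ to $s$, identifying $H_1$ with the Baseline graph $(\{s,t\},k\cdot\{(s,t)\})$, so Baseline $1$-$1$ (resp. $k$-$k$) pins both remaining values to $1$ (resp. $k$); hence they are determined.

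The hard part is the maneuver in the second paragraph. Because Target Proxy is off the table, the only way to detach the two-node gadget on which the Baseline axioms speak is Locality, which demands that the detached piece share nothing but $t$ with the rest — and the source $s$ has predecessors. Recognising that ``duplicating $s$ into an isolated out-twin'' is precisely a backward reading of Node Redirect, and being careful that $G'$, $G''$ and every graph produced by the edge deletions in the first step genuinely lie in $\mathcal G_t$ (so each axiom is legitimately applicable), is the only real content; everything else is routine bookkeeping with multiset and weight arithmetic.
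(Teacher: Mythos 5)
Your proposal is correct and follows essentially the same route as the paper's proof: prune the non-$(s,t)$ outgoing edges of $s$ via Direct Link Domination, split $s$ into a zero-weight copy keeping the incoming edges and a unit-weight isolated out-twin via Node Redirect, detach the two-node gadget with Locality, kill the zero-weighted remainder with Additivity, and read off the Baseline values. The only cosmetic difference is your final Anonymity renaming, which the paper skips since the Baseline axioms are stated for an arbitrary source label.
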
                                                              
\begin{proof}
Let $k = m_{(s,t)}(G)$.
Consider a graph obtained from $G$ by deleting all outgoing edges of $s$ other than $k$ edges to $t$: $G_1 = (V_1, E_1) = (V, E - \{(s, u) : (s, u) \in E \wedge u \neq t\})$.
From Direct Link Domination for every node $v \in V$ we know that:
\begin{equation} \label{eq:b:dist_1:DLD_v}
    F^t_v(G, \mathds{1}^s) = F^t_v(G_1, \mathds{1}^s).
\end{equation}

Let us split $s$ into two out-twins: $s$ with all incoming edges and zero weight and $s'$ with no incoming edges and unit weight; the resulting graph is $G_2 = (V_2, E_2) = (V \cup \{s'\}, E_1 + k \cdot \{(s', t)\})$.
We have $R_{s' \rightarrow s}(G_2) = G_1$.
Hence, from Node Redirect for every node $v \in V - \{s\}$ we know that:
\begin{align} 
    F^t_v(G_1, \mathds{1}^s) & = F^t_v(G_2, \mathds{1}^{s'}),\label{eq:b:dist_1:NR_v}\\
    F^t_s(G_1, \mathds{1}^s) & = F^t_s(G_2, \mathds{1}^{s'}) + F^t_{s'}(G_2, \mathds{1}^{s'}).\label{eq:b:dist_1:NR_s}
\end{align}

Note that $(G_2,\mathds{1}^{s'}) = (G_1, 0) + ((\{s',t\}, k \cdot \{(s',t)\}), \mathds{1}^{s'})$.
In graph $(G_1, 0)$ all centralities are equal zero, as Additivity implies $F^t_v(G_1, 0) = F^t_v(G_1, 0) + F^t_v(G_1, 0)$ for every $v \in V$.
Hence, from Locality, we get that for every node $v \in V - \{t\}$:
\begin{align} 
F^t_v(G_2, \mathds{1}^{s'}) & = 0,\label{eq:b:dist_1:LOC_v}\\
F^t_{s'}(G_2, \mathds{1}^{s'}) & = F^t_{s'}((\{s', t\}, k \cdot \{(s', t)\}), \mathds{1}^{s'}) \!=\! f(k)\label{eq:b:dist_1:LOC_s'},\\
F^t_{t}(G_2, \mathds{1}^{s'}) & = F^t_{t}((\{s', t\}, k \cdot \{(s', t)\}), \mathds{1}^{s'}) \!=\! f(k).\label{eq:b:dist_1:LOC_t}
\end{align}
Here, we used Atom $f(k)$-$f(k)$ to argue about centralities in graph $(\{s', t\}, k \cdot \{(s', t)\})$.
Now, by combining \eqref{eq:b:dist_1:DLD_v}, \eqref{eq:b:dist_1:NR_v} and~\eqref{eq:b:dist_1:LOC_v} we get $F^t_v(G, \mathds{1}^s) = 0$ for every $v \in V - \{s, t\}$.
Furthermore, from \eqref{eq:b:dist_1:DLD_v}, \eqref{eq:b:dist_1:NR_s},~\eqref{eq:b:dist_1:LOC_v} for $s$ and~\eqref{eq:b:dist_1:LOC_s'} we get that $F^t_s(G, \mathds{1}^s) = f(k)$ and from \eqref{eq:b:dist_1:DLD_v}, \eqref{eq:b:dist_1:NR_v} and~\eqref{eq:b:dist_1:LOC_t} we get that $F^t_t(G, \mathds{1}^s) = f(k)$.
\end{proof}

\begin{lemma}\label{lemma:b:dist_2_simple}
(Distance 2 Simple) If $F$ satisfies Locality, Additivity, Node Redirect, Target Proxy, Symmetry, Direct Link Domination and Atom $f(k)$-$f(k)$, then for every graph $G = (V, E) \in \mathcal{G}_t$ with source node $s \neq t$, such that $(s, t) \notin E$, which satisfies following assumptions:
\begin{itemize}
    \item for every node $u \in S^1_s(G)$ it holds $\Gamma^-_u(G) = \{(s, u)\}$,
    \item for every node $u \in P^1_t(G)$ it holds $\Gamma^+_u(G) = \{(u, t)\}$,
    \item $W = S^1_s(G) \cap P^1_t(G)$ is not empty,
\end{itemize}
for every node $v \in V - W - \{s, t\}$: 
    \[F^t_v(G, \mathds{1}^s) = 0,\]
for every node $w \in W$: $F^t_w(G, \mathds{1}^s) = f(|W|)/|W|$ and $F^t_s(G,\mathds{1}^s) = F^t_t(G,\mathds{1}^s) = f(|W|)$.
\end{lemma}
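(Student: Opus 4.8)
The plan is to determine $F^t_v(G,\mathds{1}^s)$ for every node $v$ by collapsing $G$ to graphs with $dist_{s,t}\le 1$, whose centralities are already fixed by Lemmas~\ref{lemma:b:dist_0}--\ref{lemma:b:dist_1}, and to treat the ``source side'' and the ``target side'' of $G$ by one and the same reduction, applied first to $G$ itself and then to its edge-reversal. After deleting any self-loop at $t$ (Lemma~\ref{lemma:all:t_loop}) I would first note that every $w\in W$ has $\Gamma^-_w(G)=\{(s,w)\}$ and $\Gamma^+_w(G)=\{(w,t)\}$, so any two nodes of $W$ are interchanged by an automorphism of $G$ fixing $t$; hence by Lemma~\ref{lemma:all:anonymity} they all have a common value $F^t_w(G,\mathds{1}^s)=c$.

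The core step (the ``forward reduction'') is the following. The nodes of $P^1_t(G)$ are pairwise out-twins, since each has only its single edge to $t$; so, fixing $w_1\in W$, iterated Node Redirect lets me merge every node of $P^1_t(G)\setminus\{w_1\}$ into $w_1$, producing a graph $G_2$ with $F^t_{w_1}(G_2,\mathds{1}^s)=\sum_{p\in P^1_t(G)}F^t_p(G,\mathds{1}^s)$ and $F^t_v(G_2,\mathds{1}^s)=F^t_v(G,\mathds{1}^s)$ for $v\notin P^1_t(G)$. In $G_2$ we now have $\Gamma^-_t(G_2)=\{(w_1,t)\}$, $\Gamma^+_{w_1}(G_2)=\{(w_1,t)\}$ and $b(t)=0$, so Target Proxy applies; merging $t$ into $w_1$ and removing the resulting self-loop (Lemma~\ref{lemma:all:t_loop}, after first deleting any outgoing edges of $t$ via Lemma~\ref{lemma:b:no_target_outlet}) yields $G_4$ with target $w_1$ and $(s,w_1)\in E(G_4)$, i.e. $dist_{s,w_1}(G_4)=1$. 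By Lemma~\ref{lemma:b:dist_1} we then get $F^{w_1}_v(G_4,\mathds{1}^s)=0$ for $v\neq s,w_1$, while $F^{w_1}_s(G_4,\mathds{1}^s)$ and $F^{w_1}_{w_1}(G_4,\mathds{1}^s)$ are determined; tracing these equalities back through Target Proxy, Target Self-Loop and Node Redirect shows that $F^t_v(G,\mathds{1}^s)=0$ for every $v\notin P^1_t(G)\cup\{s,t\}$ (in particular for $v\in S^1_s(G)\setminus W$ and for $v$ lying in neither $S^1_s(G)$ nor $P^1_t(G)$), that $\sum_{p\in P^1_t(G)}F^t_p(G,\mathds{1}^s)$ is determined, and that $F^t_s(G,\mathds{1}^s)$ is determined.

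What remains is the set $P^1_t(G)\setminus W$ and the value $F^t_t$, which I would obtain by reversal. Let $U$ be the set of nodes not reachable from $s$ (so $t\notin U$ and $S^1_s(G)\cap U=\emptyset$); adding the edges $\{(t,v):v\in U\}$ changes no centrality (Lemma~\ref{lemma:b:no_target_outlet}) and gives a graph $\tilde{G}$ that is still Distance 2 Simple with the same $s,t,S^1_s,P^1_t,W$ and in which every node is reachable from $s$, so $\tilde{G}^R\in\mathcal{G}_s$. Symmetry gives $F^s_v(\tilde{G}^R,\mathds{1}^t)=F^t_v(G,\mathds{1}^s)$ for all $v$, and a direct check shows $\tilde{G}^R$ is itself Distance 2 Simple with source $t$, target $s$, and $S^1_t(\tilde{G}^R)=P^1_t(G)$, $P^1_s(\tilde{G}^R)=S^1_s(G)$, $W(\tilde{G}^R)=W$. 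Running the forward reduction on $\tilde{G}^R$ therefore forces $F^s_v(\tilde{G}^R,\mathds{1}^t)=0$ for every $v\notin S^1_s(G)\cup\{s,t\}$ --- in particular for every $v\in P^1_t(G)\setminus W$ --- and determines $F^s_t(\tilde{G}^R,\mathds{1}^t)=F^t_t(G,\mathds{1}^s)$. Combining the two reductions, every $v\in V-W-\{s,t\}$ satisfies $F^t_v(G,\mathds{1}^s)=0$ (it lies outside $P^1_t(G)$, or else in $P^1_t(G)\setminus W$, hence outside $S^1_s(G)$); consequently $\sum_{p\in P^1_t(G)}F^t_p(G,\mathds{1}^s)=|W|\cdot c$ is determined, so $c$ is determined, and together with the determined values of $F^t_s$ and $F^t_t$ this fixes the centralities of all nodes of $W\cup\{s,t\}$. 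The step I expect to be the main obstacle is this reversal: one has to add exactly the right outgoing edges at $t$ so that the reversed graph simultaneously lands back in $\mathcal{G}_s$ and still satisfies the two ``simple'' degree conditions, so that the forward reduction applies to it unchanged; and one then has to bookkeep carefully which nodes each Node Redirect and each Target Proxy absorbs and which it leaves in place, in order to read off precisely which centralities are forced to $0$ and which are merely ``determined''. A little extra care is also needed at the outset to dispose of possible self-loops at $s$ or $t$.
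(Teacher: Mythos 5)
Your proposal is correct and follows essentially the same route as the paper's proof: merge the out-twin predecessors of $t$ via Node Redirect, collapse $t$ into the merged node via Target Proxy to reach the Distance-1 case, then repeat on the reversed graph (after adding outgoing edges at $t$ so that Symmetry applies), and finally split the determined total over $W$ equally by Anonymity. The only differences are cosmetic (merging into a node of $W$ rather than a fresh node, and adding edges from $t$ only to unreachable nodes rather than to all of $V-S$), and the self-loop edge cases you flag are handled in the paper exactly as you anticipate, via Target Self-Loop.
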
  

\begin{proof}
For a graph $G = (V, E)$ with weights $b$ and set of nodes $U \subseteq V$ let $F^t_U(G, b) = \sum_{v \in U} F^t_v(G, b)$.
We will assume that $t$ has no self-loops, as from Lemma~\ref{lemma:all:t_loop} (Target Self-Loop) we know that removing them does not affect centralities in a graph.

Let $S = S^1_s(G) - \{s\}$ and $T = P^1_t(G)$ (note $W = T \cap S$).
Consider a graph obtained from $G$ by merging all nodes from $T$ into one node, $t'$: $G_1 = (V_1, E_1) = (V -T \cup \{t'\}, E - \bigcup_{u \in T}\Gamma_u(G) + \{(t', t)\} + \{(w, t') : (w, u) \in E \wedge u \in T\})$.
From Node Redirect we know that for every $v \in V - T$: 
\begin{align} 
    F^t_v(G, \mathds{1}^s) & = F^t_v(G_1, \mathds{1}^s)\label{eq:b:dist_2_simple:NR_v},\\
    F^t_T(G, \mathds{1}^s) & = F^t_{t'}(G_1, \mathds{1}^s)\label{eq:b:dist_2_simple:NR_T}.
\end{align}
Notice that in $G_1$ the number of edges from $s$ to $t'$ is equal $|T|$.
We can also see that the target $t$ has only one predecessor, $t'$.
Consider a graph obtained from $G_1$ by merging $t$ into $t'$: $G_2 = (V_2, E_2) = M_{t \rightarrow t'}(G_1)$.
From Target Proxy we know that for every $v \in V_2$:
\begin{equation} \label{eq:b:dist_2_simple:TR_v}
    F^t_v(G_1, \mathds{1}^s) = F^{t'}_v(G_2, \mathds{1}^s).
\end{equation}
In graph $G_2$ the distance from the source $s$ to the target $t'$ is equal one.
Moreover, the number of edges from $s$ to $t'$ in $G_2$ is the same as in $G_1$, so it is equal $|W|$.
Hence, from Lemma~\ref{lemma:b:dist_1} (Distance 1) we get that for every $v \in V_2 - \{s, t'\}$:
\begin{align} 
    F^{t'}_v(G_2, \mathds{1}^s) & = 0,\label{eq:b:dist_2_simple:dist_1_v}\\
    F^{t'}_s(G_2, \mathds{1}^s) & = F^{t'}_{t'}(G_2, \mathds{1}^s) = f(|W|).\label{eq:b:dist_2_simple:dist_1_s}
\end{align}
Combining \eqref{eq:b:dist_2_simple:NR_v},~\eqref{eq:b:dist_2_simple:TR_v} and~\eqref{eq:b:dist_2_simple:dist_1_v} we know that for every $v \in V - T - \{s, t\}$:
\begin{equation} \label{eq:b:dist_2_simple:zero_v}
    F^t_v(G, \mathds{1}^s) = 0
\end{equation}
Combining \eqref{eq:b:dist_2_simple:NR_v},~\eqref{eq:b:dist_2_simple:TR_v} and~\eqref{eq:b:dist_2_simple:dist_1_s} we know that:
\begin{equation} \label{eq:b:dist_2_simple:s}
    F^t_s(G, \mathds{1}^s) = f(|W|).
\end{equation}
Finally, combining \eqref{eq:b:dist_2_simple:NR_T},~\eqref{eq:b:dist_2_simple:TR_v} and~\eqref{eq:b:dist_2_simple:dist_1_s} we get that:
\begin{equation} \label{eq:b:dist_2_simple:T}
    F^t_T(G, \mathds{1}^s) = f(|W|).
\end{equation}
Thus, it remains to prove that $F^t_t(G,\mathds{1}^s)=f(|W|)$ and determine how nodes from $T$ share their join centrality $F^t_T(G, \mathds{1}^s) = f(|W|)$ among themselves.

To this end, let us now go back to graph $G$ and for each node $u \in V - S$ add an edge $(t, u)$, obtaining graph $G_3 = (V, E_3) = (V, E + \{(t, u) : u \in V - S\})$.
From Lemma~\ref{lemma:b:no_target_outlet} (No Target Outlet) we know it does not affect centralities in the graph: i.e., for every $v \in V$ we have $F^t_v(G, \mathds{1}^s) = F^t_v(G_3, \mathds{1}^s)$.
In $G_3$ not only from $s$ there is an edge to each node from $S$ and a path to $t$, but also from $t$ there is an edge to every node $u \in V - S$; hence, from $s$ there is a path to every node.
This means that the reversed graph $G_4 = (V, E_4) = (V, \{(u, w) : (w, u) \in E_3\})$ is in the class $\mathcal{G}_s$. 
From Symmetry we know that centralities in graphs $G_3$ and $G_4$ are the same: for every $v \in V$: $F^t_v(G_3, \mathds{1}^s) = F^s_v(G_4, \mathds{1}^t)$. 
As a result, we get that for every $v \in V$:
\begin{equation} \label{eq:b:dist_2_simple:SYM}
    F^t_v(G, \mathds{1}^s) = F^s_v(G_4, \mathds{1}^t).
\end{equation}

Now, we proceed as before.
We merge all nodes from $S$ into one node, $s'$, obtaining $G_5 = (V_5, E_5) = (V -S \cup \{s'\}, E_4 - \bigcup_{u \in S}\Gamma_u(G) + \{(s', s)\} + \{(w, s') : (w, u) \in E_4 \wedge u \in S\})$.
From Node Redirect we know that for every $v \in V - S$:
\begin{equation}\label{eq:b:dist_2_simple:SYM_NR_v}
    F^s_v(G_4, \mathds{1}^t) = F^s_v(G_5, \mathds{1}^t).
\end{equation}
Now target $s$ has only one predecessor, $s'$.
We merge $s$ into $s'$ obtaining $G_6 = (V_6, E_6) = M_{s \rightarrow s'}(G_5)$.
From Target Proxy we know that for every $v \in V_6$:
\begin{equation} \label{eq:b:dist_2_simple:SYM_TR_v}
    F^s_v(G_5, \mathds{1}^t) = F^{s'}_v(G_6, \mathds{1}^t).
\end{equation}
We obtained a graph in which distance from the source $t$ to the target $s'$ is equal one.
Hence, from Lemma~\ref{lemma:b:dist_1} (Distance 1) we know that for every $v \in V_6 - \{s', t\}$:
\begin{align}  
    F^{s'}_v(G_6, \mathds{1}^t) & = 0,\label{eq:b:dist_2_simple:SYM_dist_1_v}\\
    F^{s'}_t(G_6, \mathds{1}^t) & = f(|W|).\label{eq:b:dist_2_simple:SYM_dist_1_t}
\end{align}
Combining \eqref{eq:b:dist_2_simple:SYM},~\eqref{eq:b:dist_2_simple:SYM_NR_v},~\eqref{eq:b:dist_2_simple:SYM_TR_v} and~\eqref{eq:b:dist_2_simple:SYM_dist_1_v} (or, respectively,~\eqref{eq:b:dist_2_simple:SYM_dist_1_t}) we know that for every $v \in V - S - \{s, t\}$:
\begin{align} 
    F^t_v(G, \mathds{1}^s) & = 0,\label{eq:b:dist_2_simple:SYM_zero_v}\\
    F^t_t(G, \mathds{1}^s) & = f(|W|).\label{eq:b:dist_2_simple:t}
\end{align}

Thus, it remains to determine centralities of nodes from $T$.
Recall that from \eqref{eq:b:dist_2_simple:T} we know that $F^t_T(G, \mathds{1}^s) = f(|W|)$ and by the definition $W = T \cap S$.
Also, from \eqref{eq:b:dist_2_simple:SYM_zero_v} we have $F^t_{T-S}(G, \mathds{1}^s) = F^t_{T-W}(G, \mathds{1}^s) = 0$.
Hence,
\[ F^t_W(G, \mathds{1}^s) = F^t_T(G, \mathds{1}^s) - F^t_{T-W}(G, \mathds{1}^s) = f(|W|). \]
But each of the nodes in $W$ has exactly one incoming edge, from $s$, and one outgoing edge, to $t$.
Hence, from Lemma~\ref{lemma:all:anonymity} (Anonymity) they all have the same centralities.
This means that for every node $w \in W$ it holds $F^t_w(G, \mathds{1}^s) = f(|W|)/|W|$, which concludes the proof.
\end{proof}

\begin{lemma}\label{lemma:b:dist_2}
(Distance 2) If $\bar{F},\hat{F}$ satisfy Locality, Additivity, Node Redirect, Target Proxy, Symmetry, Direct Link Domination and Atom $f(k)$-$f(k)$, then for every graph $G = (V,E) \in \mathcal{G}_t$ with the source node $s$ such that $dist_{s, t}(G) = 2$ it holds $\bar{F}^t_v(G, \mathds{1}^s) = \hat{F}^t_v(G,\mathds{1}^s)$ for every $v \in V$.
\end{lemma}                                                              
\begin{proof}
Let $F$ be any centrality that satisfies Locality, Additivity, Node Redirect, Target Proxy, Symmetry, Direct Link Domination and Atom $f(k)$-$f(k)$.
We will show that centralities (according to $F$) in $G$ are uniquely defined based on the centralities in a simpler graph that satisfies assumption of Lemma~\ref{lemma:b:dist_2_simple}.
Since, we know that $\bar{F}$ and $\hat{F}$ satisfies these axioms and from Lemma~\ref{lemma:b:dist_2_simple} that they agree on this simpler graph, this will imply they agree also on graph $G$.

We will assume $t$ has no self-loops in $G$, as from Lemma~\ref{lemma:all:t_loop} (Target Self-Loop) we know that deleting self-loops of $t$ does not change centralities in the graph.

Let $S = S^1_s-\{s\} = \{u_1,\dots,u_k\}$ be the set of successors of $s$ (other than $s$ itself) and let $k_i = m_{(s, u_i)}(E)$ for every $u_i$.
We begin by splitting each node $u_i$ into $k_i+1$ out-twins.
Specifically, for each $u_i$, we delete all its incoming edges from $s$ and add out-twins $u_i^1, ..., u_i^{k_i}$ such that each $u_i^j$ has only one incoming edge $(s,u_i^j)$ and zero weight.
Formally, consider a graph $G_1 = (V_1,E_1)$ defined as follows:
\begin{align*} 
V_1 & = V \cup \{u_i^j : u_i \in S, 1 \le j \le k_i\},\\
E_1 & = E \!-\! \{(s,\!u_i) : u_i \!\in\! S\} \!+\! \{(s,\!u_i^j) : u_i \!\in\! S, 1 \!\le\! j \!\le\! k_i\} \\
& + \{(u_i^j,w) : u_i \in S, (u_i,w) \in E, 1 \le j \le k_i\}.
\end{align*}
From Node Redirect we know that for every $v \in V - S$:
\begin{equation} \label{eq:b:dist_2:NR_v}
    F^t_v(G, \mathds{1}^s) = F^t_v(G_1, \mathds{1}^s)
\end{equation}
and for every $u_i \in S$:
\begin{equation} \label{eq:b:dist_2:NR_u_i}
F^t_{u_i}(G, \mathds{1}^s) = F^t_{u_i}(G_1, \mathds{1}^s) + \sum_{j=1}^{k_i} F^t_{u_i^j}(G_1, \mathds{1}^s).
\end{equation}

In graph $G_1$, every successor of $s$ (apart from possibly $s$), has exactly one incoming edge: the one from $s$.
Note also that because $dist_{s, t}(G) = 2$ there were some $u_i \in S$ such that $(u_i, t) \in E$.
This means that there exists some $u_i^j \in V_1$ such that $(u_i^j, t) \in E_1$, hence $dist_{s,t}(G_1) = 2$.

Let us now for each node $u \in V - S^1_s(G_1)$ add an edge $(t, u)$, obtaining graph $G_2 = (V_1, E_2) = (V_1, E_1 + \{(t, u) : u \in V - S^1_s(G_1)\})$.
From Lemma~\ref{lemma:b:no_target_outlet} (No Target Outlet) we know that for every $v \in V_1$: $F^t_v(G_1, \mathds{1}^s) = F^t_v(G_2, \mathds{1}^s)$.
In $G_2$ not only from $s$ there is an edge to each node from $S^1_s(G_1)$ and a path to $t$, but also from $t$ there is an edge to every node $u \in V - S^1_s(G_1)$; this implies that from $s$ there is a path to every node.
This means that the reversed graph $G_3 = (V_1, E_3) = (V_1, \{(u, w) : (w, u) \in E_2\})$ is in the class $\mathcal{G}_s$. 
From Symmetry we know that for every $v \in V_1$: $F^t_v(G_2, \mathds{1}^s) = F^s_v(G_3, \mathds{1}^t)$. We get that for every $v \in V_1$:
\begin{equation} \label{eq:b:dist_2:SYM}
    F^t_v(G_1, \mathds{1}^s) = F^s_v(G_3, \mathds{1}^t)
\end{equation}
Note that in graph $G_3$ every predecessor of $s$ (apart from possibly $s$), has exactly one outgoing edge: the one to $s$.

Now, let us assume $s$ has no self-loops, as we could delete them and from Lemma~\ref{lemma:all:t_loop} (Target Self-Loop) we know that centralities would not change, because $s$ is now the target. 
Hence, under this assumption, every predecessor of $s$ in $G_3$ has only one outgoing edge: to $s$.

Let us now proceed similarly as before, by splitting each successor $w_i \in S^1_t(G_3)$ of $t$ into out-twins.
Specifically, for each successor $w_i$ of $t$ we delete all its incoming edges from $t$ and add out-twins $w_i^1, ..., w_i^{l_i}$ for $l_i = m_{(t, w_i)}(E_3)$ such that each $w_i^j$ has only one incoming edge $(t,w_i^j)$ and zero weight. 
Let us denote the obtained graph by $G_4 = (V_4, E_4)$.
From Node Redirect we know that for every $v \in V_1 - S^1_t(G_3)$: 
\begin{equation} \label{eq:b:dist_2:SYM_NR_v}
    F^s_v(G_3, \mathds{1}^t) = F^s_v(G_4, \mathds{1}^t)
\end{equation}
and for every $w_i \in S^1_t(G_3)$:
\begin{equation}\label{eq:b:dist_2:SYM_NR_u_i}
F^s_{w_i}(G_3, \mathds{1}^t) = F^s_{w_i}(G_4, \mathds{1}^t) + \sum_{j=1}^{l_i} F^s_{w_i^j}(G_4, \mathds{1}^t).
\end{equation}
In graph $G_4$, every successor of $t$ has exactly one incoming edge: the one from $t$.

As a result, in graph $G_4 = (V_4, E_4)$ every successor of the source $t$ has exactly one incoming edge, from $t$ and every predecessor of the target $s$ has exactly one outgoing edge, to $s$.
Moreover, $s \neq t$, $(t, s) \notin E_4$ and $W = S^1_t(G_4) \cap P^1_s(G_4)$ is not empty.
Hence, graph $G_4$ satisfies the assumptions of Lemma~\ref{lemma:b:dist_2_simple} which implies for every node $v \in V_4$:
\begin{equation} \label{eq:b:dist_2:final}
\bar{F}^t_v(G_4,\mathds{1}^s) = \hat{F}^t_v(G_4,\mathds{1}^s).
\end{equation}
Now, from \eqref{eq:b:dist_2:final} combined with \eqref{eq:b:dist_2:SYM_NR_v} and \eqref{eq:b:dist_2:SYM_NR_u_i} applied for both $\bar{F}$ and $\hat{F}$ we get that for every $v \in V_1$:
\[ \bar{F}^t_v(G_3,\mathds{1}^s) = \hat{F}^t_v(G_3,\mathds{1}^s) \]
Finally, this combined with \eqref{eq:b:dist_2:NR_v}, \eqref{eq:b:dist_2:NR_u_i} and \eqref{eq:b:dist_2:SYM} applied for both $\bar{F}$ and $\hat{F}$ implies that for every $v \in V$:
\[ \bar{F}^t_v(G,\mathds{1}^s) = \hat{F}^t_v(G,\mathds{1}^s). \]
This concludes the proof.
\end{proof}

\begin{lemma}\label{lemma:b:single_source}
(Single Source) If $\bar{F}, \hat{F}$ satisfy Locality, Additivity, Node Redirect, Target Proxy, Symmetry, Direct Link Domination and Atom $f(k)$-$f(k)$, then for every graph $G = (V,E) \in \mathcal{G}_t$ with the source node $s$ it holds $\bar{F}^t_v(G, \mathds{1}^s) = \hat{F}^t_v(G,\mathds{1}^s)$ for every $v \in V$.
\end{lemma}

\begin{proof}
We use induction on the distance from $s$ to $t$: $dist_{s, t}(G)$. If $dist_{s, t}(G) < 3$, then the thesis follows from Lemma~\ref{lemma:b:dist_0} (Distance 0), Lemma~\ref{lemma:b:dist_1} (Distance 1) and Lemma~\ref{lemma:b:dist_2} (Distance 2).

Let us discuss the inductive step.
Let $F$ be any centrality that satisfies Locality, Additivity, Node Redirect, Target Proxy, Symmetry, Direct Link Domination and Atom $f(k)$-$f(k)$.
Fix a graph $G = (V, E)$ and a node $v \in V$.
We know that $dist_{s, v}(G) + dist_{v, t}(G) \geq dist_{s, t}(G) \geq 3$, so either distance from the source node to $v$ or from $v$ to the target is greater or equal $2$: $dist_{s, v}(G) \geq 2$ or $dist_{v, t}(G) \geq 2$.

First, we argue that the later case can be reduced to the former one.
Assume $dist_{v, t}(G) \geq 2$.
For each node $u \in V$ we add an edge $(t, u)$, obtaining graph $G' = (V, E') = (V, E + \{(t, u) : u \in V\})$.
From Lemma~\ref{lemma:b:no_target_outlet} (No Target Outlet) this does not affect centralities in the graph.
In $G'$ we know not only that from $s$ there is a path to $t$, but also from $t$ there is an edge to every node $u$, so from $s$ there is a path to $u$.
This means that the reversed graph $G'' = (V, E'') = (V, \{(u, w) : (w, u) \in E'\})$ is in the class $\mathcal{G}_s$. 
From Symmetry we know that the centralities of any node in graphs $G'$ and $G''$ are equal.
Now, in graph $G''$ the distance between the source and the target is the same, but the distance from the source to the fixed node $v$ is greater or equal two.
Hence, in what follows we assume $dist_{s, v}(G) \geq 2$.

Let $S = S^1_s-\{s\} = \{u_1,\dots,u_k\}$ be the set of successors of $s$ (other than $s$ itself) and let $k_i = m_{(s, u_i)}(E)$ for every $u_i$.
We begin by splitting each node $u_i$ into $k_i+1$ out-twins.
Specifically, for each $u_i$, we delete all its incoming edges from $s$ and add out-twins $u_i^1, ..., u_i^{k_i}$ such that each $u_i^j$ has only one incoming edge $(s,u_i^j)$ and zero weight.
Formally, consider a graph $G_1 = (V_1,E_1)$ defined as follows:
\begin{align*} 
V_1 & = V \cup \{u_i^j : u_i \in S, 1 \le j \le k_i\},\\
E_1 & = E \!-\! \{(s,\!u_i) : u_i \!\in\! S\} \!+\! \{(s,\!u_i^j) : u_i \!\in\! S, 1 \!\le\! j \!\le\! k_i\} \\
& + \{(u_i^j,w) : u_i \in S, (u_i,w) \in E, 1 \le j \le k_i\}.
\end{align*}
From Node Redirect we know that:
\begin{equation} \label{eq:b:single_source:NR}
    F^t_v(G, \mathds{1}^s) = F^t_v(G_1, \mathds{1}^s),
\end{equation}
because $v \notin S$.
What is important, in $G_1$ every successor of $s$ (apart from possibly $s$), has exactly one incoming edge, the one from $s$.

Let us now for each node $u \in V - S^1_s(G_1)$ add an edge $(t, u)$, obtaining graph $G_2 = (V_1, E_2) = (V_1, E_1 + \{(t, u) : u \in V - S^1_s(G_1)\})$.
From Lemma~\ref{lemma:b:no_target_outlet} (No Target Outlet) we know that $F^t_v(G_1, \mathds{1}^s) = F^t_v(G_2, \mathds{1}^s)$.
In $G_2$ we know not only that from $s$ there is an edge to each node from $S^1_s(G_1)$ and a path to $t$, but also from $t$ there is an edge to every node $u \in V - S^1_s(G_1)$; hence, from $s$ there is a path to every node.
This means that the reversed graph $G_3 = (V_1, E_3) = (V_1, \{(u, w) : (w, u) \in E_2\})$ is in the class $\mathcal{G}_s$. 
From Symmetry we know that $F^t_v(G_2, \mathds{1}^s) = F^s_v(G_3, \mathds{1}^t)$.
Hence, we get:
\begin{equation} \label{eq:b:single_source:SYM}
    F^t_v(G_1, \mathds{1}^s) = F^s_v(G_3, \mathds{1}^t)
\end{equation}

Now, we delete self-loops of $s$, obtaining $G_4 = (V_4, E_4) = (V_1, E_3 - \{m_{(s, s)}(E_3) \cdot (s, s)\}$. 
Node $s$ is now the target, so from Lemma~\ref{lemma:all:t_loop} (Target Self-Loop) we Get:
\begin{equation} \label{eq:b:single_source:SYM_t_loop}
    F^s_v(G_3, \mathds{1}^t) = F^s_v(G_4, \mathds{1}^t).
\end{equation}
Note that in $G_4$ every predecessor of $s$ has only one outgoing edge, to $s$.
Now, we merge them into one node, $s'$, obtaining $G_5 = (V_5, E_5) = (V_4 - P^1_s(G_4) \cup \{s'\}, E_4 - \Gamma^-_s(G_4) + \{(s', s)\} + \{(w, s') : (w, u), (u, s) \in E\})$.
From Node Redirect we know that:
\begin{equation} \label{eq:b:single_source:SYM_NR}
    F^s_v(G_4, \mathds{1}^t) = F^s_v(G_5, \mathds{1}^t).
\end{equation}

Finally, $s$ has only one incoming edge in $G_5$, from $s'$ (and $s'$ has only one outgoing edge, to $s)$.
Let us merge $s$ into $s'$, obtaining $G_6 = M_{s \rightarrow s'}(G_5, \mathds{1}^t)$.
From Target Proxy we know that:
\begin{equation} \label{eq:b:single_source:SYM_TR}
    F^s_v(G_5, \mathds{1}^t) = F^{s'}_v(G_6, \mathds{1}^t).
\end{equation}

Now, observe that $dist_{s, t}(G) = dist_{t, s'}(G_6) + 1$.
Hence, from the inductive assumption we know that $\bar{F}^{s'}_v(G_6, \mathds{1}^t)=\hat{F}^{s'}_v(G_6, \mathds{1}^t)$.
This combined with \eqref{eq:b:single_source:NR}-\eqref{eq:b:single_source:SYM_TR} applied for both $\bar{F}$ and $\hat{F}$ shows that also $\bar{F}^t_v(G, \mathds{1}^s) = \hat{F}^t_v(G, \mathds{1}^s)$ for every $v \in V$ which concludes the proof.
\end{proof}

\begin{lemma}\label{lemma:b:determined}
There is at most one centrality measure that satisfies Locality, Additivity, Node Redirect, Target Proxy, Symmetry, Direct Link Domination and Atom $f(k)$-$f(k)$.
\end{lemma}

\begin{proof}
Let $\bar{F},\hat{F}$ be two centrality measures that satisfies Locality, Additivity, Node Redirect, Target Proxy, Symmetry, Direct Link Domination and Atom $f(k)$-$f(k)$.
Fix a graph $G = (V, E)$ with weights $b$ and a node $v \in V$. 
We will prove that $\bar{F}^t_v(G,b) = \hat{F}^t_v(G,b)$.

If $b = 0$, then the centrality of $v$ is equal zero for both centralities $\bar{F}, \hat{F}$, as it holds $F^t_v(G, 0) = F^t_v(G, 0) + F^t_v(G, 0)$ for any $F$ that satisfies Additivity.

Otherwise, let us decompose weight function into non-empty sum of functions positive for only one node: $b = \sum_{s \in V} b(s) \cdot \mathds{1}^s$.
From Additivity for $F \in \{\bar{F}, \hat{F}\}$ we know that:
\begin{equation} \label{eq:b:determined:ADD}
    F^t_v(G, b)  = \sum_{s \in V} F^t_v(G, b(s) \cdot \mathds{1}^s).
\end{equation}
Moreover, from Additivity function $F^t_v(G, b(s) \cdot \mathds{1}^s)$ is additive with respect to $b(s)$ and from the definition it is non-negative, so we know it is linear, that is:
\begin{equation} \label{eq:b:determined:linear}
    F^t_v(G, b(s) \cdot \mathds{1}^s) = b(s) \cdot F^t_v(G, \mathds{1}^s).
\end{equation}
Now, Lemma~\ref{lemma:b:single_source} (Single Source) implies that:
\[ \bar{F}^t_v(G, \mathds{1}^s) = \hat{F}^t_v(G,\mathds{1}^s). \]
This combined with \eqref{eq:b:determined:ADD} and~\eqref{eq:b:determined:linear} applied for both $\bar{F}$ and $\hat{F}$ shows that $\bar{F}^t_v(G, b) = \hat{F}^t_v(G,b)$ which concludes the proof.
\end{proof}

\begin{lemma}\label{lemma:axioms_to_betweenness}
If $F$ satisfies Locality, Additivity, Node Redirect, Target Proxy, Symmetry, Direct Link Domination and Atom $1$-$1$, then it is t-Betweenness Centrality. 
\end{lemma}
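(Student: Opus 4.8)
The plan is to obtain this characterization as an immediate consequence of two facts already established in the excerpt: the uniqueness statement of Lemma~\ref{lemma:b:determined} and the soundness statement of Lemma~\ref{lemma:b_to_axioms}. Concretely, Lemma~\ref{lemma:b:determined} shows that the axiom set \emph{Locality}, \emph{Additivity}, \emph{Node Redirect}, \emph{Target Proxy}, \emph{Symmetry}, \emph{Direct Link Domination} together with \emph{Baseline $1$-$1$} (this is the Baseline $1$-$1$ branch of that lemma, which is the branch we invoke here) pins down the value $F^t_v(G,b)$ for every $G\in\mathcal{G}_t$, every weight function $b$, and every node $v$. In other words, any two t-centralities satisfying these seven axioms coincide on all weighted graphs in $\mathcal{G}_t$.

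First I would fix an arbitrary t-centrality $F$ satisfying the seven axioms in the statement. By Lemma~\ref{lemma:b_to_axioms}, t-Betweenness Centrality $B^t$ also satisfies all of Locality, Additivity, Node Redirect, Target Proxy, Symmetry, Direct Link Domination, and Baseline $1$-$1$. Hence $F$ and $B^t$ are two t-centralities satisfying the same axiom set, so by Lemma~\ref{lemma:b:determined} they agree: $F^t_v(G,b)=B^t_v(G,b)$ for every $(G,b)$ with $G\in\mathcal{G}_t$ and every $v\in V$. Therefore $F=B^t$, which is the ``only if'' direction; the ``if'' direction is precisely Lemma~\ref{lemma:b_to_axioms}. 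Together these complete the proof of Theorem~\ref{theorem:betweenness}.

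The main point is that there is essentially no new obstacle at this stage: all the real work lives in the inductive chain \ref{lemma:all:anonymity}--\ref{lemma:b:determined}, in particular the distance induction of Lemma~\ref{lemma:b:single_source} and the reduction from arbitrary nonnegative weights to single unit sources via the linearity extracted from Additivity inside Lemma~\ref{lemma:b:determined}. The only things to be careful about when writing this up are to invoke the Baseline axiom consistently --- here Baseline $1$-$1$, so that the value determined by Lemma~\ref{lemma:b:determined} matches $B^t$ rather than $S^t$ --- and to note that Lemma~\ref{lemma:axioms_to_stress} is then obtained by an identical argument using the Baseline $k$-$k$ branch of Lemma~\ref{lemma:b:determined} together with Lemma~\ref{lemma:s_to_axioms}.
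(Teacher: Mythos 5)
Your proposal is correct and matches the paper's own proof exactly: both derive uniqueness from Lemma~\ref{lemma:b:determined} and then identify the determined centrality as $B^t$ via Lemma~\ref{lemma:b_to_axioms}. Nothing further is needed.
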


\begin{proof}
Because $F$ satisfies Atom $1$-$1$, it satisfies Atom $f(k)$-$f(k)$, for $f(k) = 1$.
From Lemma~\ref{lemma:b:determined} we know that $F$ is uniquely determined.
But from Lemma~\ref{lemma:b_to_axioms} we know that t-Betweenness satisfies these axioms, so $F$ must be t-Betweenness Centrality.
\end{proof}

\section{Proof of Theorem~\ref{theorem:stress}}

In this section, we present the proof of Theorem~\ref{theorem:stress}.
We start by showing that t-Stress Centrality satisfies the axioms.
In the main part, we prove that the axioms uniquely characterize a centrality measure.
The proof in analogous to the proof of Theorem~\ref{theorem:betweenness}.

\subsection{t-Stress Centrality \texorpdfstring{$\Rightarrow$}{=>} Axioms}
We will now consider each of the axioms: Locality, Additivity, Node Redirect, Target Proxy, Symmetry, Direct Link Domination, Atom $k$-$k$ and show that t-Stress Centrality satisfies it.
The proofs are analogous to the proofs from Lemma~\ref{lemma:b_to_axioms}.

\begin{lemma}\label{lemma:s_to_axioms}
t-Stress Centrality satisfies Locality, Additivity, Node Redirect, Target Proxy, Symmetry, Direct Link Domination and Atom $k$-$k$.
\end{lemma}

Let $\Pi_{v, u}(G)$ denote the set of shortest paths from $v$ to $u$ in $G$ (in particular, $\sigma_{v, u}(G) = |\Pi_{v, u}(G)|$).

\subsubsection{Locality:} 
Let us notice that for every $s \in V$ no shortest path from $s$ to $t$ in $G+G'$ includes any edge from $E'$. 
Let us assume such a path exists and let $(v', u')$ be the first edge from $E$' in this path.
We have $v' \in V'$, but -- since up to this point the path was in $G$ -- also $v' \in V$, so $v' = t$.
Hence, the path is not a shortest path to $t$.
This means that for every $s \in V$ it holds $\Pi_{s, t}(G + G') = \Pi_{s, t}(G)$.
In particular, for every $w \in V$, $w' \in V' - \{t\}$ we have:
    \[\sigma_{s,t}(G + G', w) = \sigma_{s,t}(G, w),\]
    \[\sigma_{s,t}(G + G', w') = 0,\]
    \[\sigma_{s,t}(G + G') = \sigma_{s,t}(G).\]

This gives:
\begin{align*}
&S^t_w((G, b) + (G', b'))\\
&= \sum_{s \in V \cup V'} (b+b')(s) \cdot \sigma_{s,t}(G+G', w)\\
&= \sum_{s \in V - \{t\}} b(s) \cdot \sigma_{s,t}(G, w)
   +(b(t) + b'(t)) \cdot 0\\  
&+ \sum_{s' \in V' - \{t\}} b'(s') \cdot 0
 = S^t_w(G, b).
\end{align*}
    
As for the centrality of $t$, by noting that the empty path is always the only shortest path from $t$ to $t$ we get:
\begin{align*}
&S^t_t((G, b) + (G', b'))\\
&= \sum_{s \in V \cup V'} (b+b')(s) \cdot \sigma_{s,t}(G+G', t)\\
&= \sum_{s \in V - \{t\}} b(s) \cdot \sigma_{s,t}(G, t)
   +b(t) + b'(t)\\  
&+ \sum_{s' \in V' - \{t\}} b'(s') \cdot \sigma_{s',t}(G', t)\\
&= S^t_t(G, b) + S^t_t(G', b').
\end{align*}

\subsubsection{Additivity:} 
From the definition we have:
\begin{align*}
&S^t_w(G, b + b') = \sum_{s \in V} (b(s) + b'(s)) \cdot \sigma_{s,t}(G, w)\\
&= \sum_{s \in V} b(s) \cdot \sigma_{s,t}(G, w) + 
   \sum_{s \in V} b'(s) \cdot \sigma_{s,t}(G, w)\\ 
&= S^t_w(G, b) + S^t_w(G, b').
\end{align*}

\subsubsection{Node Redirect:} 
Let $(G', b') = R_{u \rightarrow v}(G, b)$.
We know shortest paths from $v$ to $t$ are the same in $G$ as in $G'$, as $u$, the out-twin of $v$, cannot be on a shortest path to $t$ in $G$.
Hence, $\Pi_{v, t}(G') = \Pi_{v, t}(G)$.
Moreover, since $u$ is the out-twin of $v$, they have the same shortest paths to $t$ in $G$, up to appropriately changing first edge, that is $\Pi_{v, t}(G') = \{((v, w_1), ... (w_k, t)) : ((u, w_1), ... (w_k, t)) \in \Pi_{u, t}(G)\}$.
In particular, for every $w \in V$ we have:
    \[\sigma_{v, t}(G', w) = \sigma_{v,t}(G, w) = \sigma_{u,t}(G, w),\]
    \[\sigma_{v, t}(G') = \sigma_{v,t}(G) = \sigma_{u,t}(G).\] 

Now, fix $s \in V - \{v, u\}$.
Clearly, shortest paths from $s$ to $t$ that do not go through $v$ or $u$ are the same in $G$ as in $G'$.
Consider paths that go through $v$ or $u$. 
Each such a path goes through one of the incoming edges of $v$ or $u$.
For every path from $s$ to $t$ in $G$ that goes through some edge to $v$ there is a path from $s$ to $t$ in $G'$ that also goes through this edge.
For every path from $s$ to $t$ in $G$ that goes through an edge $(w,u)$ to $u$ there is a path from $s$ to $t$ in $G'$ that goes through the corresponding edge $(w, v)$ and through node $v$.
And similarly, for every path from $s$ to $t$ in $G'$ that goes through $(w,v)$ to $v$ there is a corresponding path from $s$ to $t$ that is either the same, or the same apart from going through $(w, u)$ instead.
Let $R_{u \rightarrow v}(((w_1, w_2), ..., (w_i, u), (u, w_{i+2}), ..., (w_{k-1}, w_{k}))) = ((w_1, w_2), \dots, (w_i, v), (v, w_{i+2}), \dots, (w_{k-1}, w_{k})$.
We get $\Pi_{v, t}(G') = \Pi_{v, t}(G) + \{R_{u \rightarrow v}(\pi) : \pi \in \Pi_{u, t}(G)\}$.
In particular, for every $w \in V - \{v, v'\}$ we have:
    \[\sigma_{s, t}(G', w) = \sigma_{s,t}(G, w),\]
    \[\sigma_{s, t}(G', v) = \sigma_{s,t}(G, v) + \sigma_{s,t}(G, u),\]
    \[\sigma_{s, t}(G') = \sigma_{s,t}(G).\]

This gives:
\begin{align*}
&S^t_v(G', b') \\
&= \sum_{s \in V - \{v, u\}} b'(s) \sigma_{s,t}(G', v)
    + b'(v) \sigma_{v,t}(G', v)\\
&= \sum_{s \in V - \{v, u\}} b(s) (\sigma_{s,t}(G, v) + \sigma_{s,t}(G, u))\\ 
&+ (b(v) + b(u)) \sigma_{v,t}(G, v)\\
&= S^t_v(G, b) + S^t_u(G, b).
\end{align*}

and for every $w \in V - \{v, u\}$:
\begin{align*}
&S^t_w(G', b') \\
&= \sum_{s \in V - \{v, u\}} b'(s) \sigma_{s,t}(G', w) 
    + b'(v) \sigma_{v,t}(G', w)\\
&= \sum_{s \in V - \{v, u\}} b(s) \sigma_{s,t}(G, w) 
    + (b(v) + b(u)) \sigma_{v,t}(G, w)\\
&= S^t_w(G, b).
\end{align*}

\subsubsection{Target Proxy:}
Let $G' = M_{t \rightarrow v}(G, b)$.
Let us notice for every $s \in V - \{t\}$ every path from $s$ to $t$ has $(v, t)$ as the last edge, because it is the only edge ending in $t$.
Hence, for every shortest path from $s$ to $t$ in $G$ there is a path from $s$ to $v$ in $G'$ that lacks only the last edge and for every shortest path from $s$ to $v$ in $G'$ there is a path from $s$ to $t$ in $G$ that is the same apart from the additional edge $(v, t)$ at the end.
This means that for every $s \in V$ it holds $\Pi_{s, v}(G') = \{(e_1, \dots, e_{k-1}): (e_1, \dots, e_{k-1}, e_k) \in \Pi_{s, t}(G)\}$.
In particular, for every $w \in V - \{t\}$ we have:
    \[\sigma_{s, v}(G', w) = \sigma_{s,t}(G, w),\]
    \[\sigma_{s, v}(G') = \sigma_{s,t}(G).\]

This gives for every $w \in V - \{t\}$:
\begin{align*}
S^v_w(G', b) &= \sum_{s \in V - \{t\}} b(s) \cdot \sigma_{s,v}(G', w)\\
             &= \sum_{s \in V - \{t\}} b(s) \cdot \sigma_{s,t}(G, w) 
             + 0 \cdot \sigma_{t,t}(G', w)\\
             &= S^t_w(G, b).
\end{align*}

\subsubsection{Symmetry:} 
Fix a path $\pi = ((v_1, v_2), \dots, (v_{k-1}, v_k))$ and let $\pi^R = ((v_k, v_{k-1}), \dots, (v_2, v_1))$.
Note that for every path $\pi$ from $s$ to $t$ in $G$ there is a path $\pi^R$ from $t$ to $s$ in $G'$.
This means that for every $s \in V$ it holds $\Pi_{t, s}(G') = \{\pi^R: \pi \in \Pi_{s, t}(G)\}$.
In particular, for every $w \in V$ we have:
    \[\sigma_{t,s}(G', w) = \sigma_{s,t}(G, w),\]
    \[\sigma_{t,s}(G') = \sigma_{s,t}(G).\]

This gives for every $w \in V$:
\begin{align*}
S^s_w(G', \mathds{1}^t) = \sigma_{t,s}(G', w)
                        = \sigma_{s,t}(G, w) = S^t_w(G, \mathds{1}^s).
\end{align*}

\subsubsection{Direct Link Domination:}
Let $G' = G - \{(v, u)\}$.
We know deleted edge $(v, u)$ is not a part of any shortest path from any $s$ to $t$, because if it was, then the path having the same edges from $s$ to $v$ and then edge $(v, t)$ would be shorter.
This means that for every $s \in V$ it holds $\Pi_{s, t}(G') = \Pi_{s, t}(G)$.
In particular, for every $w \in V$ we have:
    \[\sigma_{s,t}(G', w) = \sigma_{s,t}(G, w),\]
    \[\sigma_{s,t}(G') = \sigma_{s,t}(G).\]

This gives for every $w \in V$:
\begin{align*}
S^t_w(G', b) &= \sum_{s \in V} b(s) \cdot \sigma_{s,t}(G', w) 
           \\&= \sum_{s \in V} b(s) \cdot \sigma_{s,t}(G, w) = S^t_w(G, b).
\end{align*}

\subsubsection{Atom \texorpdfstring{$k$-$k$}{k-k}:} 
In the graph $(\{s,t\}, k \cdot \{(s,t)\})$ from Atom axioms each edge constitutes a shortest path from $s$ to $t$, and both nodes are on all of these paths.
From definitions we have:
    \[S^t_s(G, \mathds{1}^s) = \sigma_{s,t}(G, s) = k = \sigma_{s,t}(G, t) = S^t_t(G, \mathds{1}^s).\]

\subsection{Axioms \texorpdfstring{$\Rightarrow$}{=>} t-Stress}
We will now prove that there is at most one centrality that satisfies Locality, Additivity, Node Redirect, Target Proxy, Symmetry, Direct Link Domination and Atom $k$-$k$.
The proof is joint with the proof of Lemma~\ref{lemma:axioms_to_betweenness}.

\begin{lemma}\label{lemma:axioms_to_stress}
If $F$ satisfies Locality, Additivity, Node Redirect, Target Proxy, Symmetry, Direct Link Domination and Atom $k$-$k$, then it is t-Stress Centrality. 
\end{lemma}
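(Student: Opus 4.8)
The plan is to obtain this statement as an immediate corollary of the uniqueness result already established, exactly paralleling the proof of Lemma~\ref{lemma:axioms_to_betweenness}. The key observation is that Lemma~\ref{lemma:b:determined} is stated with the hypothesis ``Baseline $1$-$1$ or Baseline $k$-$k$'': whichever of the two baseline variants is assumed, the axioms Locality, Additivity, Node Redirect, Target Proxy, Symmetry and Direct Link Domination together with that baseline pin down a single numerical value of $F^t_v(G,b)$ on every weighted graph in $\mathcal{G}_t$. Under the hypotheses of the present lemma the relevant baseline is Baseline $k$-$k$, so by Lemma~\ref{lemma:b:determined} the measure $F$ is uniquely determined.

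Second, I would invoke Lemma~\ref{lemma:s_to_axioms}, which records that t-Stress Centrality $S^t$ satisfies all seven axioms in question, and in particular Baseline $k$-$k$ (rather than Baseline $1$-$1$). Since $S^t$ is a centrality measure meeting all the hypotheses of the present lemma, and those hypotheses admit at most one such measure, every $F$ satisfying them must agree with $S^t$ on every input; that is, $F = S^t$. This is precisely the template of the proof of Lemma~\ref{lemma:axioms_to_betweenness}, with $S^t$ and Baseline $k$-$k$ in place of $B^t$ and Baseline $1$-$1$, so the whole argument fits in two sentences once the two cited lemmas are in hand.

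The only point that requires any care --- and the place I would look hardest --- is confirming that the determination argument underlying Lemma~\ref{lemma:b:determined} truly never commits to the specific source value $1$ rather than $k$ on the two-node graph $(\{s,t\}, k\cdot\{(s,t)\})$. Tracing the dependencies, the baseline enters solely through Lemma~\ref{lemma:b:dist_0} (the value of the target on the one-node graph) and Lemma~\ref{lemma:b:dist_1} (the values of both source and target on the distance-one graph), and both of those lemmas deliberately conclude only that the relevant quantities are ``determined'', treating the baseline as an opaque oracle; every later step (Distance $2$, Single Source, and the weight-decomposition in Lemma~\ref{lemma:b:determined}) then propagates whatever seed values that oracle supplies. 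Hence nothing beyond the two citations above is needed, and no new computation arises --- the whole burden of the characterization has already been discharged in the shared lemmas, and this lemma merely reads off the consequence for the $k$-$k$ baseline.
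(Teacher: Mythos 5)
Your proposal matches the paper's own proof exactly: it cites Lemma~\ref{lemma:b:determined} for uniqueness under either baseline and Lemma~\ref{lemma:s_to_axioms} for the fact that t-Stress Centrality satisfies the axiom set with Baseline $k$-$k$, concluding $F = S^t$. The extra dependency check you perform is sound but not needed beyond what the cited lemmas already guarantee.
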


\begin{proof}
Because $F$ satisfies Atom $k$-$k$, it satisfies Atom $f(k)$-$f(k)$, for $f(k) = k$.
From Lemma~\ref{lemma:b:determined} we know that $F$ is uniquely determined.
But from Lemma~\ref{lemma:s_to_axioms} we know that t-Stress satisfies these axioms, so this is our determined centrality.
\end{proof}

\section{Proof of Theorem~\ref{theorem:random_walk_betweenness}}
In this section, we present the full proof of Theorem~\ref{theorem:random_walk_betweenness}.
We start with showing that the Random Walk Betweenness satisfies the axioms.
In the main part, we prove that the axioms uniquely characterize a centrality measure.

\subsection{t-Random Walk Betweenness \texorpdfstring{$\Rightarrow$}{=>} Axioms}
We will now consider each of the axioms: Locality, Additivity, Node Redirect, Target Proxy, Edge Swap and Edge Multiplication, Atom $1$-$1$ and show that t-Random Walk Betweenness satisfies it.

\begin{lemma}\label{lemma:rwb_to_axioms}
t-Random Walk Betweenness satisfies Locality, Node Redirect, Target Proxy, Edge Swap, Edge Multiplication and Atom $1$-$1$.
\end{lemma}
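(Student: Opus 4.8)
\emph{Proof plan.} The plan is to prove this jointly with Lemma~\ref{lemma:pr_to_axioms}: since t-Random Walk Betweenness is the $a=1$ instance of the stretched t-PageRank, every argument below carries over to t-PageRank by attaching a factor $a$ to each step of the walk, which is exactly what changes in the flow equations. The central tool is a flow-balance characterization of $RWB^t$. For $G=(V,E)\in\mathcal{G}_t$ let $Q^G$ be the $V\times V$ matrix with $Q^G_{zw}=m_{(z,w)}(E)/|\Gamma^+_z(G)|$ for $z\ne t$ and $Q^G_{tw}=0$; this is the one-step transition matrix of the random walk on $G-\Gamma^+_t(G)$ (every $z\ne t$ has out-degree $\ge 1$, since $t$ is reachable from $z$). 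Because $t$ is reachable from every node and is a dead end in $G-\Gamma^+_t(G)$, the walk is absorbed with probability $1$, so $(Q^G)^n\to 0$, $I-(Q^G)^\top$ is invertible, and from the definition $RWB^t(G,b)=\sum_{k\ge 0}\big((Q^G)^\top\big)^k b=(I-(Q^G)^\top)^{-1}b$. Equivalently, $y=RWB^t(G,b)$ is the \emph{unique} vector satisfying
\[
y_w=b(w)+\sum_{(z,w)\in E,\ z\ne t}\frac{y_z}{|\Gamma^+_z(G)|}\qquad\text{for all }w\in V,
\]
the inner sum running over edges; for t-PageRank the same holds with $aQ^G$ in place of $Q^G$, and $I-a(Q^G)^\top$ is invertible for every $a\in[0,1]$. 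I would also record that $RWB^t_t(G,b)=\sum_{s\in V}b(s)$ (absorption probability $1$ from each source) and that a node unreachable from every positively-weighted source has centrality $0$.

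With this in hand, five of the axioms are routine bookkeeping. \emph{Locality}: when $V\cap V'=\{t\}$, a walk started in $V$ cannot leave $V$ before absorption, so for $w\in V-\{t\}$ the visit probabilities and hence $RWB^t_w$ are unchanged, while for $w=t$ the identity $RWB^t_t=\sum_s b(s)$ gives additivity. \emph{Node Redirect}: the projection $f_{u\to v}$ carries the walk on $G-\Gamma^+_t$ to the walk on $R_{u\to v}(G)-\Gamma^+_t$ with the correct transition probabilities (out-degrees are preserved and the in-edge counts of the merged node add up); combined with the out-twin property (the walks from $v$ and from $u$ have the same law after step $0$) and $b'(v)=b(v)+b(u)$, this yields the two claimed equalities. \emph{Target Proxy}: under the hypotheses $v$ has its unique out-edge to $t$ in $G-\Gamma^+_t$, and in $M_{t\to v}(G)-\Gamma^+_v$ the node $v$ is absorbing with exactly its original in-edges (the self-loop created by merging $(v,t)$ is deleted together with $v$'s other out-edges), no other node being incident to $t$; hence a walk from any $s\ne t$ is at any $w\ne t$ at step $k$ in one graph iff in the other, and since $b(t)=0$, summing over sources gives $F^v_w(M_{t\to v}(G,b))=F^t_w(G,b)$ (and $M_{t\to v}(G,b)\in\mathcal{G}_v$, since every node could reach $v$, then $t$, in $G$). \emph{Edge Multiplication}: adding $k$ copies of each out-edge of $v$ multiplies $m_{(v,w)}$ and $|\Gamma^+_v|$ both by $k+1$, leaving $Q^G$ (and $\Gamma^+_t$, even when $v=t$) unchanged. \emph{Baseline $1$-$1$}: in $G=(\{s,t\},k\cdot\{(s,t)\})$ the walk from $s$ is at $s$ at step $0$ and at $t$ at step $1$, then absorbed, so both centralities equal $1$ (for t-PageRank the target gets $a\cdot 1<1$, i.e.\ Baseline $1$-$a$).

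The one step needing a real argument — and the main obstacle — is \emph{Edge Swap}. Let $y=RWB^t(G,b)$ and put $\phi:=y_v/|\Gamma^+_v(G)|$; the hypotheses $RWB^t_v(G,b)=RWB^t_u(G,b)$ and $|\Gamma^+_v(G)|=|\Gamma^+_u(G)|$ give $\phi=y_u/|\Gamma^+_u(G)|$, i.e.\ the edges $(v,v')$ and $(u,u')$ carry the same flow. Passing to $G'$ deletes $(v,v'),(u,u')$ and adds $(v,u'),(u,v')$; as $v,u\ne t$ this leaves every out-degree and $\Gamma^+_t$ unchanged, and in the balance equations it affects only the terms indexed by the destinations $v'$ and $u'$, each time replacing a contribution $\phi$ by another contribution $\phi$. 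Hence $y$ also satisfies the balance equations of $G'$; since $G'\in\mathcal{G}_t$ those equations have a unique solution, so $RWB^t(G',b)=y=RWB^t(G,b)$. The points that require care are precisely that equal centralities plus equal out-degrees force equal per-edge flows, and the uniqueness of the balance solution; everything else is mechanical manipulation of $Q^G$.
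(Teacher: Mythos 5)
Your proposal is correct and follows essentially the same route as the paper: direct manipulation of the step-$k$ visit probabilities for Locality, Node Redirect, Target Proxy, Edge Multiplication and Baseline, and the flow-balance (fixed-point) characterization together with its uniqueness for Edge Swap. The only notable difference is cosmetic: you justify the uniqueness of the balance system internally via absorption with probability one (so that $I-(Q^G)^\top$ is invertible even at $a=1$), whereas the paper imports this equivalent definition from the cited literature.
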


Let us define a shorthand $G_t = G - \Gamma^+_t(G)$.
\subsubsection{Locality:}
Fix node $s \in V$.
Notice that for every $v \in V - \{t\}$ outgoing edges of $v$ in graphs $G_t$ and $(G+G')_t$ are the same.
Hence, for every $v, u \in V$ the random walk starting in $s$ will have the same probability of transition from $v$ to $u$ in graphs $G_t$ and $(G+G')_t$.
Also, for every $v \in V - \{t\}$ and $u' \in V' - \{t\}$ the probability of transition from $v$ to $u'$ will be zero, because there are no edges from $v$ to $u'$.
This means that for every $u \in V$, $u' \in V' - \{t\}$ and $k \in \mathds{N}_{\geq 0}$:
    \[\mathds{P}(\omega_{(G+G')_t,s}(k) \!=\! u) = \mathds{P}(\omega_{G_t,s}(k) \!=\! u),\]
    \[\mathds{P}(\omega_{(G+G')_t,s}(k) \!=\! u') = 0.\]
Similarly, we get that for every $s \in V'$, $u \in V$ and $k \in \mathds{N}_{\geq 0}$:
    \[\mathds{P}(\omega_{(G+G')_t,s'}(k) \!=\! u) = 0.\]

This gives for every $w \in V - \{t\}$:
\begin{align*}
&RWB^t_w(G+G', b+b')\\ 
&= \sum_{s \in V - \{t\}} \sum_{k=0}^{\infty} b(s) \cdot \mathds{P}(\omega_{(G+G')_t,s}(k) \!=\! w)\\
&+ \sum_{k=0}^{\infty} (b(t) + b'(t)) \cdot 0\\
&+ \sum_{s' \in V' - \{t\}} \sum_{k=0}^{\infty} b'(s') \cdot \mathds{P}(\omega_{(G+G')_t,s'}(k) \!=\! w)\\
&= \sum_{s \in V - \{t\}} \sum_{k=0}^{\infty} b(s) \cdot \mathds{P}(\omega_{G_t,s}(k) \!=\! w) \\
&+ \sum_{k=0}^{\infty} b(t) \cdot 0\\
&= RWB^t_w(G, b)\\
\end{align*}
and:
\begin{align*}
&RWB^t_t(G+G', b+b')\\ 
&= \sum_{s \in V - \{t\}} \sum_{k=0}^{\infty} b(s) \cdot \mathds{P}(\omega_{(G+G')_t,s}(k) \!=\! t)\\
&+ (b(t) + b'(t)) \\
&+ \sum_{s' \in V' - \{t\}} \sum_{k=0}^{\infty} b'(s') \cdot \mathds{P}(\omega_{(G+G')_t,s'}(k) \!=\! t)\\
&= \sum_{s \in V - \{t\}} \sum_{k=0}^{\infty} b(s) \cdot \mathds{P}(\omega_{G_t,s}(k) \!=\! t) + b(t)\\
&+ \sum_{s' \in V' - \{t\}} \sum_{k=0}^{\infty} b'(s') \cdot \mathds{P}(\omega_{G'_t,s'}(k) \!=\! t) + b'(t)\\
&= RWB^t_t(G, b) + RWB^t_t(G', b').\\
\end{align*}

\subsubsection{Additivity:}
From the definition we have:

\begin{align*}
&RWB^t_w(G, b+b')\\ 
&= \sum_{s \in V} \sum_{k=0}^{\infty} (b(s) + b'(s)) \cdot \mathds{P}(\omega_{G_t,s}(k) \!=\! w)\\
&= \sum_{s \in V} \sum_{k=0}^{\infty} b(s) \cdot \mathds{P}(\omega_{G_t,s}(k) \!=\! w)\\
&+ \sum_{s \in V} \sum_{k=0}^{\infty} b'(s) \cdot \mathds{P}(\omega_{G_t,s}(k) \!=\! w)\\
&= RWB^t_w(G, b) + RWB^t_w(G, b').\\
\end{align*}

\subsubsection{Node Redirect:}
For every graph $G=(V,E)$ and nodes $v, u \in V$ let us denote by $p_{v,u}((V, E)) = \frac{m_{(v,u)}(E)}{|\Gamma^+_v(G)|}$ the probability of the direct transition from node $v$ to node $u$ of the random walk on $G$.

Fix a graph $G=(V,E)$ and out-twins $v, u \in V$.
For every $w \in V$ we have:
    \[p_{u,w}(G_t) = p_{v,w}(G_t).\]
    
Let $(G', b') = R_{u \rightarrow v}(G, b)$ and $G' = (V', E')$.
We will now examine transition probabilities in $G'_t$ for every nodes $w, r \in V' - \{v\}$: from $v$ to $v$, from $w$ to $v$, from $v$ to $w$ and finally from $w$ to $r$.
First, outgoing edges of $v$ are the same in $G_t$ and $G'_t$, so we know that:     \[p_{v,v}(G'_t) = p_{v,v}(G_t).\]
Next, because each outgoing edge of $w$ ending in $v$ in $G'_t$ has corresponding edge ending in $v$ or $u$ in $G_t$ we know that:
    \[p_{w,v}(G'_t) = p_{w,v}(G_t) + p_{w,u}(G_t).\]
Then, the number of edges $(v, w)$ is the same in $G_t$ and $G'_t$, as well as out-degree of $v$ is the same in $G_t$ and $G'_t$, because outgoing edges of $v$ are the same.
Hence we know that:
    \[p_{v, w}(G'_t) = p_{v, w}(G_t).\]
Finally, number of edges $(w, r)$ is the same in $G_t$ and $G'_t$, because only edges incident to $v$ and $u$ are different, as well as out-degree of $w$ is the same in $G_t$ and $G'_t$.
Hence, we know that:
    \[p_{w, r}(G'_t) = p_{w, r}(G_t).\]

Now we want to consider the whole random walk.
Firstly, for every $w \in V - \{v, u\}$ and every $k \in \mathds{N}_{\geq 0}$ we know that:
    \[\mathds{P}(\omega_{G_t,u}(k) \!=\! w) = \mathds{P}(\omega_{G_t,v}(k) \!=\! w),\]
because $v$ and $u$ are out-twins.
Moreover:
\begin{multline*}
    \mathds{P}(\omega_{G_t,u}(k) \!=\! v) + \mathds{P}(\omega_{G_t,u}(k) \!=\! u)  = \\
    \mathds{P}(\omega_{G_t,v}(k) \!=\! v) + \mathds{P}(\omega_{G_t,v}(k) \!=\! u),
\end{multline*}
because in step zero both sides are equal one, and in further steps the random walks starting from $v$ and from $u$ behave the same.

We would like to prove for every $s \in V'$, $w \in V' - \{v\} $ and $k \in \mathds{N}_{\geq 0}$ that $\mathds{P}(\omega_{G'_t,s}(k) \!=\! w) = \mathds{P}(\omega_{G_t,s}(k) \!=\! w)$ and $\mathds{P}(\omega_{G'_t,s}(k) \!=\! v) = \mathds{P}(\omega_{G_t,s}(k) \!=\! v) + \mathds{P}(\omega_{G_t,s}(k) \!=\! u)$.
The proof will be inductive.
The basis of the induction is $k = 0$:
    \[\mathds{P}(\omega_{G'_t,s}(0) \!=\! w) = [s \!=\! w] = \mathds{P}(\omega_{G_t,s}(0) \!=\! w),\] 
\begin{multline*}
    \mathds{P}(\omega_{G'_t,s}(0) \!=\! v) = [s\!=\!v] =\\
    \mathds{P}(\omega_{G_t,s}(0) \!=\! v) + \mathds{P}(\omega_{G_t,s}(0) \!=\! u),
\end{multline*}
because $s \neq u$.

Let us show the inductive step.
\begin{align*}
&\mathds{P}(\omega_{G'_t,s}(k+1) \!=\! w)\\ 
&= \sum_{r \in V - \{v, u\}} p_{r,w}(G'_t) \cdot \mathds{P}(\omega_{G'_t,s}(k) \!=\! r) \\
&+ p_{v,w}(G'_t) \cdot \mathds{P}(\omega_{G'_t,s}(k) \!=\! v)\\ 
&= \sum_{r \in V - \{v, u\}} p_{r,w}(G_t) \cdot \mathds{P}(\omega_{G_t,s}(k) \!=\! r) \\
&+ p_{v,w}(G_t) \cdot (\mathds{P}(\omega_{G_t,s}(k) \!=\! v) + \mathds{P}(\omega_{G_t,s}(k) \!=\! u)) \\
&= \mathds{P}(\omega_{G_t,s}(k+1) \!=\! w)
\end{align*}
and
\begin{align*}
&\mathds{P}(\omega_{G'_t,s}(k+1) \!=\! v)\\ 
&= \sum_{r \in V - \{v, u\}} p_{r,v}(G'_t) \cdot \mathds{P}(\omega_{G'_t,s}(k) \!=\! r) \\
&+ p_{v,v}(G'_t) \cdot \mathds{P}(\omega_{G'_t,s}(k) \!=\! v)\\ 
&= \sum_{r \in V - \{v, u\}} (p_{r,v}(G_t) + p_{r,u}(G_t)) \cdot \mathds{P}(\omega_{G_t,s}(k) \!=\! r) \\
&+ p_{v,v}(G_t) \cdot (\mathds{P}(\omega_{G_t,s}(k) \!=\! v) + \mathds{P}(\omega_{G_t,s}(k) \!=\! u))\\ 
&= \mathds{P}(\omega_{G_t,s}(k+1) \!=\! v) + \mathds{P}(\omega_{G_t,s}(k+1) \!=\! u).
\end{align*}

Knowing how the random walk behaves, we deduce that:

\begin{align*}
&RWB^t_w(G', b')\\ 
&= \sum_{s \in V - \{v\}} \sum_{k=0}^{\infty} b(s) \cdot \mathds{P}(\omega_{G'_t,s}(k) \!=\! w)\\
&+ \sum_{k=0}^{\infty} (b(v) + b(u)) \cdot \mathds{P}(\omega_{G'_t,v}(k) \!=\! w)\\
&= \sum_{s \in V - \{v\}} \sum_{k=0}^{\infty} b(s) \cdot \mathds{P}(\omega_{G_t,s}(k) \!=\! w)\\
&+ \sum_{k=0}^{\infty} b(v) \!\cdot\! \mathds{P}(\omega_{G_t,v}(k) \!=\! w)
    + \sum_{k=0}^{\infty} b(u) \!\cdot\! \mathds{P}(\omega_{G_t,u}(k) \!=\! w)\\
&= RWB^t_w(G, b)\\
\end{align*}
and
\begin{align*}
&RWB^t_v(G', b')\\ 
&= \sum_{s \in V - \{v\}} \sum_{k=0}^{\infty} b(s) \cdot \mathds{P}(\omega_{G'_t,s}(k) \!=\! v)\\
&+ \sum_{k=0}^{\infty} (b(v) + b(u)) \cdot \mathds{P}(\omega_{G'_t,v}(k) \!=\! v)\\
&= \sum_{s \in V - \{v\}} \sum_{k=0}^{\infty} b(s) \!\cdot\! (\mathds{P}(\omega_{G_t,s}(k) \!=\! v) + \mathds{P}(\omega_{G_t,s}(k) \!=\! u))\\
&+ \sum_{k=0}^{\infty} b(v) \!\cdot\! (\mathds{P}(\omega_{G_t,v}(k) \!=\! v) + \mathds{P}(\omega_{G_t,v}(k) \!=\! u))\\ 
&+ \sum_{k=0}^{\infty} b(u) \!\cdot\! (\mathds{P}(\omega_{G_t,u}(k) \!=\! v) + \mathds{P}(\omega_{G_t,u}(k) \!=\! u))\\
&= RWB^t_v(G, b) + RWB^t_u(G, b).\\
\end{align*}

\subsubsection{Target Proxy:}
Let $(G', b) = M_{t \rightarrow v}(G, b)$.
Notice that the transition probability from $v$ to every node $w \in V - \{t\}$ is equal to zero in $G_t$, because the only outgoing edges from $v$ leads to $t$.
It is also equal to zero in $G'_v$, because outgoing edges of target $v$ are deleted.
In particular, the transition probability from $v$ to $w$ is the same in $G_t$ and $G'_v$.
Outgoing edges of every node $w \in V - \{v, t\}$ are the same in $G_t$ and $G'_v$, so the transition probability from $w$ to every node $r \in V - \{t\}$ is also the same in $G_t$ and $G'_v$.
This means that for every $s, w \in V - \{t\}$ and every $k \in \mathds{N}_{\geq 0}$:
    \[\mathds{P}(\omega_{G'_v,s}(k) \!=\! w) = \mathds{P}(\omega_{G_t,s}(k) \!=\! w).\]

This gives:
\begin{align*}
    &RWB^{v}_w(G', b) = \sum_{s \in V - \{t\}} \sum_{k=0}^{\infty} b(s) \cdot \mathds{P}(\omega_{G'_v,s}(k) \!=\! w)\\
    &= \sum_{s \in V - \{t\}} \sum_{k=0}^{\infty} b(s) \cdot \mathds{P}(\omega_{G_t,s}(k) \!=\! w)\\ 
    &+ \sum_{k=0}^{\infty} 0 \cdot \mathds{P}(\omega_{G_t,t}(k) \!=\! w)\\
    &= RWB^t_w(G, b).\\
\end{align*}

\subsubsection{Edge Swap:}
We will need an alternative definition of t-Random Walk Betweenness. Namely, t-Random Walk Betweenness is the unique function satisfying for every graph $G = (V, E) \in \mathcal{G}_t$ with weights $b$ a system of equations, one for every node $w \in V$:
    \[RWB^t_w(G, b) = 
    b(w) + \sum_{(r, w) \in \Gamma^-_w(G_t)} \frac{RWB^t_r(G_t, b)}{\Gamma^+_u(G_t)}.\]

We obtain this equivalent definition from the fact that our t-Random Walk Betweenness is the special case of \citet{Bloechl:etal:2011} betweenness-like centrality.

We will show that $RWB^t(G, b)$ values satisfy also the system of equations for $RWB^t(G', b)$.
Since it has only one solution, this will imply that
$RWB^t(G, b) = RWB^t(G', b)$.

Note that the out-degree of every node is the same in $G$ and $G'$.
Moreover, incoming edges of every $w \in V - \{v', u'\}$ are the same in $G$ and $G'$.
Hence, for every $w \in V - \{v', u'\}$ we have its $RWB^t(G', b)$ equation:

\begin{align*}
RWB^t_w(G', b) 
& =  b(w) + \sum_{(r, w) \in \Gamma^-_w(G'_t)} \frac{RWB^t_r(G'_t, b)}{\Gamma^+_u(G'_t)} \\
& = b(w) + \sum_{(r, w) \in \Gamma^-_w(G_t)} \frac{RWB^t_r(G'_t, b)}{\Gamma^+_u(G_t)}.
\end{align*}

This equation is satisfied by $RWB^t(G, b)$ values, because the $RWB^t(G, b)$ equation for $w$ is satisfied.

Now let us analyze $RWB^t(G', b)$ equation for $v'$ (equation for $u'$ is analogous). We use the assumption that $v$ and $u$ have equal out-degrees and t-Random Walk Betweenness values in $(G, b)$:

\begin{align*}
&RWB^t_{v'}(G', b)\\
&= b(v') 
    + \frac{RWB^t_{u}(G'_t, b)}{\Gamma^+_{u}(G'_t)} \\
&+ \sum_{(r, v') \in \Gamma^-_{v'}(G'_t) - \{(u, v')\}} \frac{RWB^t_r(G'_t, b)}{\Gamma^+_r(G'_t)}\\
&= b(v') 
    + \frac{RWB^t_{v}(G'_t, b)}{\Gamma^+_{v}(G_t)} \\
&+ \sum_{(r, v') \in \Gamma^-_{v'}(G_t) - \{(v, v')\}} \frac{RWB^t_r(G'_t, b)}{\Gamma^+_r(G_t)}.
\end{align*}

This equation is satisfied by $RWB^t(G, b)$ values, because the $RWB^t(G, b)$ equation for $v'$ is satisfied.

\subsubsection{Edge Multiplication:}
Let $G' = (V, E') = G + k \cdot \Gamma^+_v(G)$.
Notice that for every node $u \in V$ and every $k \in \mathds{N}_{\geq 0}$ the probability of the random walk being in $u$ in the step $k+1$ provided that it was in $v$ in the step $k$ is the same in $G'$ as in $G$.
Specifically, for every $s \in V$:

\begin{align*}
&\mathds{P}(\omega_{G'_t,s}(k+1) \!=\! u | \omega_{G'_t,s}(k) \!=\! v)
= \frac{k \cdot m_{(v, u)}(E)}{k \cdot |\Gamma^+_v(G)|} \\ 
&= \frac{m_{(v, u)}(E')}{|\Gamma^+_v(G')|}
= \mathds{P}(\omega_{G_t,s}(k+1) \!=\! u | \omega_{G_t,s}(k) \!=\! v).
\end{align*}

This means that the transition probability between every two nodes are the same in $G'$ and in $G$.
Hence:

\begin{align*}
RWB^{t}_w(G', b) = RWB^{t}_w(G, b).
\end{align*}

\subsubsection{Atom:}
We know that the random walk starting in $s$ is in $s$ in step zero and in $t$ in step one, where it ends its travel.
Hence:

\begin{align*}
&RWB^t_s(G, \mathds{1}^s) 
= \sum_{k=0}^{\infty} \mathds{P}(\omega_{G_t,s}(k) \!=\! s)\\
&= \mathds{P}(\omega_{G_t,s}(0) \!=\! s)
= 1
\end{align*}
and:
\begin{align*}
&RWB^t_t(G, \mathds{1}^s) 
= \sum_{k=0}^{\infty} \mathds{P}(\omega_{G_t,s}(k) \!=\! t)\\
&= \mathds{P}(\omega_{G_t,s}(1) \!=\! t)
= 1.
\end{align*}

\subsection{Axioms \texorpdfstring{$\Rightarrow$}{=>} t-Random Walk Betweenness}

We will now prove that there is at most one centrality that satisfies Locality, Node Redirect, Target Proxy, Edge Swap, Edge Multiplication and Atom $1$-$1$.
Combined with the fact that t-Random Walk Betweenness satisfies these axioms, we get that the axioms determine the centrality to be t-Random Walk Betweenness.

\begin{lemma}\label{lemma:pr:one_arrow}
(1-Arrow Graph) If $F$ satisfies Locality, Node Redirect, Target Proxy and Atom $1$-$1$ then for every graph $G = (\{s, t\}, \{(s, t)\})$ and $x \in \mathds{R}_{\geq 0}$:
    \[F^t_s(G, x \cdot \mathds{1}^s) = F^t_t(G, x \cdot \mathds{1}^s) = x.\]
\end{lemma}

\begin{proof}
Consider graph $G_1 = (\{s, s', t\}, \{(s, t), (s', t)\})$.
From Node Redirect we have that for every $y, z \in \mathds{R}_{\geq 0}$:
\begin{multline*}
    F^t_s(G, (y + z) \cdot \mathds{1}^s) =\\
    F^t_s(G_1, y \cdot \mathds{1}^s + z \cdot \mathds{1}^{s'}) + F^t_{s'}(G_1, y \cdot \mathds{1}^s + z \cdot \mathds{1}^{s'})
\end{multline*}
and:
    \[F^t_t(G, (y + z) \cdot \mathds{1}^s) = F^t_t(G_1, y \cdot \mathds{1}^s + z \cdot \mathds{1}^{s'}).\] 
Now, from Locality we know for every $y, z \in \mathds{R}_{\geq 0}$ it holds:
\begin{align*}
    F^t_s(G_1, y \cdot \mathds{1}^s + z \cdot \mathds{1}^{s'}) & = F^t_s(G, y \cdot \mathds{1}^s), \\
    F^t_{s'}(G_1, y \cdot \mathds{1}^s + z \cdot \mathds{1}^{s'}) & = F^t_{s'}((\{s', t\}, \{(s', t)\}), z \cdot \mathds{1}^{s'})
\end{align*}
and
\begin{multline*}
    F^t_t(G_1, y \cdot \mathds{1}^s + z \cdot \mathds{1}^{s'}) =\\
    F^t_t(G, y \cdot \mathds{1}^s) + F^t_{t}((\{s', t\}, \{(s', t)\}), z \cdot \mathds{1}^{s'}).
\end{multline*}
However, Lemma~\ref{lemma:all:anonymity} (Anonymity) implies that:
\begin{align*}
F^t_{s'}((\{s', t\}, \{(s', t)\}, z \cdot \mathds{1}^{s'}) & = F^t_{s}(G, z \cdot \mathds{1}^s),\\
F^t_{t}((\{s', t\}, \{(s', t)\}, z \cdot \mathds{1}^{s'}) & = F^t_{t}(G, z \cdot \mathds{1}^s).
\end{align*}
To summarize, we now know that for $v \in \{s, t\}$:
\begin{equation*} \label{eq:pr:one_arrow:ADD}
    F^t_v(G, (y + z) \cdot \mathds{1}^s) \!=\! F^t_v(G, y \cdot \mathds{1}^s) \!+\! F^t_v(G, z \cdot \mathds{1}^s),
\end{equation*}
that is, the function $F^t_v(G, x \cdot \mathds{1}^s)$ is additive with respect to $x$. 
From the definition it is non-negative so we know it is linear, that is:
\begin{equation} \label{eq:pr:one_arrow:LIN}
    F^t_v(G, x \cdot \mathds{1}^s) = x \cdot F^t_v(G, \mathds{1}^s).
\end{equation}
From Atom $1$-$1$ and~\eqref{eq:pr:one_arrow:LIN} we know that $F^t_s(G, x \cdot \mathds{1}^s) = x$ and $F^t_t(G, x \cdot \mathds{1}^s) = x$, which concludes the proof.
\end{proof}

\begin{lemma}\label{lemma:pr:almost_one_arrow}
(Almost 1-Arrow Graph) If $F$ satisfies Locality, Node Redirect, Target Proxy, Edge Multiplication and Atom $1$-$1$ then for every graph $G = (\{s, t\}, E)$, such that $E - \Gamma^+_t(G) = \{(s, t)\}$ and $x \in \mathds{R}_{\geq 0}$:
    \[F^t_s(G, x \cdot \mathds{1}^s) = F^t_t(G, x \cdot \mathds{1}^s) = x.\]
\end{lemma}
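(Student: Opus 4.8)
The plan is to normalise $G$ to a canonical two-node graph and then strip off the edges out of $t$ by their consequences. Since $V=\{s,t\}$, every edge of $E-\{(s,t)\}$ must start at $t$, so $E=\{(s,t)\}+p\cdot\{(t,t)\}+q\cdot\{(t,s)\}$ for some $p,q\ge 0$. Applying Target Self-Loop (Lemma~\ref{lemma:all:t_loop}) $p$ times removes the target self-loops without changing any centrality, so it suffices to prove the claim for $G_q:=(\{s,t\},\{(s,t)\}+q\cdot\{(t,s)\})$ with weights $x\cdot\mathds{1}^s$.

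First I would dispose of the zero-weight case, showing $F^t_s(G_q,0)=F^t_t(G_q,0)=0$ for all $q$. For $q=0$ this is Lemma~\ref{lemma:pr:one_arrow} with $x=0$. For $q\ge 1$ the idea is to duplicate the back-edges and then split the source. By Edge Multiplication applied at $t$ (with $k=1$) we get $F^t_w(G_q,0)=F^t_w(G_{2q},0)$ for every $w$. Now split $s$ in $G_{2q}$ into two out-twins $s$ and $s'$ — each keeping its single outgoing edge to $t$ — distributing the $2q$ edges $(t,s)$ evenly, giving $s$ the edges $q\cdot\{(t,s)\}$ and $s'$ the edges $q\cdot\{(t,s')\}$, with zero weight everywhere; redirecting $s'$ back into $s$ reproduces $G_{2q}$, so Node Redirect transfers the centralities to $\widehat G:=(\{s,s',t\},\{(s,t),(s',t)\}+q\cdot\{(t,s)\}+q\cdot\{(t,s')\})$. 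The point of the even split is that $\widehat G$ decomposes under Locality into two graphs meeting only in $t$, each a renamed copy of $G_q$; hence Locality together with Anonymity (Lemma~\ref{lemma:all:anonymity}) gives $F^t_t(G_q,0)=F^t_t(G_q,0)+F^t_t(G_q,0)$, forcing $F^t_t(G_q,0)=0$, and the analogous computation at $s$ (using the split relation $F^t_s(G_{2q},0)=F^t_s(\widehat G,0)+F^t_{s'}(\widehat G,0)$) gives $F^t_s(G_q,0)=0$.

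The two required values for weight $x\cdot\mathds{1}^s$ then follow from a single further split. Split $s$ in $G_q$ into out-twins $s$, carrying weight $x$ and no incoming edge, and $s'$, carrying weight $0$ and all $q$ incoming edges (now running to $s'$); redirecting $s'$ into $s$ reproduces $(G_q,x\cdot\mathds{1}^s)$, so Node Redirect relates it to $\widetilde G:=(\{s,s',t\},\{(s,t),(s',t)\}+q\cdot\{(t,s')\})$ with weight $x\cdot\mathds{1}^s$. Locality splits $\widetilde G$ into $A:=(\{s,t\},\{(s,t)\})$ with weight $x\cdot\mathds{1}^s$ and $B:=(\{s',t\},\{(s',t)\}+q\cdot\{(t,s')\})$ with weight $0$. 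Lemma~\ref{lemma:pr:one_arrow} gives $F^t_s(A,x\cdot\mathds{1}^s)=x$ and $F^t_t(A,x\cdot\mathds{1}^s)=a^* x$ for the constant $a^*\in[0,1]$ it produces, while $B$ is a zero-weight renamed copy of $G_q$, so $F^t_{s'}(B,0)=F^t_t(B,0)=0$ by the previous step. Adding up, $F^t_s(G_q,x\cdot\mathds{1}^s)=F^t_s(A,x\cdot\mathds{1}^s)+F^t_{s'}(B,0)=x$ and $F^t_t(G_q,x\cdot\mathds{1}^s)=F^t_t(A,x\cdot\mathds{1}^s)+F^t_t(B,0)=a^* x$; since $G_0$ is itself of this form, this $a^*$ is exactly the one from Lemma~\ref{lemma:pr:one_arrow}.

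The hard part is the second paragraph: the edges $(t,s)$ cannot be deleted or separated off directly, because Locality only splits subgraphs meeting in $\{t\}$ while $(t,s)$ is incident to the non-target node $s$, and Target Self-Loop handles only $(t,t)$. The resolution — duplicate the back-edges with Edge Multiplication, then use a node-split of $s$ that distributes the copies between $s$ and a fresh twin $s'$ so that Locality finally applies — is the one genuinely non-obvious move; the resulting self-referential identity is what pins the zero-weight centralities to $0$, and this is precisely what prevents a spurious additive constant from appearing in $F^t_t(G_q,x\cdot\mathds{1}^s)$. Throughout one has to check the easy side conditions: every intermediate graph stays in $\mathcal{G}_t$ (since $t$ is still reachable from $s$ and $s'$ along their edges to $t$), and the out-twin hypotheses of Node Redirect and the singleton hypotheses of Target Proxy hold where invoked.
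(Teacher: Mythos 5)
Your proposal is correct and follows essentially the same route as the paper: strip target self-loops via Target Self-Loop, pin the zero-weight centralities to $0$ by doubling the back-edges with Edge Multiplication and then splitting $s$ into two out-twins that Locality separates into two renamed copies of the original graph, and finally split off the weighted source as a bare $1$-arrow graph handled by Lemma~\ref{lemma:pr:one_arrow}. The only difference is presentational (you make the self-referential identity at $s$ explicit alongside the one at $t$), so nothing further is needed.
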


\begin{proof}
We will assume $t$ has no self-loops in $G$, as from Lemma~\ref{lemma:all:t_loop} (Target Self-Loop) we know that deleting self-loops of $t$ does not change centralities in the graph.
Let $k = m_{(t,s)}(E)$, that is $G = (\{s, t\}, \{(s, t), k \cdot (t, s)\})$.

First let us assume $x = 0$.
Consider graph $G' = (\{s, t\}, \{(s, t), 2k \cdot (t, s)\})$.
From Edge Multiplication we know that for $v \in \{s, t\}$: $F^t_v(G, 0) = F^t_v(G', 0)$.
Now let us split $s$ into two out-twins $s$ and $s'$, each with half of the incoming edges, obtaining:
\[ G'' = (\{s, s', t\}, \{(s, t), k \cdot (t, s), (s', t), k \cdot (t, s')\}). \]
From Node Redirect we know that:
\begin{align*} 
    F^t_s(G', 0) &= F^t_s(G'', 0) + F^t_{s'}(G'', 0)\\
    F^t_t(G', 0) &= F^t_t(G'', 0)
\end{align*}
From Locality and Lemma~\ref{lemma:all:anonymity} (Anonymity) we have that:
\begin{multline*} 
F^t_v((\{s, t\}, \{(s, t), 2k \cdot (t, s)\}), 0) =\\
F^t_v((\{s, t\}, \{(s, t), k \cdot (t, s)\}), 0)\\
+ F^t_v((\{s, t\}, \{(s, t), k \cdot (t, s)\}), 0).
\end{multline*}
This means:
\begin{equation} \label{eq:pr:almost_one_arrow:zero}
    F^t_v(G, 0) = 0,
\end{equation}
which concludes the proof for $x=0$.

In the general case we again split $s$ into out-twins $s$ with the original weight and no incoming edges and $s'$ with zero weight and all the incoming edges, separate them and rename $s'$ to $s$.
From Node Redirect, Locality and Lemma~\ref{lemma:all:anonymity} (Anonymity) we have that:
\begin{equation} \label{eq:pr:almost_one_arrow:sum}
    F^t_v(G, x \!\cdot\! \mathds{1}^s) \!=\! F^t_v(G, 0) \!+\! F^t_v((\{s, t\}, \{(s, t)\}), x \!\cdot\! \mathds{1}^s).
\end{equation}

To conclude the proof, in~\eqref{eq:pr:almost_one_arrow:sum} we substitute the first term with 0 (based on \eqref{eq:pr:almost_one_arrow:zero}) and the second with $x$ (based on Lemma~\ref{lemma:pr:one_arrow}).
\end{proof}

\begin{lemma}\label{lemma:pr:k_arrow_adjoin}
(k-Arrow Adjoin) If $F$ satisfies Locality, Node Redirect, Target Proxy, Edge Multiplication and Atom $1$-$1$, then for every graph $G = (V, E) \in \mathcal{G}_t$, new node $w \notin V$ with weight $x$, numbers $k, l \in \mathds{Z}_{\geq 0}$ and node $v \in V - \{t\}$, let $G' = (V \cup \{w\}, E + k \cdot \{(w, t)\} + l \cdot \{(t, w)\})$ and $b' = b + x \cdot \mathds{1}^w$:
    \[F^t_v(G', b') = F^t_v(G, b),\]
$F^t_w(G', b') = x$ and $F^t_t(G', b') = F^t_t(G, b) + x$.
\end{lemma}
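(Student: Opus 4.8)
The plan is to realise $(G',b')$ as a graph sum of $(G,b)$ with a small two‑node gadget on $\{w,t\}$, let Locality dispose of everything outside the gadget, and then identify the gadget with a graph already handled by Lemma~\ref{lemma:pr:almost_one_arrow}. (This lemma is exactly the \emph{new node technique} used repeatedly later, so it is worth getting the statement in its full generality over $k,l$.)

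First I would observe that the statement is only non‑vacuous when $k\ge 1$, since otherwise $t$ is not reachable from $w$ and $G'\notin\mathcal{G}_t$. Assuming $k\ge 1$, set $H=(\{w,t\},\,k\cdot\{(w,t)\}+l\cdot\{(t,w)\})$ with weights $x\cdot\mathds{1}^w$ (so $t$ has weight $0$ in $H$). Because $w\notin V$ we have $\{w,t\}\cap V=\{t\}$, both $G$ and $H$ lie in $\mathcal{G}_t$, and a direct check of the definitions of graph sum and weight sum shows $(G,b)+(H,x\cdot\mathds{1}^w)=(G',b')$: the sum adjoins precisely the $k$ copies of $(w,t)$ and $l$ copies of $(t,w)$ to $E$, keeps $b$ on $V$, and assigns weight $x$ to $w$. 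Applying Locality, using each of $G$ and $H$ once as the first summand, then gives $F^t_v(G',b')=F^t_v(G,b)$ for every $v\in V-\{t\}$, together with $F^t_w(G',b')=F^t_w(H,x\cdot\mathds{1}^w)$ and $F^t_t(G',b')=F^t_t(G,b)+F^t_t(H,x\cdot\mathds{1}^w)$.

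It then remains to compute the centralities inside $H$, which I would do by collapsing the parallel edges $w\to t$ to a single one. Let $H_0=(\{w,t\},\,\{(w,t)\}+l\cdot\{(t,w)\})\in\mathcal{G}_t$; its only outgoing edge of $w$ is the single edge $(w,t)$, so $\Gamma^+_w(H_0)=\{(w,t)\}$ and hence $H=H_0+(k-1)\cdot\Gamma^+_w(H_0)$. Edge Multiplication (with multiplier $k-1\ge 0$ applied at node $w$) gives $F^t_u(H,x\cdot\mathds{1}^w)=F^t_u(H_0,x\cdot\mathds{1}^w)$ for both $u\in\{w,t\}$. Now $H_0$ satisfies the hypothesis of Lemma~\ref{lemma:pr:almost_one_arrow}, since $E(H_0)-\Gamma^+_t(H_0)=\{(w,t)\}$ (a single element). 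Reading that lemma with $w$ in the role of the source node yields $F^t_w(H_0,x\cdot\mathds{1}^w)=x$ and $F^t_t(H_0,x\cdot\mathds{1}^w)=a^*\cdot x$, with the same constant $a^*$ as in the statement being proved. Substituting these values back through the three Locality identities produces exactly the three claimed equalities.

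I do not expect a genuine obstacle here, as the argument is a short composition of already‑established facts; the points that need care are (i) checking that $H$ and $H_0$ lie in $\mathcal{G}_t$, which is precisely what forces the (implicit) assumption $k\ge 1$, (ii) choosing the gadget weight to be $x\cdot\mathds{1}^w$ so that the graph sum is \emph{literally} $(G',b')$ and $t$ keeps weight $b(t)$, and (iii) verifying that $H_0$ falls under Lemma~\ref{lemma:pr:almost_one_arrow} rather than only the more restrictive Lemma~\ref{lemma:pr:one_arrow}, since $l$ may be positive.
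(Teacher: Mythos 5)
Your proposal is correct and follows essentially the same route as the paper's proof: both decompose the graph into $(G,b)$ plus a two-node gadget on $\{w,t\}$, reduce the $k$ parallel edges to one via Edge Multiplication, and read off the gadget's centralities from Lemma~\ref{lemma:pr:almost_one_arrow}, with Locality gluing the pieces back together (the paper merely applies these three steps in the reverse order). Your explicit remarks on the implicit assumption $k\ge 1$ and on why the Almost 1-Arrow lemma (rather than the 1-Arrow lemma) is needed when $l>0$ are accurate refinements that the paper leaves tacit.
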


\begin{proof}
Consider $G_1 = (\{w, t\}, \{(w, t), l \cdot (t, w)\})$.
From Lemma~\ref{lemma:pr:almost_one_arrow} (Almost 1-Arrow Graph) we know that:
    \[F^t_w(G_1, x \cdot \mathds{1}^w) = F^t_t(G_1, x \cdot \mathds{1}^w) = x.\]
Now we multiply the edge from $w$ to $t$ by $k$, obtaining $G_2 = (\{w, t\}, k \cdot \{(w, t)\} + l \cdot \{(t, w)\})$.
From Edge Multiplication we know that:
\begin{align*}
    F^t_w(G_2, x \cdot \mathds{1}^w) & = F^t_w(G_1, x \cdot \mathds{1}^w),\\ 
    F^t_t(G_2, x \cdot \mathds{1}^w) & = F^t_t(G_1, x \cdot \mathds{1}^w).
\end{align*}
Finally, joining $G$ and $G_2$ we obtain $G'$.
From Locality we know that for $v \in V - \{t\}$:
\begin{align*}
    F^t_v(G', b') & = F^t_v(G, b)\\
    F^t_t(G', b') & = F^t_t(G, b) + F^t_t(G_2, x \cdot \mathds{1}^w).
\end{align*}
Combining above equalities concludes the proof.
\end{proof}

\begin{lemma}\label{lemma:pr:siphon}
(Siphon) If $F$ satisfies Locality, Node Redirect, Target Proxy, Edge Swap, Edge Multiplication and Atom $1$-$1$, then for every graph $G \in \mathcal{G}_t$, and node $v$, such that $\Gamma^-_v(G) = \emptyset$:
    \[F^t_v(G, b) = b(v).\]
\end{lemma}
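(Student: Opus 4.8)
The plan is to invoke the \textbf{new node technique} (Lemma~\ref{lemma:pr:k_arrow_adjoin}) twice: once to attach an auxiliary copy of $v$, and once ``in reverse'' to read off the value $F^t_v(G,b)$. If $v=t$, then, since $t$ is reachable from every node but has no incoming edges, necessarily $V=\{t\}$ and $E=\emptyset$; the claim $F^t_t(G,b)=b(t)$ then follows by a short argument from Lemma~\ref{lemma:pr:one_arrow} (applied with $t$ playing the role of the source and a fresh node as target), Target Proxy, and Lemma~\ref{lemma:all:t_loop} (Target Self-Loop). So assume $v\neq t$ and put $k=|\Gamma^+_v(G)|$; since $v$ reaches $t$ in $G$, we have $k\ge 1$.

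First I would set $x:=F^t_v(G,b)$ and, using Lemma~\ref{lemma:pr:k_arrow_adjoin} with a fresh node $v'\notin V$ of weight $x$, with $k$ edges $(v',t)$ and no edges $(t,v')$, form $G_1=(V\cup\{v'\},\,E+k\cdot\{(v',t)\})$ with weights $b_1=b+x\cdot\mathds{1}^{v'}$. The lemma gives $F^t_{v'}(G_1,b_1)=x=F^t_v(G,b)=F^t_v(G_1,b_1)$ and preserves all other centralities, so $v$ and $v'$ have equal centrality and equal out-degree $k$ in $(G_1,b_1)$.

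Next I would apply Edge Swap $k$ times, matching the outgoing edges $(v,a_1),\dots,(v,a_k)$ of $v$ (counted with multiplicity) with the $k$ copies of $(v',t)$, replacing $(v,a_i),(v',t)$ by $(v,t),(v',a_i)$ one pair at a time. After each swap the nodes $v,v'$ still have equal centrality (Edge Swap changes nothing) and equal out-degree, so the axiom remains applicable; and every intermediate graph stays in $\mathcal{G}_t$, because $v$ and $v'$ have no incoming edges, so rerouting their outgoing edges cannot change whether any other node reaches $t$, while $v$ and $v'$ themselves always retain a path to $t$. Let $G_2$ be the final graph: there $v$ has exactly $k$ outgoing edges, all to $t$, no incoming edges, and weight $b(v)$, whereas $v'$ now carries the original outgoing edges of $v$. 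Since Edge Swap preserves centralities, $F^t_v(G_2,b_1)=x$.

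Finally I would observe that $G_2$ is itself of the form produced by Lemma~\ref{lemma:pr:k_arrow_adjoin}: with $\widehat G:=G_2-v$ (which lies in $\mathcal{G}_t$, again because $v$ has no incoming edges in $G_2$) we have $G_2=(\widehat V\cup\{v\},\,\widehat E+k\cdot\{(v,t)\})$ and $b_1=\widehat b+b(v)\cdot\mathds{1}^v$, so the lemma gives $F^t_v(G_2,b_1)=b(v)$; comparing with $F^t_v(G_2,b_1)=x=F^t_v(G,b)$ yields $F^t_v(G,b)=b(v)$. The step I expect to require the most care is the middle one: the trick of assigning the auxiliary node $v'$ the \emph{a priori unknown} weight $x=F^t_v(G,b)$ so that Edge Swap becomes legal, together with the verification that the whole sequence of swaps never leaves $\mathcal{G}_t$ --- both of which hinge on $v$ and $v'$ having no incoming edges throughout.
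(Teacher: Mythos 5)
Your proposal is correct and follows essentially the same route as the paper's proof: handle $v=t$ via the 1-Arrow Graph, Target Proxy and Target Self-Loop lemmas, and for $v\neq t$ adjoin an auxiliary node of weight $x=F^t_v(G,b)$ via the k-Arrow Adjoin lemma, use Edge Swap to exchange its outgoing edges with those of $v$, and then read off $F^t_v=b(v)$ by applying k-Arrow Adjoin in reverse. Your version is in fact slightly more careful than the paper's, since you perform the swaps one pair at a time and explicitly verify that every intermediate graph stays in $\mathcal{G}_t$ and that the Edge Swap preconditions persist.
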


\begin{proof}
First, let us assume $v = t$, that is $G = (\{t\}, \{\})$, $b = x \cdot \mathds{1}^t$.
Consider graph $G_1 = (\{t, r\}, \{(t, r)\})$.
From Lemma~\ref{lemma:pr:one_arrow} (1-Arrow Graph) we know that:
    \[F^r_t(G_1, x \cdot \mathds{1}^t) = x.\]
We merge $r$ into $t$ obtaining $G_2 = (\{t\}, \{(t, t)\})$.
From Target Proxy we know that:
    \[F^r_t(G_1, x \cdot \mathds{1}^t) = F^t_t(G_2, x \cdot \mathds{1}^t).\]
But from Lemma~\ref{lemma:all:t_loop} (Target Self-Loop) we also know that:         \[F^t_t(G_2, x \cdot \mathds{1}^t) = F^t_t((\{t\}, \{\}), x \cdot \mathds{1}^t).\]
Combining above equations results in:
\begin{equation} \label{eq:pr:siphon:isolated_t}
    F^t_t((\{t\}, \{\}), x \cdot \mathds{1}^t) = x.
\end{equation}
This concludes the proof of the case $v = t$.

Let us now assume $v \neq t$.
Let $G = (V, E)$. Consider a graph obtained from $G$ by adding a new node $w$ with weight $F^t_v(G, b)$ and out-degree $|\Gamma^+_v(G)|$: $G_1 = (V \cup \{w\}, E + \{|\Gamma^+_v(G)| \cdot (w, t)\})$, $b_1 = b + F^t_v(G, b) \cdot \mathds{1}^w$.
From Lemma~\ref{lemma:pr:k_arrow_adjoin} (k-Arrow Adjoin) we know that:
\begin{align}
    F^t_v(G_1, b_1) & = F^t_v(G, b),\label{eq:pr:siphon:k_Arrow_v}\\
    F^t_w(G_1, b_1) & = b_1(w).\label{eq:pr:siphon:k_Arrow_w}
\end{align}
From~\eqref{eq:pr:siphon:k_Arrow_v}, ~\eqref{eq:pr:siphon:k_Arrow_w} and the definition of $b_1$ we know that $v$ and $w$ have equal centralities in $(G_1, b_1)$.
Swapping all edges outgoing from $v$ and from $w$ we obtain graph $G_2 = (V \cup \{w\}, E - \Gamma^+_v(G) + \{(w, u) : (v, u) \in E\}+ \{|\Gamma^+_v(G)| \cdot (v, t)\})$.
From Edge Swap we know that:
\begin{equation} \label{eq:pr:siphon:ES_v}
    F^t_v(G_1, b_1) = F^t_v(G_2, b_1).
\end{equation}
From Lemma~\ref{lemma:pr:k_arrow_adjoin} (k-Arrow Adjoin) we know that:
\begin{equation}
    F^t_v(G_2, b_1) = b_1(v).\label{eq:pr:siphon:k_Arrow_2_v}
\end{equation}
Combining \eqref{eq:pr:siphon:k_Arrow_v},~\eqref{eq:pr:siphon:ES_v} and~\eqref{eq:pr:siphon:k_Arrow_v} we know that $F^t_v(G_2, b_1) = b(v)$.
Together with~\eqref{eq:pr:siphon:isolated_t} it concludes the proof.
\end{proof}

\begin{lemma}\label{lemma:pr:siphon_split}
(Siphon Split) If $F$ satisfies Locality, Node Redirect, Target Proxy, Edge Swap, Edge Multiplication and Atom $1$-$1$, then for every graph $G = (V, E) \in \mathcal{G}_t$, and node $s \in V - \{t\}$, such that $\Gamma^-_s(G) = \{\}$ and $\Gamma^+_s(G) = \{(s, v_1), ..., (s, v_k)\}$, let:
\begin{itemize}
    \item $V' = V - \{s\} \cup \{s_1, ..., s_k\}$,
    \item $E' = E - \{(s, v_1), ..., (s, v_k)\} + \{(s_1, v_1), ..., (s_k, v_k)\}$,
    \item $b' = b - b(s) \cdot \mathds{1}^s + \frac{b(s)}{k} \cdot \mathds{1}^{s_1} + ... + \frac{b(s)}{k} \cdot \mathds{1}^{s_k}$
\end{itemize} 
then for every node $u \in V - \{s\}$:
    \[F^t_u((V', E'), b') = F^t_u(G, b).\]
\end{lemma}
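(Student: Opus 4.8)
The plan is to realise the passage from $(G,b)$ to $\bigl((V',E'),b'\bigr)$ as a chain of graph operations, each an instance of an axiom or of a lemma proved earlier, so that the centrality of every node of $V-\{s\}$ is preserved at each step.

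\emph{Step 1 (split $s$ into out-twins).} First I would replace $s$ by $k$ fresh out-twins $s_1,\dots,s_k$, each with no incoming edges, each with the full outgoing neighbourhood $\{(s_i,v_1),\dots,(s_i,v_k)\}$, and each carrying weight $b(s)/k$. Concretely I add $k-1$ new out-twins of $s$ one at a time (exactly as in the proof of Lemma~\ref{lemma:all:anonymity}) and distribute $b(s)$ evenly, so that applying the appropriate $R_{s_i\rightarrow s_j}$ repeatedly returns $(G,b)$; by Node Redirect, the centrality of every node of $V-\{s\}$ is unchanged. In the resulting graph $(H_0,c_0)$ — which lies in $\mathcal{G}_t$ since every $s_i$ still points into some $v_j\in V$, from which $t$ is reachable — Lemma~\ref{lemma:pr:siphon} (Siphon) gives $F^t_{s_i}(H_0,c_0)=c_0(s_i)=b(s)/k$ for each $i$, because the $s_i$ have no incoming edges. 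In particular the $s_i$ all have equal centrality and equal out-degree $k$. (If one insists that the surviving copy be named $s_1$ rather than $s$, one renames it at the end via Lemma~\ref{lemma:all:anonymity}.)

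\emph{Step 2 (rearrange the edges by Edge Swap).} Encode the outgoing edges of $s_1,\dots,s_k$ by the matrix $M$ with $M_{i,u}$ the multiplicity of $(s_i,u)$, where $u$ ranges over the distinct endpoints occurring among $v_1,\dots,v_k$. In $H_0$ every row of $M$ is the same; the target matrix $M^{*}$ has $M^{*}_{i,v_i}=k$ and $M^{*}_{i,u}=0$ for $u\neq v_i$. Both $M$ and $M^{*}$ have all row sums equal to $k$ and the same column sums. A single Edge Swap between $s_i$ and $s_{i'}$ — applicable here because the two nodes have equal centrality and equal out-degree, and because the swapped graph again lies in $\mathcal{G}_t$ by the reachability argument above — acts on $M$ exactly as a $2\times 2$ cycle swap, decreasing $M_{i,u}$ and $M_{i',u'}$ by one and increasing $M_{i,u'}$ and $M_{i',u}$ by one, thus preserving all row and column sums. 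Since any two non-negative integer matrices with the same row and column sums are connected by such moves — and one can do it constructively by fixing the rows one at a time, always routing mass through a not-yet-fixed row so as not to disturb the rows already fixed — there is a finite sequence of Edge Swaps carrying $M$ to $M^{*}$. Each swap preserves all centralities, so the $s_i$ remain at centrality $b(s)/k$ and every later swap stays applicable; the end graph $(H_1,c_1)$ has each $s_i$ carrying precisely $k$ parallel copies of $(s_i,v_i)$.

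\emph{Step 3 (collapse the parallel edges by Edge Multiplication).} For each $i$ in turn, the $k$ parallel copies of $(s_i,v_i)$ are exactly $(k-1)\cdot\Gamma^+_{s_i}$ placed on top of the single edge $(s_i,v_i)$, so Edge Multiplication with multiplier $k-1$ says that deleting $k-1$ of these copies changes no centrality. After doing this for every $i$ we reach the graph in which each $s_i$ has the single outgoing edge $(s_i,v_i)$, no incoming edges, weight $b(s)/k$, and every other node and edge is exactly as in $G$ — that is, $\bigl((V',E'),b'\bigr)$. Composing Steps 1--3 yields $F^t_u\bigl((V',E'),b'\bigr)=F^t_u(G,b)$ for every $u\in V-\{s\}$. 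The main obstacle is Step 2: one must verify that the Edge Swap preconditions hold at \emph{every} swap (equal out-degrees are immediate; equal centralities survive because Edge Swap preserves all values and they were equal in $H_0$ by Siphon; $\mathcal{G}_t$-membership persists because the $s_i$ have no incoming edges and always point into $V$), and that the combinatorial claim — reachability of $M^{*}$ from $M$ by cycle swaps inside the set of non-negative integer matrices with prescribed line sums — is handled, which the greedy row-by-row argument does. Steps 1 and 3 are routine bookkeeping with Node Redirect/Siphon and Edge Multiplication.
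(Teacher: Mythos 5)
Your proposal is correct and follows essentially the same route as the paper's proof: split $s$ into $k$ out-twins of weight $b(s)/k$ via Node Redirect, use Siphon to establish that they all have equal centrality $b(s)/k$, rearrange the outgoing edges by Edge Swap so that $s_i$ carries $k$ parallel copies of $(s_i,v_i)$, and collapse these by Edge Multiplication. The only difference is that you justify the Edge Swap phase by a general matrix-connectivity argument, whereas the paper reaches the target configuration directly by swapping $(s_i,v_j)$ with $(s_j,v_i)$ for each pair $i\neq j$; both are valid.
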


\begin{proof}
First, we split $s$ into out-twins $s_1, s_2, ..., s_{k}$, each with weight $\frac{b(s)}{k}$.
Formally, consider graph $G_1 = (V', E - \{(s, v_1), ..., (s, v_k)\} + \{(s_1, v_1), ..., (s_1, v_k)\} + ... + \{(s_k, v_1), ..., (s_k, v_k)\})$.
From Node Redirect we know that for every $u \in V - \{s\}$:
    \[F^t_u(G, b) = F^t_u(G_1, b').\]
Moreover, from Lemma~\ref{lemma:pr:siphon} (Siphon) we know that for every $i \in 1,...,k$:
    \[F^t_{s_i}(G_1, b') = \frac{b(s)}{k}.\]
In particular for every $i, j \in 1,...,k$, $i \neq j$ we know that:
    \[F^t_{s_i}(G_1, b') = F^t_{s_j}(G_1, b')\]
Hence, for every $i, j \in 1,...,k$, $i \neq j$ we swap edges $(s_i, v_j)$ and $(s_j, v_i)$, obtaining graph $G_2 = (V', E - \{(s, v_1), ..., (s, v_k)\} + \{(s_1, v_1), ..., (s_1, v_1)\} + ... + \{(s_k, v_k), ..., (s_k, v_k)\})$.
From Edge Swap we know that for every $u \in V - \{s\}$:
    \[F^t_u(G_1, b') = F^t_u(G_2, b').\]
Finally we divide edges outgoing of every $s_i$ by $k$.
From Edge Multiplication we conclude the proof.
\end{proof}

\begin{lemma}\label{lemma:pr:long_arrow}
(Long-Arrow Graph) If $F$ satisfies Locality, Node Redirect, Target Proxy, Edge Swap, Edge Multiplication and Atom $1$-$1$, then for every $x \in \mathds{R}_{\geq 0}$, graph $G = (\{s, v, t\}, \{(s, v), (v, t)\})$, weights $b = x \cdot \mathds{1}^s$ and node $w \in \{s, v, t\}$:
    \[F^t_w(G, x \cdot \mathds{1}^s) = x.\]
\end{lemma}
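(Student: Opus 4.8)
The plan is to determine the three centralities $F^t_s(G,b)$, $F^t_v(G,b)$, $F^t_t(G,b)$ for $G=(\{s,v,t\},\{(s,v),(v,t)\})$ and $b=x\cdot\mathds{1}^s$ one by one, showing that the axioms force them to equal $x$, $a^*x$ and $(a^*)^2x$, where $a^*$ is the constant from Lemma~\ref{lemma:pr:one_arrow}.

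For the source, $s$ has no incoming edge, so $F^t_s(G,b)=b(s)=x$ directly from Lemma~\ref{lemma:pr:siphon}. For the middle node, the graph satisfies the hypothesis of Target Proxy with $v$ as the proxy of $t$ (indeed $\Gamma^-_t(G)=\{(v,t)\}$, $\Gamma^+_v(G)=\{(v,t)\}$, $b(t)=0$), and $M_{t\to v}(G,b)$ is the two-node graph $(\{s,v\},\{(s,v),(v,v)\})$ with target $v$, which is covered by Lemma~\ref{lemma:pr:almost_one_arrow}. Hence Target Proxy yields $F^t_s(G,b)=x$ once more and $F^t_v(G,b)=a^*x$.

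The one genuinely delicate value is $F^t_t(G,b)$, since Target Proxy, Node Redirect and k-Arrow Adjoin (restricted to non-target nodes) never output the centrality of the current target, and every attempt to shorten the path from the target side loops back to $G$ itself. My plan is to pass to the enlarged graph $\bar G=(\{s,v,t,r\},\{(s,v),(v,t),(t,r)\})\in\mathcal{G}_r$ with weights $b$, apply Target Proxy there with $t$ as the proxy of $r$, and delete the resulting $t$-self-loop via Lemma~\ref{lemma:all:t_loop}, obtaining $F^r_w(\bar G,b)=F^t_w(G,b)$ for $w\in\{s,v,t\}$; in particular $F^r_t(\bar G,b)=F^t_t(G,b)$ and $F^r_v(\bar G,b)=a^*x$. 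Next I would use Lemma~\ref{lemma:pr:k_arrow_adjoin} to attach a fresh source $w$ of weight $a^*x$ with a single edge $(w,r)$: this leaves $F^r_t$ unchanged and makes $F^r_w=a^*x=F^r_v$, so the two nodes $v,w$ have equal centrality and equal out-degree $1$, and Edge Swap lets me replace the edges $(v,t),(w,r)$ by $(v,r),(w,t)$ without changing any centrality, provided the new graph $\bar G_2$ is still in $\mathcal{G}_r$, which a direct check confirms. But $\bar G_2$ is the union, sharing only the target $r$, of the path $s\to v\to r$ and the path $w\to t\to r$, so Locality gives $F^r_t(\bar G_2)=F^r_t(G_b)$ for $G_b=(\{w,t,r\},\{(w,t),(t,r)\})$ with weight $a^*x$ on $w$. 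Finally, Target Proxy applied to $G_b$ (with $t$ as the proxy of $r$) followed by deletion of the $t$-self-loop reduces $G_b$ to the one-arrow graph $(\{w,t\},\{(w,t)\})$ with source weight $a^*x$, whose target centrality is $a^*\cdot a^*x$ by Lemma~\ref{lemma:pr:one_arrow}. Chaining the equalities yields $F^t_t(G,b)=(a^*)^2x$, so all three values are determined.

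I expect the Edge-Swap-then-Locality-split step to be the main obstacle: one has to notice that swapping the long edge $(v,t)$ of the path with a parallel short-circuit edge $(w,r)$ coming from an auxiliary copy of the source disconnects that copy into a separate component which is itself a shorter $s$--$v$--$t$ path, so that the target centrality we are after gets reduced, via one more Target Proxy, to the already-settled two-node base case. The routine-but-necessary checks along the way are that every intermediate graph lies in the appropriate class $\mathcal{G}_\tau$, that the equal-centrality and equal-out-degree hypotheses of Edge Swap hold in $\bar G_1$, and that the weight function decomposes correctly for the Locality step (all auxiliary weight sits on $w$, and the common node $r$ carries weight $0$).
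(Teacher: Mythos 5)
Your proposal is correct and follows essentially the same route as the paper: determine $F^t_s$ and $F^t_v$ via Target Proxy and the (almost) one-arrow lemmas, then for $F^t_t$ append a new target $r$, adjoin an auxiliary source of weight $a^*x$ with one edge to $r$, use Edge Swap to detach the original branch, and read off $F^r_t$ as the middle-node value $(a^*)^2x$ of the split-off long-arrow component. The only cosmetic differences are that you invoke Siphon for the source node and re-derive the final middle-node value explicitly instead of citing the first part of the lemma.
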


\begin{proof}
First we will calculate centralities of nodes $s$ and $v$.
Consider a graph obtained from $G$ by merging $t$ into $v$: $G_1 = (\{s, v\}, \{(s, v), (v, v)\})$.
From Target Proxy we know that:
\begin{align}
    F^t_s(G, x \cdot \mathds{1}^s) & = F^v_s(G_1, x \cdot \mathds{1}^s), \label{eq:pr:long_arrow:s}\\
    F^t_v(G, x \cdot \mathds{1}^s) & = F^v_v(G_1, x \cdot \mathds{1}^s). \label{eq:pr:long_arrow:v}
\end{align}
From Lemma~\ref{lemma:pr:almost_one_arrow} (Almost 1-Arrow Graph) combined with~\eqref{eq:pr:long_arrow:s} we know that:
\begin{equation}\label{eq:pr:long_arrow:s_x}
    F^v_s(G, x \cdot \mathds{1}^s) = x.
\end{equation}
Combined with~\eqref{eq:pr:long_arrow:v} we know that:
\begin{equation}\label{eq:pr:long_arrow:v_x}
    F^v_v(G, x \cdot \mathds{1}^s) = x.
\end{equation}

Consider graph obtained from $G$ by adding $t$ self-loop and a new target node $r$: $G_2 = (\{s, v, t, r\}, \{(s, v), (v, t), (t, r)\})$. 
From Lemma~\ref{lemma:all:t_loop} (Target Self-Loop) and Target Proxy we know that:
\begin{align*}
    F^t_t(G, x \cdot \mathds{1}^s) & = F^r_t(G_2, x \cdot \mathds{1}^s)\\
    F^t_v(G, x \cdot \mathds{1}^s) & = F^r_v(G_2, x \cdot \mathds{1}^s)
\end{align*}
Now, adding a new node $v'$ with an edge to $r$ and with weight $x$ we obtain graph $G_3 = (\{s, v, v', t, r\}, \{(s, v), (v, t), (v', r), (t, r)\})$, $b_3 = x \cdot \mathds{1}^s + x \cdot \mathds{1}^{v'}$.
From Lemma~\ref{lemma:pr:k_arrow_adjoin} (k-Arrow Adjoin) we know that:
\begin{equation*}
    F^r_t(G_2, x \cdot \mathds{1}^s) = F^r_t(G_3, b_3)
\end{equation*}
and that nodes $v$ and $v'$ have equal centralities $x$ in $(G_3, b_3)$.
Consider a graph obtained by swapping edges $(v, t), (v', r)$: $G_4 = (\{s, v, v', t, r\}, \{(s, v), (v, r), (v', t), (t, r)\})$.
From Edge Swap we know that:
\begin{equation*}
    F^r_t(G_3, b_3) = F^r_t(G_4, b_3).
\end{equation*}
Now, consider graph $G_5 = (\{v', t, r\}, \{(v', t), (t, r)\})$.
From Locality we know that:
\begin{equation*}
    F^r_t(G_4, b_3) = F^r_t(G_5, x \cdot \mathds{1}^{v'}).
\end{equation*}

Node $t$ in graph $G_5$ is the middle node of Long-Arrow Graph. From~\eqref{eq:pr:long_arrow:v_x} we know that:
\begin{equation*}
    F^r_t(G_5, x \cdot \mathds{1}^{v'}) = x.
\end{equation*}
Combining above equations we obtained that:
\begin{equation}
    F^t_t(G, x \cdot \mathds{1}^s) = x.
\end{equation}
\end{proof}

\begin{lemma}\label{lemma:pr:dag}
(DAG) If $F$ satisfies Locality, Node Redirect, Target Proxy, Edge Swap, Edge Multiplication and Atom $1$-$1$, then for every acyclic $G \in \mathcal{G}_t$, weights $b$ and node $v$ centrality $F^t_v(G, b)$ is determined.
\end{lemma}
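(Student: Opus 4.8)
The plan is to prove Lemma~\ref{lemma:pr:dag} by induction on the number of edges $|E|$ of the acyclic graph $G=(V,E)\in\mathcal{G}_t$. I fix the weights $b$ and an arbitrary query node $w$; since every graph operation used below relates the centrality of each node of the smaller instance to centralities of nodes of $G$ in a fixed, already-known manner (for a merge, the centrality of the merged node is the sum of the two; for the fresh copies of a source its centrality is just its weight by Lemma~\ref{lemma:pr:siphon}), it suffices to reduce $G$ to an instance with strictly fewer edges whose centralities are determined by the inductive hypothesis. A first useful observation is that, because $G$ is acyclic and every node reaches $t$, the target $t$ has no outgoing edges at all.

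For the base of the induction, if $|E|\le 2$ I enumerate the possibilities up to renaming (justified by Lemma~\ref{lemma:all:anonymity}): the isolated target, a single arrow $s\to t$, two arrows into $t$, a doubled arrow $s\to t$, and the two-path $s\to v\to t$. All of their centralities follow from Lemma~\ref{lemma:pr:siphon}, Lemma~\ref{lemma:pr:one_arrow}, Lemma~\ref{lemma:pr:long_arrow}, Locality and Edge Multiplication. If $|E|\ge 3$ but every edge ends at $t$, then Locality decomposes $G$ into single-source stars into $t$, and Edge Multiplication together with Lemma~\ref{lemma:pr:one_arrow} determines everything.

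For the inductive step, assume $|E|\ge 3$ and some node other than $t$ has an incoming edge. Using a topological order I pick $v\neq t$ of least rank among nodes with at least one incoming edge, so every in-neighbour of $v$ is a source (has no incoming edge). I then carry out the construction of Figure~\ref{fig:PR_DAG_induction}: (i) by Lemma~\ref{lemma:pr:siphon_split} split every source so that each source has out-degree one ($G_1$); (ii) the sources pointing to $v$ are now out-twins, so by Node Redirect merge them into one source $s$ ($G_2$); (iii) by the new-node technique of Lemma~\ref{lemma:pr:k_arrow_adjoin} adjoin a fresh node $v'$ with $|\Gamma^+_v(G)|$ edges to $t$ and weight equal to $F^t_v$, which makes $v$ and $v'$ share the same centrality and out-degree ($G_3$); (iv) apply Edge Swap to exchange the outgoing edges of $v$ and $v'$, after which $v$ points only to $t$ and has the single predecessor $s$ ($G_4$); (v) by Locality split off the component $s\to v\to t$, normalising its edge multiplicities with Edge Multiplication into the Long-Arrow graph $G_5'$ (determined by Lemma~\ref{lemma:pr:long_arrow}), leaving a remainder $G_5$ that is still acyclic, lies in $\mathcal{G}_t$, and has strictly fewer edges than $G$. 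By the inductive hypothesis all centralities in $G_5$ are determined; chaining the equalities of steps (i)--(v) backwards, and noting that the weight $F^t_v$ used to define $v'$ equals $F^t_v(G_5')$ and is hence determined first, every centrality of $G$ -- in particular $F^t_w(G,b)$ -- is determined.

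The main obstacle is the bookkeeping of the inductive step. One must verify that $v$ and $v'$ really meet the hypotheses of Edge Swap (equal centrality, which is precisely what Lemma~\ref{lemma:pr:k_arrow_adjoin} supplies, and equal out-degree), that each intermediate graph remains in $\mathcal{G}_t$ -- the swap in step (iv) being the delicate point, since redirecting $v$'s edges onto $t$ and $v'$'s edges elsewhere must not cut anyone off from $t$ -- and that after the Edge Multiplication normalisation the leftover graph $G_5$ genuinely has fewer edges than $G$. There is also the apparent circularity of using $F^t_v(G,b)$ as the weight of $v'$ before it is known; this dissolves once one observes that after the split $F^t_v$ is read off from the independent, already-determined Long-Arrow graph $G_5'$, so it is pinned down before the inductive hypothesis is invoked on $G_5$.
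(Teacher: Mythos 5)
Your proof is correct and follows essentially the same route as the paper's: induction on $|E|$ with the same base cases, then Siphon Split on the sources, Node Redirect to merge the predecessors of a topologically minimal non-source $v$, the k-Arrow Adjoin/Edge Swap trick to reroute $v$'s outgoing edges to $t$, and Locality plus Edge Multiplication to split off the Long-Arrow component whose known centralities pin down the provisional weight of $v'$ before the inductive hypothesis is applied to the smaller remainder. The points you flag as delicate (equal out-degrees for Edge Swap, membership in $\mathcal{G}_t$, and the apparent circularity of the weight of $v'$) are exactly the ones the paper handles, and your resolutions match.
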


\begin{proof}
Let $G = (V, E)$. We will do an induction on $|E|$.
\begin{itemize}
    \item If $|E| = 0$, then the thesis follows from Lemma~\ref{lemma:pr:siphon} (Siphon).
    \item If $|E| = 1$, then $G = (\{s, t\}, \{(s, t)\})$, $b = x \cdot \mathds{1}^s + y \cdot \mathds{1}^t$.
    From the Locality we know that $F^t_s(G, b) = F^t_s(G, x \cdot \mathds{1}^s)$ and $F^t_t(G, b) = F^t_t(G, x \cdot \mathds{1}^s) + F^t_t((\{t\}, \{\}), y \cdot \mathds{1}^t)$.
    Centralities in $(G, x \cdot \mathds{1}^s)$ are determined from Lemma~\ref{lemma:pr:one_arrow} (1-Arrow Graph) and centrality of $t$ in $((\{t\}, \{\}), y \cdot \mathds{1}^t)$ is determined from Lemma~\ref{lemma:pr:siphon} (Siphon).
    \item If $|E| = 2$ and one edge does not end in $t$, then $G = (\{s, v, t\}, \{(s, v), (v, t)\})$, $b = x \cdot \mathds{1}^s + y \cdot \mathds{1}^v + z \cdot \mathds{1}^t$.
    We split $v$ into $v$ with incoming edge and zero weight and $v'$ with no incoming edges and weight $y$.
    We also separate graphs $G_1 = (\{s, v, t\}, \{(s, v), (v, t)\})$, $b_1 = x \cdot \mathds{1}^s$ and  $G_2 = (\{v', t\}, \{(v', t)\})$, $b_2 = y \cdot \mathds{1}^v + z \cdot \mathds{1}^t$.
    From Node Redirect and Locality we know that $F^t_s(G, b) = F^t_s(G_1, b_1)$, $F^t_v(G, b) = F^t_v(G_1, b_1) + F^t_v(G_2, b_2)$ and $F^t_t(G, b) = F^t_t(G_1, b_1) + F^t_t(G_2, b_2)$.
    Centrality values in $(G_1, b_1)$ are known from Lemma~\ref{lemma:pr:long_arrow} (Long-Arrow) and in $(G_2, b_2)$ are known from the previous case. 
    \item If $|E| \geq 2$ and all edges end in $t$, then $G = (\{s_1, ..., s_{n-1}, t\}, \{k_1 \cdot (s_1, t), ..., k_{n-1} \cdot (s_{n-1}, t)\})$, for $n = |V|$.
    We make each of the edges unique and from Edge Multiplication we know for $G' = (\{s_1, ..., s_{n-1}, t\}, \{(s_1, t), ..., (s_{n-1}, t)\})$ and every $i \in 1, ..., n-1$ that $F^t_{s_i}(G, b) = F^t_{s_i}(G', b)$ and $F^t_{t}(G, b) = F^t_{t}(G', b)$.
    We separate each of the nodes $s_1, ..., s_{n-1}$ into an independent graph and from Locality for $G_i = (\{s_i, t\}, \{(s_i, t)\})$, $b_i = b(s) \cdot \mathds{1}^{s_i} + \frac{b(t)}{n-1} \cdot \mathds{1}^{t}$ we have $F^t_{s_i}(G', b) = F^t_{s_i}(G_i, b_i)$ and $F^t_{t}(G', b) = \sum_{i \in 1, ..., n-1} F^t_{t}(G_i, b_i)$.
    Centralities in $(G_i, b_i)$ are determined from the second case.
\end{itemize}

Let us discuss the inductive step.
Let $s_1, s_2, ... s_k$ be all the nodes without incoming edges.
For every $i \in 1, ..., k$ from Lemma~\ref{lemma:pr:siphon} (Siphon) we know that:
\begin{equation} \label{eq:pr:dag:siphon}
    F^t_{s_i}(G, b) = b(s_i).
\end{equation}
Each $s_i$ with outgoing edges $(s_i, v_{i,1}), ..., (s_i, v_{i,l_i})$ we split it into a series of $l_i$ nodes $V_1^{i} =  \{s_{i, 1}, ..., s_{i, l_1}\}$ with only one outgoing edge each, $E_1^{i} = \{(s_{i, 1}, v_{i,1}), ..., (s_{i, l_1}, v_{i, l_1})\}$, and equal weights, $b_1^i = \frac{b(s_i)}{l_i} \cdot \mathds{1}^{s_{i, 1}} + ... + \frac{b(s_i)}{l_i} \cdot \mathds{1}^{s_{i, l_i}}$.
From Lemma~\ref{lemma:pr:siphon_split} (Siphon Split) for graph $G_1 = (V_1, E_1) = (V - \{s_1, ..., s_k\} \cup V_1^{1} \cup ... \cup V_1^{k}, E - \Gamma^+_{s_1} -... - \Gamma^+_{s_k} + E_1^{1} + ... + E_1^{k}\})$, weights $b_1 = b - b(s_1) \cdot \mathds{1}^{s_1} -... - b(s_k) \cdot \mathds{1}^{s_k} + b_1^1 + ... + b_1^k$ and for every node $u \in V - \{s_1, ..., s_k\}$ we have:
\begin{align} 
    F^t_{u}(G, b) & = F^t_{u}(G_1, b_1),\label{eq:pr:dag:siphon_split}\\
    |E| & = |E_1|.\label{eq:pr:dag:siphon_split_E}
\end{align}
In the obtained graph $G_1$ every node without incoming edges has exactly one outgoing edge.

Let $v \neq t$ be the topologically greatest node that has some incoming edges $\{(s_1, v), ..., (s_k, v)\}$.
Because it is topologically greatest, all its predecessors have no incoming edges.
Hence, each of them has only one outgoing edge, to $v$.
We now merge them into one node, $s$, and from Node Redirect for $G_2 = (V_2, E_2) = (V_1 - \{s_1, ..., s_k\} \cup \{s\}, E_1 - \{(s_1, v), ..., (s_k, v)\} + \{(s, v)\})$, $b_2 = b_1 - b(s_1) \cdot \mathds{1}^{s_1} -... - b(s_k) \cdot \mathds{1}^{s_k} + (b(s_1) + ... + b(s_k)) \cdot \mathds{1}^{s}$ and every node $u \in V_1 - \{s_1, ..., s_k\}$ we know that:
\begin{align} 
    F^t_{u}(G_1, b_1) & = F^t_{u}(G_2, b_2),\label{eq:pr:dag:NR}\\
    |E_1| & \geq |E_2|.\label{eq:pr:dag:NR_E}
\end{align}

Now let us add new node $v'$, obtaining graph $G_3 = (V_3, E_3) = (V_2 \cup \{v'\}, E_2 + \{\Gamma^+_v(G_2) \cdot (v', t)\})$, $b_3 = b_2 + F^t_{v}(G_2, b_2) \cdot \mathds{1}^{v'}$ (for the yet unknown $F^t_{v}(G_2, b_2)$).
From Lemma~\ref{lemma:pr:k_arrow_adjoin} (k-Arrow Adjoin) for every node $u \in V_2 - \{t\}$ we have $F^t_{u}(G_2, b_2) = F^t_{u}(G_3, b_3)$,  and $F^t_{t}(G_2, b_2)  = F^t_{t}(G_3, b_3) - F^t_{v}(G_2, b_2)$ is determined if $F^t_{v}(G_2, b_2)$ is determined.
And $F^t_{v'}(G_3, b_3) = F^t_{v}(G_2, b_2)$, that is nodes $v$ and $v'$ have equal centralities in $(G_3, b_3)$.
As for the number of edges, $|E_2| + \Gamma^+_v(G_2) = |E_3|$.

We exchange all the edges outgoing from $v$ with all the edges outgoing from $v'$, obtaining graph $G_4 = (V_3, E_4) = (V_3, E_3 - \Gamma^+_{v}(G_3) - \Gamma^+_{v'}(G_3) + \{(v', w) : (v, w) \in E_3\} +  \{\Gamma^+_v(G_3) \cdot (v, t)\})$.
From Edge Swap we know that for every node $u \in V_3$ we have 
$F^t_{u}(G_3, b_3) = F^t_{u}(G_4, b_3)$. And $|E_3| = |E_4|$.

Multiple edges from $v$ to $t$ are unified, obtaining graph $G_5 = (V_3, E_5) = (V_3, E_4 - \{(\Gamma^+_v(G_3) - 1) \cdot (v, t)\})$.
From Edge Multiplication for every node $u \in V_3$ we know that $F^t_{u}(G_4, b_3) = F^t_{u}(G_5, b_3)$. The number of edges decreased by $\Gamma^+_v(G_3) - 1$, that is $|E_4| - (\Gamma^+_v(G_3) - 1) = |E_5|$.

In total, we obtained for every node $u \in V_2 - \{t\}$ that:
\begin{align} 
    & F^t_{u}(G_2, b_2) = F^t_{u}(G_5, b_3),\label{eq:pr:dag:ES_u}\\
    & F^t_{t}(G_2, b_2) = F^t_{t}(G_5, b_3) - F^t_{v}(G_5, b_3),\label{eq:pr:dag:ES_t}\\
    & |E_2| + 1 = |E_5|.\label{eq:pr:dag:ES_E}
\end{align}

Finally, we separate graphs $G_6 = (\{s, v, t\}, \{(s, v), (v, t)\})$, $b_6 = b_3(s) \cdot \mathds{1}^{s} + b_3(v) \cdot \mathds{1}^{v}$ and $G_7 = (V_5 - \{s, v\}, E_5 - \{(s, v), (v, t\})$, $b_7 = b_3 - b_3(s) \cdot \mathds{1}^{s} - b_3(v) \cdot \mathds{1}^{v}$.
From Locality for every node $u \in V_5 - \{s, v, t\}$ we know that:
\begin{align} 
    & F^t_{u}(G_5, b_3) = F^t_{u}(G_7, b_7),\label{eq:pr:dag:LOC_u}\\
    & F^t_{v}(G_5, b_3) = F^t_{v}(G_6, b_6),\label{eq:pr:dag:LOC_v}\\
    & F^t_{t}(G_5, b_3) = F^t_{v}(G_6, b_6) + F^t_{u}(G_7, b_7).\label{eq:pr:dag:LOC_t}\\
    & |E_5| - 2 = |E_7|.\label{eq:pr:dag:LOC_E}
\end{align} 
The centralities in $(G_6, b_6)$ are known from the case $E = 2$.
This combined with \eqref{eq:pr:dag:ES_u} for $v$ and~\eqref{eq:pr:dag:LOC_v} causes the weight $b_7(v') = b_3(v') = F^t_{v}(G_2, b_2)$ to stop being unknown.
Now centralities in $(G_7, b_7)$ are also determined, from the inductive assumption, because from \eqref{eq:pr:dag:siphon_split_E}, \eqref{eq:pr:dag:NR_E}, \eqref{eq:pr:dag:ES_E} and~\eqref{eq:pr:dag:LOC_E} we have $|E_7| = |E_2| - 1 \leq |E| - 1$.

This combined with \eqref{eq:pr:dag:siphon_split}, \eqref{eq:pr:dag:NR}, \eqref{eq:pr:dag:ES_u} and~\eqref{eq:pr:dag:LOC_u} gives for $u \in V - \{v, t, s_1, ..., s_k\}$ that $F^t_{u}(G, b)$ is determined.
Combined with \eqref{eq:pr:dag:siphon_split}, \eqref{eq:pr:dag:NR}, \eqref{eq:pr:dag:ES_u} and~\eqref{eq:pr:dag:LOC_v} gives that $F^t_{v}(G, b)$ is determined.
And combined with \eqref{eq:pr:dag:siphon_split}, \eqref{eq:pr:dag:NR}, \eqref{eq:pr:dag:ES_t}, \eqref{eq:pr:dag:LOC_v} and~\eqref{eq:pr:dag:LOC_t} gives that $F^t_{t}(G, b)$ is determined.
Together with~\eqref{eq:pr:dag:siphon}, we obtained that centrality of every node in acyclic $(G, b)$ is determined.
\end{proof}

\begin{lemma}\label{lemma:pr:no_target_outlet}
\label{No Target Outlet}
(No Target Outlet) If $F$ satisfies Locality, Node Redirect, Target Proxy, Edge Swap, Edge Multiplication and Atom $1$-$1$, then for every graph $G = (V, E) \in \mathcal{G}_t$ such that $(t, v) \in E$ and node $u$:
    \[F^{t}_u((V, E - \{(t, v )\}), b) = F^{t}_u(G, b).\] 
\end{lemma}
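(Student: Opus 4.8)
The plan is to isolate the edge $(t,v)$ onto a freshly created node that touches the rest of the graph only through $t$, and then discard that node using Lemma~\ref{lemma:pr:k_arrow_adjoin} (k-Arrow Adjoin). The recurring structural remark is that deleting an outgoing edge of $t$ never disconnects a node from $t$ — a shortest path into $t$ never traverses an edge leaving $t$ — so every graph we write down stays in $\mathcal{G}_t$; in particular $(V, E - \{(t,v)\}) \in \mathcal{G}_t$, so the statement makes sense.

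If $v = t$ this is exactly Lemma~\ref{lemma:all:t_loop} (Target Self-Loop), so assume $v \neq t$; then $v$ has $\ell := |\Gamma^+_v(G)| \geq 1$ outgoing edges. First I would apply Node Redirect ``backwards'': split $v$ into $v$ (retaining its weight and all incoming edges except one copy of $(t,v)$) and a fresh out-twin $v'$ of weight $0$ whose unique incoming edge is the relocated copy $(t,v')$, obtaining $\hat G$ with $R_{v' \to v}(\hat G, b) = (G, b)$. Node Redirect then gives $F^t_w(\hat G, b) = F^t_w(G, b)$ for $w \neq v$ and $F^t_v(G,b) = F^t_v(\hat G,b) + F^t_{v'}(\hat G,b)$; the same split applied to $(V, E - \{(t,v)\})$ — where $v'$ ends up with no incoming edges — gives the analogous identities, with $F^t_{v'}(\hat G - \{(t,v')\}, b) = 0$ by Lemma~\ref{lemma:pr:siphon} (Siphon). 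Hence it is enough to prove that $F^t_w(\hat G, b) = F^t_w(\hat G - \{(t,v')\}, b)$ for all $w$ and that $F^t_{v'}(\hat G, b) = 0$.

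Next I would reshape $v'$ so that it points only at $t$: adjoin (k-Arrow Adjoin) a fresh node $v''$ with $\ell$ edges to $t$, no edges from $t$, and weight $F^t_{v'}(\hat G,b)$; since $v'$ and $v''$ then have equal centrality and equal out-degree, repeated Edge Swap exchanges the outgoing edges of $v'$ with those of $v''$, and Edge Multiplication collapses the resulting $\ell$ parallel edges $(v',t)$ into one. In the final graph $G_3$ the node $v'$ has the single incoming edge $(t,v')$ and the single outgoing edge $(v',t)$, i.e. it is adjoined to the rest of $G_3$ in the sense of k-Arrow Adjoin with $k=l=1$ and weight $0$; deleting $v'$ and its two incident edges therefore changes no centrality, and $F^t_{v'}(G_3) = 0$. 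Tracing back (k-Arrow Adjoin, Edge Swap and Edge Multiplication all preserve $F^t_w$ for $w \ne t$ and alter $F^t_t$ only by $a^{*}\cdot 0 = 0$) gives $F^t_{v'}(\hat G,b)=0$, so $v''$ in fact had weight $0$. Running the same sequence of operations on $\hat G - \{(t,v')\}$ — legitimate since there too $F^t_{v'}=0$, so $v''$ again has weight $0$ — produces a graph that is $G_3$ with the single edge $(t,v')$ removed; as $G_3$ and this graph are the same graph with $v'$ adjoined in the $l=1$ and $l=0$ versions of k-Arrow Adjoin, they have equal centralities at every node, and tracing back yields $F^t_w(\hat G,b) = F^t_w(\hat G - \{(t,v')\}, b)$ for every $w$. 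Together with the two splits this proves the lemma.

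The main obstacle is the verification hidden in the third paragraph: one must confirm at each single Edge Swap that $v'$ and $v''$ still have equal centrality and that the current graph is still in $\mathcal{G}_t$ (which holds because the relevant nodes gradually acquire direct edges to $t$), and — crucially — track that the edge $(t,v')$ rides passively through every transformation, so that the two final graphs coincide up to exactly that one edge. Checking $\mathcal{G}_t$-membership of all the auxiliary graphs ($\hat G$, $\hat G - \{(t,v')\}$, the graph with $v''$ adjoined, $G_3$, and $G_3$ with $v'$ removed) is routine given the structural remark above.
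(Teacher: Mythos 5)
Your proposal is correct and takes essentially the same route as the paper's proof: isolate the edge $(t,v)$ on a zero-weight out-twin $v'$ via Node Redirect, adjoin a clone $v''$ whose weight is the (unknown) centrality of $v'$, exchange their outgoing edges by Edge Swap, and use Lemma~\ref{lemma:pr:k_arrow_adjoin} to conclude $F^t_{v'}=0$ and to remove the offending edge. The only deviations are minor bookkeeping: the paper finishes by merging $v''$ back into $v$ with Node Redirect, while you run the construction in parallel on $\hat G - \{(t,v')\}$ and compare the two end graphs through k-Arrow Adjoin over the common subgraph (and your Edge Multiplication step collapsing the $\ell$ parallel edges $(v',t)$ is unnecessary, since k-Arrow Adjoin already handles $k$ parallel edges to $t$).
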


\begin{proof}
If $v = t$, then the thesis follows from Lemma~\ref{lemma:all:t_loop} (Target Self-Loop).
Let us now assume $v \neq t$.

Let us split $v$ into out-twins $v'$ with zero weight and an incoming edge from $t$ and $v$ with the original weight and the rest of incoming edges, obtaining $G_1 = (V_1, E_1) = (V \cup \{v'\}, E - \{(t, v)\} + \{(t,  v')\} + \{(v', w) : (v, w) \in E\})$.
From Node Redirect for every $u \in V - \{v\}$ we know that:
\begin{align} 
    F^{t}_u(G, b) & = F^{t}_u(G_1, b),\label{eq:pr:no_target_outlet:NR_u}\\
    F^{t}_v(G, b) & = F^{t}_v(G_1, b) + F^{t}_{v'}(G_1, b).\label{eq:pr:no_target_outlet:NR_v}
\end{align}
Let $x = F^{t}_{v'}(G_1, b)$.

We add node $v''$ that has the same out-degree $k = \Gamma^+_{v'}(G_1)$ as $v'$ and weight $x$, obtaining graph $G_2 = (V_2, E_2) = (V_1 \cup \{v''\}, E + k \cdot \{(v'', t)\})$, $b_2 = b + x \cdot \mathds{1}^{v''}$.
From Lemma~\ref{lemma:pr:k_arrow_adjoin} (k-Arrow Adjoin) for every $u \in V_1 - \{t\}$ we know that $F^{t}_u(G_1, b) = F^{t}_u(G_2, b_2)$, $F^{t}_t(G_1, b) = F^{t}_t(G_2, b_2) - x$.
We also know that $F^{t}_{v''}(G_2, b_2) = x = F^{t}_{v'}(G_2, b_2)$, that is nodes $v'$ and $v''$ have the same centralities in $(G_2, b_2)$. 
Now we swap edges outgoing of $v'$ and of $v''$, obtaining graph $G_3 = (V_2, E_3) = (V_2, E_2 - \Gamma^+_{v'}(G_2) + \{(v'', w) : (v', w) \in E_2\} - \Gamma^+_{v''}(G_2) + k \cdot \{(v', t)\})$.
From Edge Swap for every $u \in V_2$ we know that $F^{t}_u(G_2, b_2) = F^{t}_u(G_3, b_2)$.
In total, for every $u \in V_1 - \{t\}$ we know that:
\begin{align} 
    & F^{t}_u(G_1, b) = F^{t}_u(G_3, b_2),\label{eq:pr:no_target_outlet:ES_u}\\
    & F^{t}_t(G_1, b) = F^{t}_t(G_3, b_2) - x,\label{eq:pr:no_target_outlet:ES_t}\\
    & F^{t}_{v''}(G_3, b_2) = x.\label{eq:pr:no_target_outlet:ES_v''}
\end{align}

Let us now delete node $v'$, obtaining graph $G_4 = (V_4, E_4) = (V_2 - \{v'\}, E_3 - \{(t,  v')\} - k \cdot \{(v', t)\})$.
From Lemma~\ref{lemma:pr:k_arrow_adjoin} (k-Arrow Adjoin) for $u \in V_4 - \{t\}$ we know that:
\begin{align} 
    F^{t}_u(G_3, b_2) & = F^{t}_u(G_4, b_2),\label{eq:pr:no_target_outlet:k_Arrow_u}\\
    F^{t}_t(G_3, b_2) & = F^{t}_t(G_4, b_2) + b_2(v') = F^{t}_t(G_4, b_2),\label{eq:pr:no_target_outlet:k_Arrow_t}\\
    F^{t}_{v'}(G_3, b_2) & = b_2(v') = 0.\label{eq:pr:no_target_outlet:k_Arrow_v'}
\end{align}
But now from \eqref{eq:pr:no_target_outlet:ES_u} for $v'$ and~\eqref{eq:pr:no_target_outlet:k_Arrow_v'} we know that $x = F^{t}_{v'}(G_3, b_2) = 0$.

Let us use this fact in previous equations.
From \eqref{eq:pr:no_target_outlet:NR_v},~\eqref{eq:pr:no_target_outlet:ES_u} for $v$ and~\eqref{eq:pr:no_target_outlet:k_Arrow_u} for $v$ we have $F^{t}_v(G, b) = F^{t}_v(G_4, b_2)$.
As for $t$, from \eqref{eq:pr:no_target_outlet:NR_u},~\eqref{eq:pr:no_target_outlet:ES_t} and~\eqref{eq:pr:no_target_outlet:k_Arrow_t} we obtained that $F^{t}_t(G, b) = F^{t}_t(G_4, b_2)$.
Moreover, from \eqref{eq:pr:no_target_outlet:NR_u},~\eqref{eq:pr:no_target_outlet:ES_u} and~\eqref{eq:pr:no_target_outlet:k_Arrow_u} for $u \in V - \{v, t\}$ we know that $F^{t}_u(G, b) = F^{t}_u(G_4, b_2)$.
Hence, for every node $u \in V$ we obtained that:
\begin{equation} \label{eq:pr:no_target_outlet:x_0_u}
    F^{t}_u(G, b) = F^{t}_u(G_4, b_2).
\end{equation}

Finally we merge $v''$ into $v$, obtaining $G_5 = (V, E - \{(t, v)\})$.
From \eqref{eq:pr:no_target_outlet:ES_v''} and~\eqref{eq:pr:no_target_outlet:k_Arrow_u} for $v''$ we know that $F^{t}_{v''}(G_4, b_2) = 0$.
Combining this with Node Redirect and~\eqref{eq:pr:no_target_outlet:x_0_u} we conclude the proof.
\end{proof}

\begin{lemma}\label{lemma:pr:determined}
If F satisfies Locality, Node Redirect, Target Proxy, Edge Swap, Edge Multiplication and Atom $1$-$1$ then it is t-Random Walk Betweenness Centrality.
\end{lemma}

\begin{proof}
Fix graph $G = (V, E) \in \mathcal{G}_t$ and weights $b$.
We will prove the thesis by induction on the number of cycles in $G$.
If there are no cycles, then the thesis follows from the fact that $F^t$ is uniquely determined on acyclic graphs (Lemma~\ref{lemma:pr:dag} (DAG)) and the fact that Random Walk Betweenness satisfies the axioms (Lemma~\ref{lemma:rwb_to_axioms}). 

Assume otherwise.
Let us also assume there are no edges from $t$, as they could be deleted and from Lemma~\ref{lemma:pr:no_target_outlet} (No Target Outlet) centralities would not change.
Fix a node $v$, that belongs to at least one cycle (note that $v \neq t$, because $t$ has no outgoing edges) and let $k = |\Gamma^+_{v}(G)|$.
Let $x = RWB^{t}_v(G, b)$.

Let us for now consider another graph $(G', b')$, where $G' = (V', E')$ and:
    \[V' = V \cup \{v'\},\]
    \[E' = E - \Gamma^+_{v}(G) + \{(v', u) : (v, u) \in E\} + k \cdot \{(v, t)\},\]
    \[b' = b + x \cdot \mathds{1}^{v'}.\]

Let us note, that $G' \in \mathcal{G}_t$, that is $t$ is reachable from every node.
If in $G$ from a node $u \in V$ there was a path to $t$ without node $v$, then all of its edges are still present in $G'$.
If every path from $u$ to $t$ passed through $v$, then the path up to this point is present in $G'$, and then $v$ has direct edges to $t$.
Finally, $v'$ has edges to some other nodes, from which we already know $t$ is reachable.

All the cycles that are present in $G'$ are present also in $G$.
Note that in $G'$ there are no cycles involving $v'$, because it has no incoming edges.
There are also no cycles involving $v$, because it has only outgoing edges to $t$, and $t$ has no outgoing edges itself.
The cycles on nodes $V' - \{v, v'\}$ have all their edges present also in $G$.

In turn, cycles through $v$ in $G$ are not present in $G_2$.
This means that the number of cycles in $G'$ is strictly smaller than in $G$.

From the inductive assumption we know that in $G'$ centrality $F^t$ is equal t-Random Walk Betweenness, in particular we have for every $u \in V'$ that:
\begin{equation} \label{eq:pr:determined:induction_u}
    F^{t}_u(G', b') = RWB^{t}_u(G', b').
\end{equation}

Now we notice that $G'$ is obtained from $G$ by firstly adding a new node $v'$ with out-degree $k$ and weight $x$, obtaining intermediary $G_1 = (V_1, E_1) = (V \cup \{v'\}, E + k \cdot \{(v', t)\})$ with weights $b'$ and then by swapping edges outgoing from $v$ and $v'$, obtaining $(G', b')$.
Because $RWB^{t}$ satisfies our axioms, from Lemma~\ref{lemma:pr:k_arrow_adjoin} (k-Arrow Adjoin) we know that $RWB^{t}_v(G, b) = RWB^{t}_v(G_1, b_1)$ and $RWB^{t}_{v'}(G_1, b_1) = x$, that is $v$ and $v'$ have equal $RWB^{ t}$ in $(G_1, b_1)$.
From Edge Swap for every $u \in V_1$ we know that $RWB^{t}_u(G_1, b_1) = RWB^{t}_u(G', b')$.
In total:
\begin{align} 
    & RWB^{t}_{v}(G', b') = RWB^{t}_v(G, b) = x,\label{eq:pr:determined:ES_v}\\
    & RWB^{t}_{v'}(G', b') = x,\label{eq:pr:determined:ES_v'}
\end{align}

From \eqref{eq:pr:determined:induction_u} for $v$ and~\eqref{eq:pr:determined:ES_v} we obtained that $F^{t}_{v}(G', b') = x$.
But from \eqref{eq:pr:determined:induction_u} for $v'$ and~\eqref{eq:pr:determined:ES_v'} we obtained that $F^{t}_{v'}(G', b') = x$, that is $v$ and $v'$ have equal $F^t$ centralities in $(G', b')$.

Now we revert our changes, that is swap edges outgoing from $v$ and $v'$ and delete $v'$.
From Edge Swap for every $u \in V_1$ we know that $F^t_u(G', b') = F^t_u(G_1, b_1)$.
From Lemma~\ref{lemma:pr:k_arrow_adjoin} (k-Arrow Adjoin) for every $u \in V - \{t\}$ we know that $F^t_u(G_1, b_1) = F^t_u(G, b)$ and $F^t_t(G_1, b_1) - x = F^t_t(G, b)$.
In total, for every $u \in V - \{t\}$:
\begin{align} 
    & F^t_u(G, b) = F^t_u(G', b'),\label{eq:pr:determined:revert_k_Arrow_u}\\
    & F^t_t(G, b) = F^t_t(G', b') \!-\! x = F^t_t(G', b') \!-\! F^{t}_{v}(G', b'). \label{eq:pr:determined:revert_k_Arrow_t}
\end{align}

Finally, \eqref{eq:pr:determined:revert_k_Arrow_u} and~\eqref{eq:pr:determined:revert_k_Arrow_t} together with the inductive assumption determine values of $F^t$ on $(G, b)$.
We also know that Random Walk Betweenness satisfies the axioms (Lemma~\ref{lemma:rwb_to_axioms}), which concludes the proof.
\end{proof}
\end{multicols}

\end{document}